\title{Entanglement and Quaternions: The graphical calculus ZQ}
\author{Hector Miller-Bakewell
\institute{University of Oxford, UK}
\email{hector.miller-bakewell@cs.ox.ac.uk}
}
\newtheorem{theorem}{Theorem}[section]
\newtheorem{corollary}[theorem]{Corollary}
\newtheorem{proposition}[theorem]{Proposition}
\newtheorem{lemma}[theorem]{Lemma}
\theoremstyle{definition}
\newtheorem{definition}[theorem]{Definition}
\theoremstyle{remark}
\newtheorem{remark}[theorem]{Remark}
\newtheorem{example}[theorem]{Example}
\newcommand{\vc}[1]{
    \raisebox{-.5\height}{#1}
}
\newcommand{\tikzfig}[1]{\vc{\InputIfFileExists{./figures/\detokenize{#1}.tikz}{}{Missing file!}}}
\tikzset{
	0c/.style={circle, draw, fill, inner sep=1.5pt},
	1c/.style={->, thick, shorten <=2pt, shorten >=2pt},
	1cboth/.style={<->, thick, shorten <=2pt, shorten >=2pt},
	1clong/.style={->, thick},
	1cthin/.style={->, shorten <=4pt, shorten >=4pt},
	1cdot/.style={->, dashed, thick, shorten <=2pt, shorten >=2pt},
	1cinc/.style={right hook->, thick, shorten <=2pt, shorten >=2pt},
	1cincl/.style={left hook->, thick, shorten <=2pt, shorten >=2pt},
	follow/.style={->, >=stealth, very thick, shorten <=3pt, shorten >=3pt, color=magenta},
	2c/.style={double, thick, shorten <=6pt, shorten >=8pt, decoration={markings,mark=at position -6pt with {\arrow[scale=1.75]{>}}}, preaction={decorate}},
	2cdot/.style={double, dashed, thick, shorten <=10pt, shorten >=10pt, decoration={markings,mark=at position -8pt with {\arrow[scale=1.75]{>}}}, preaction={decorate}},
	3c1/.style={thick, double, double distance=3pt, shorten <=9pt, shorten >=11pt},
    	3c2/.style={thick, shorten <=9pt, shorten >=10pt},
	3c3/.style={shorten <=9pt, shorten >=10pt, decoration={markings,mark=at position -8pt with {\arrow[scale=3]{>}}},preaction={decorate}},
	4c1/.style={thick, double, double distance=4pt, shorten <=1pt, shorten >=2.75pt},
	4c2/.style={thick, double, double distance=1pt, shorten <=1pt, shorten >=1.25pt, decoration={markings,mark=at position -.05pt with {\arrow[scale=3,ultra thin]{>}}},preaction={decorate}},
	edge/.style={line width=.8pt, color=black},
	edgedot/.style={densely dotted, line width=.8pt, color=black},
	edgethdot/.style={densely dotted, line width=.4pt, color=gray},
	edgeth/.style={line width=.4pt, color=gray!60},
	edgethin/.style={line width=.8pt, color=gray!60},
	edgedotdark/.style={densely dotted, line width=.8pt, color=gray!80},
	dot/.style={circle, draw=black, line width=.8pt, fill=white, inner sep=1.7pt},
	dotth/.style={circle, draw=gray!60, fill=gray!60, inner sep=1.5pt},
	dotwh/.style={circle, draw=gray!60, line width=.4pt, fill=white, inner sep=1.7pt},
	dotwhite/.style={circle, draw=black, line width=.8pt, fill=white, inner sep=1.8pt},
	dotdark/.style={circle, draw, fill=black, inner sep=1.5pt},
	dotgrey/.style={circle, draw=black, line width=.8pt, fill=gray!60, inner sep=1.8pt},
	trian/.style={regular polygon,regular polygon sides=3,shape border rotate=0,fill=white, line width=.8pt, draw=black, inner sep=1.8pt},
	trianh/.style={regular polygon,regular polygon sides=3,shape border rotate=0,fill=white, draw=gray!60, line width=.4pt, inner sep=1.8pt},
	trib/.style={regular polygon,regular polygon sides=3,shape border rotate=0,fill=black, draw, inner sep=1.5pt},
	tribh/.style={regular polygon,regular polygon sides=3,shape border rotate=0,fill=gray!60, draw=gray!60, inner sep=1.5pt},
	tribco/.style={regular polygon,regular polygon sides=3,shape border rotate=180,fill=black, draw, inner sep=1.5pt},
	cover/.style={circle, draw=gray!10, fill=gray!10, inner sep=3.5pt},
	coverc/.style={circle, draw=gray!80, line width=.4pt, inner sep=3pt},
	coverch/.style={circle, draw=gray!30, line width=.4pt, inner sep=3pt},
	coverb/.style={circle, draw=gray!80, line width=.4pt, fill=gray!10, inner sep=3.5pt},
	every node/.style={font={\scriptsize}},
	every matrix/.append style={nodes={font=\normalsize}}
}
\def\roundingAmount{0.5em}
\def\paddingAmount{1em}
\def\sepAmount{0.2em}
\def\minHeight{1em}
\def\minWidth{1em}
\def\tinyBoxSize{7pt}
\def\tinyCircleSize{7pt}
\definecolor{zx_red}{RGB}{232, 165, 165}
\definecolor{zx_green}{RGB}{216, 248, 216}
\definecolor{nice_green}{RGB}{116, 148, 116}
\def\hyellow{yellow!30}
\tikzstyle{spider}=[rectangle,rounded corners=\roundingAmount,fill=gray!1,draw=Black,
\tikzstyle{rect}=[rectangle,fill=gray!1,draw=Black,inner sep=\sepAmount,minimum width=\minWidth,minimum height=\minHeight,
\tikzstyle{box}=[rect,fill=white, line width=0.4 pt]
\tikzstyle{trap}=[trapezium,trapezium angle=80,fill=gray!1,draw=Black,inner sep=\sepAmount,minimum width=\minWidth,minimum height=\minHeight,
\tikzstyle{triangle}=[regular polygon,regular polygon sides=3,draw=Black,inner sep=\sepAmount,minimum width=\minWidth,minimum height=\minHeight,
\tikzstyle{trianglet}=[triangle,shape border rotate=180]
\tikzstyle{ZH}=[rect]
\tikzstyle{wide ZH}=[rect,minimum width=2.5em]
\tikzstyle{smallH}=[rect, minimum width=\tinyBoxSize, minimum height =\tinyBoxSize]
\tikzstyle{smallZH}=[smallH]
\tikzstyle{yh}=[rect, minimum width=\tinyBoxSize, minimum height =\tinyBoxSize,fill=\hyellow]
\tikzstyle{smallCircle}=[circle,draw=Black, minimum width=\tinyCircleSize, line width=0.2 pt, inner sep=0em]
\tikzstyle{smallZ}=[smallCircle,fill=gray!1]
\tikzstyle{Z}=[smallZ]
\tikzstyle{wide point}=[fill=white,draw,shape=isosceles triangle,shape border rotate=-90,isosceles triangle stretches=true,inner sep=0pt,minimum width=2cm,minimum height=6.12mm,yshift=-0.0mm]
\tikzstyle{bbindex}=[font={\color{blue}\footnotesize}]
\tikzstyle{medium gray box}=[rect, minimum width=4em,fill=gray!30]
\tikzstyle{semilarge box}=[rect, minimum width=4em]
\tikzstyle{wide box}=[rect, minimum width=6em]
\tikzstyle{small gray box}=[rect, fill=gray!30]
\tikzstyle{H}=[rect,fill=\hyellow]
\tikzstyle{h}=[smallH,fill=\hyellow]
\tikzstyle{gn}=[spider,fill=zx_green]
\tikzstyle{rn}=[spider,fill=zx_red]
\tikzstyle{white}=[spider,fill=gray!1]
\tikzstyle{smallWhite}=[smallZ]
\tikzstyle{smallwhite}=[smallZ]
\tikzstyle{grey}=[spider,fill=gray!30]
\tikzstyle{smallGrey}=[smallCircle,fill=gray!30]
\tikzstyle{smallgrey}=[smallGrey]
\tikzstyle{black}=[spider,fill=gray!70]
\tikzstyle{small black}=[smallCircle,fill=gray!70]
\tikzstyle{smallBlack}=[small black]
\tikzstyle{smallblack}=[small black]
\tikzstyle{b}=[black]
\tikzstyle{w}=[white]
\tikzstyle{qn}=[trap]
\tikzstyle{qnt}=[trap,shape border rotate=180]
\tikzstyle{string}=[circle,fill=gray,draw=gray,inner sep=1pt]
\tikzstyle{net}=[rectangle,draw=White,minimum width=1.5em,minimum height=1.5em,fill=white]
\tikzstyle{none}=[inner sep=0pt]
\tikzstyle{crossing}=[circle,minimum width=\tinyBoxSize,draw=black,{path picture={ 
\tikzstyle{bbox}=[rectangle,draw=blue!60!white,minimum width=1em,minimum height=1em]
\tikzstyle{polynomial}=[qn]
\tikzstyle{polynomialT}=[polynomial,shape border rotate=180]
\tikzstyle{poly}=[polynomial]
\tikzstyle{polyT}=[polynomialT]
\tikzstyle{arrow}=[->,draw=black]
\tikzstyle{hadamard edge}=[-, dashed, dash pattern=on 2pt off 1pt, thick, draw={rgb,255: red,68; green,136; blue,255}]
\tikzstyle{light-arrow}=[->,draw=gray]
\tikzstyle{blue}=[draw=blue!30!white]
\tikzstyle{CNOTcross}=[smallCircle, path picture={ 
\tikzstyle{CNOTblack}=[smallCircle, fill=black]
\def\CNOT{\ensuremath{\text{CNOT}}}
\def\CZ{\ensuremath{\text{CZ}}}
\def\bbQ{\mathbb{Q}}
\def\id{\text{id}}
\def\bbC{\mathbb{C}}
\def\bbH{\mathbb{H}}
\def\bbN{\mathbb{N}}
\def\bbR{\mathbb{R}}
\def\Hilbert{\ensuremath{\mathbb{H}}}
\def\F2{{\mathbb{F}_2}}
\def\ZX{\text{ZX}}
\def\entails{\vdash}
\def\semantic{\vDash}
\newcommand{\set}[1]{\left\{#1\right\}}
\newcommand{\abs}[1]{|#1|}
\newcommand{\comp}{\circ}
\newcommand{\interpret}[1]{\left\llbracket \; #1 \; \right\rrbracket}
\def\tensor{\ensuremath\otimes}
\newcommand\bra[1]{\langle\, #1\,|}
\newcommand\ket[1]{|\, #1 \, \rangle}
\newcommand\ketbra[2]{\ket{#1} \! \bra{#2}}
\newcommand\restr[2]{{
  \left.\kern-\nulldelimiterspace 
  #1 
  \vphantom{\big|} 
  \right|_{#2}
  }}
\renewcommand*{\arraystretch}{1.2}
\newcommand\by[1]{\;
    \raisebox{-.5\height}{$
        \stackrel{=}{
            \mbox{ \scriptsize $#1$}
        }$
    }\;
}
\def\w{\omega}
\newcommand\ring{\text{\textsc{ring}}}
\def\monprop{\text{\textsc{monoid-prop}}}
\def\ie{\text{i.e.}\quad}
\def\ZX{\text{ZX}}
\newcommand{\QRingZH}[1][]{
	\ensuremath{\ring_{ZH}}
}
\newcommand\half[1][1]{\frac{#1}{2}}
\newcommand\piby[1]{\pi/#1}
\def\SO3{\ensuremath{\text{SO3}(\bbR)}}
\def\SU2{\ensuremath{\text{SU2}(\bbC)}}
\def\UQuat{\ensuremath{\hat{Q}}}
\def\ZWC{\ensuremath{\text{ZW}_\bbC}}
\def\contradiction{\raisebox{-0.25\height}{
	\begin{tikzpicture}[scale=0.2]
	\begin{pgfonlayer}{nodelayer}
		\node [style=none] (0) at (-0.5, 1) {};
		\node [style=none] (1) at (-1, 0.5) {};
		\node [style=none] (2) at (-1, -0.5) {};
		\node [style=none] (3) at (-0.5, -1) {};
		\node [style=none] (4) at (0.5, -1) {};
		\node [style=none] (5) at (1, -0.5) {};
		\node [style=none] (6) at (1, 0.5) {};
		\node [style=none] (7) at (0.5, 1) {};
	\end{pgfonlayer}
	\begin{pgfonlayer}{edgelayer}
		\draw (0.center) to (5.center);
		\draw (1.center) to (4.center);
		\draw (2.center) to (7.center);
		\draw (6.center) to (3.center);
	\end{pgfonlayer}
\end{tikzpicture}}}
\def\rowgap{2em}
\newcommand{\node}[2]{
	\raisebox{-0.5\height}{
		\begin{tikzpicture}[scale=0.3]
			\begin{pgfonlayer}{nodelayer}
				\node [style=#1] (1)   at (0, 0)   {$#2$};
				\node [style=none] (2)   at (0,2)   {};
				\node [style=none] (3)   at (0,-2)   {};
			\end{pgfonlayer}
			\begin{pgfonlayer}{edgelayer}
				\draw [-] (1.center) to (2.center);
				\draw [-] (1.center) to (3.center);
			\end{pgfonlayer}
		\end{tikzpicture}
	}
}
\newcommand{\galpharpi}[1]{
	\vc{
		\begin{tikzpicture}[scale=0.3]
			\begin{pgfonlayer}{nodelayer}
				\node [style=gn] (1)   at (0, 1.2)   {$#1$};
				\node [style=gn] (2)   at (0, -1.2)   {$\pi$};
			\end{pgfonlayer}
			\begin{pgfonlayer}{edgelayer}
				\draw [hadamard edge] (1.center) to (2.center);
			\end{pgfonlayer}
		\end{tikzpicture}
	}
}
\def\tripleblobs{
	\vc{
		\begin{tikzpicture}[scale=0.3]
			\begin{pgfonlayer}{nodelayer}
				\node [style=gn] (1)   at (0, 1.2)   {};
				\node [style=rn] (2)   at (0, -1.2)   {};
			\end{pgfonlayer}
			\begin{pgfonlayer}{edgelayer}
				\draw [-] (1.center) to (2.center);
				\draw [-,bend left=35] (1.center) to (2.center);
				\draw [-,bend right=35] (1.center) to (2.center);
			\end{pgfonlayer}
		\end{tikzpicture}
	}
}
\def\halfblobs{\tripleblobs \tripleblobs}
\newcommand{\rg}[2]{
	\vc{
		\begin{tikzpicture}[scale=0.3]
			\begin{pgfonlayer}{nodelayer}
				\node [style=gn] (1)   at (0, 1.2)   {$#1$};
				\node [style=gn] (2)   at (0, -1.2)   {$#2$};
			\end{pgfonlayer}
			\begin{pgfonlayer}{edgelayer}
				\draw [hadamard edge] (1.center) to (2.center);
			\end{pgfonlayer}
		\end{tikzpicture}
	}
}
\newcommand{\state}[2]{
  \raisebox{0.2em}{
	\vc{
		\begin{tikzpicture}[scale=0.3]
			\begin{pgfonlayer}{nodelayer}
				\node [style=#1] (1)   at (0, 0)   {$#2$};
				\node [style=none] (2)   at (0,2)   {};
			\end{pgfonlayer}
			\begin{pgfonlayer}{edgelayer}
				\draw [-] (1.center) to (2.center);
			\end{pgfonlayer}
		\end{tikzpicture}
	}
}
}
\newcommand{\hedge}{
	\raisebox{-0.5\height}{
		\begin{tikzpicture}[scale=0.3]
			\begin{pgfonlayer}{nodelayer}
				\node [style=none] (2)   at (0, 2)   {};
				\node [style=none] (3)   at (0, -2)   {};
			\end{pgfonlayer}
			\begin{pgfonlayer}{edgelayer}
				\draw [hadamard edge] (3.center) to (2.center);
			\end{pgfonlayer}
		\end{tikzpicture}
	}
}
\newcommand{\spider}[2]{
	\raisebox{-0.5\height}{
		\begin{tikzpicture}[scale=0.3]
			\begin{pgfonlayer}{nodelayer}
				\node [style=#1] (1)   at (0, 0)   {$#2$};
				\node [style=none] (2)   at (-1,2)   {};
				\node [style=none] ()   at (0,2)   {\raisebox{0.5\height}{$\dots$}};
				\node [style=none] (3)   at (1,2)   {};
				\node [style=none] (4)   at (-1, -2)   {};
				\node [style=none] ()   at (0,-2)   {\raisebox{0.5\height}{$\dots$}};
				\node [style=none] (5)   at (1, -2)   {};
			\end{pgfonlayer}
			\begin{pgfonlayer}{edgelayer}
				\draw [-] (1.center) to (2.center);
				\draw [-] (1.center) to (3.center);
				\draw [-] (1.center) to (4.center);
				\draw [-] (1.center) to (5.center);
			\end{pgfonlayer}
		\end{tikzpicture}
	}
}
\newcommand{\binary}[2][ZH]{
  \raisebox{0.2em}{
	\vc{
		\begin{tikzpicture}[scale=0.3]
			\begin{pgfonlayer}{nodelayer}
				\node [style=#1] (1)   at (0, 0)   {$#2$};
				\node [style=none] (2)   at (0, 2)   {};
				\node [style=none] (4)   at (-1, -2)   {};
				\node [style=none] (5)   at (1, -2)   {};
			\end{pgfonlayer}
			\begin{pgfonlayer}{edgelayer}
				\draw [-] (1.center) to (2.center);
				\draw [-] (1.center) to (4.center);
				\draw [-] (1.center) to (5.center);
			\end{pgfonlayer}
		\end{tikzpicture}
	}
}
}
\newcommand{\spidermn}[4]{
	\vc{
		\begin{tikzpicture}[scale=0.3]
			\begin{pgfonlayer}{nodelayer}
				\node [style=#1] (1)   at (0, 0)   {$#2$};
				\node [style=none] (2)   at (-1, 2)   {};
				\node [style=none] ()   at (0, 2.5)   {$\overset{#4}{\dots}$};
				\node [style=none] (3)   at (1, 2)   {};
				\node [style=none] (4)   at (-1, -2)   {};
				\node [style=none] ()   at (0, -2.5)   {$\underset{#3}{\dots}$};
				\node [style=none] (5)   at (1, -2)   {};
			\end{pgfonlayer}
			\begin{pgfonlayer}{edgelayer}
				\draw [-] (1.center) to (2.center);
				\draw [-] (1.center) to (3.center);
				\draw [-] (1.center) to (4.center);
				\draw [-] (1.center) to (5.center);
			\end{pgfonlayer}
		\end{tikzpicture}
	}
}
\def\dcup{\raisebox{0.1em}{\tikzfig{wire/cup}}}
\def\dcap{\raisebox{0.3em}{\tikzfig{wire/cupt}}}
\begin{document}
\maketitle

\begin{abstract}
	Graphical calculi are vital tools for representing and reasoning about quantum circuits and processes.
	Some are not only graphically intuitive but also logically complete.
	The best known of these is the ZX-calculus,
	which is an industry candidate for an Intermediate Representation;
	a language that sits between the algorithm designer’s intent and the quantum hardware’s gate instructions.
	The ZX calculus, built from generalised Z and X rotations, has difficulty reasoning about arbitrary rotations.
	This contrasts with the cross-hardware compiler TriQ which uses these arbitrary rotations to exploit hardware efficiencies.
	In this paper we introduce the graphical calculus ZQ,
	which uses quaternions to represent these arbitrary rotations, similar to TriQ,
	and the phase-free Z spider to represent entanglement, similar to ZX.
	We show that this calculus is sound and complete for qubit quantum computing,
	while also showing that a fully spider-based representation would have been impossible.
	This new calculus extends the zoo of qubit graphical calculi, each with different strengths,
	and we hope it will provide a common language for the optimisation procedures of both ZX and TriQ.
\end{abstract}

\section{Introduction}

The purpose of this paper is to introduce a new graphical calculus,
called ZQ,
similar to the already established graphical calculi of ZX, ZW and ZH.
These calculi are \emph{universal}, \emph{sound}, and \emph{complete}
as representations of qubit quantum computing circuits:
Any circuit can be represented as a diagram in any of these calculi,
and two circuits perform the same operation on qubits
if and only if the rules of the calculus show an equality
between the corresponding diagrams.
The ZH calculus \cite{ZH} and the ZW calculus \cite{AmarThesis, ZW}
are based on the algebraic structure of rings
(for qubit quantum computing we explicitly mean the calculus \ZWC).
The ZX calculus and, as we shall see, the ZQ calculus
are instead based on group structures.
This similarity in algebraic structure will
be used to find a translation between the two calculi,
providing us with a method to show the universality and completeness of ZQ,
but also highlights an important difference:
ZQ is built on a non-commutative group,
but we will show that the fundamental building blocks of ZX are restricted to commutative groups.

The ZX calculus is built from the Z and X classical structures of quantum computing, and was introduced in Ref.~\cite{Coecke08}.
Even in that earliest paper the Z `phase shift' is illustrated as a rotation of the Bloch Sphere \cite[\S 4]{Coecke08}.
By the time of Ref.~\cite{Backens16Thesis}, eight years later,
language had changed to that of Z `rotations' or `angles' \cite[Lemma~3.1.7]{Backens16Thesis},
and explicit use is made of the Euler Angle Decomposition result;
that any rotation in \SO3 can be broken down into rotations about the Z then X then Z axes.
The idea behind the calculus ZQ is to represent not just the Z and X rotations of the Bloch Sphere,
but represent arbitrary rotations via unit-length quaternions.
ZX is built not just from rotations but also from \emph{spiders}:
Rotations are viewed as acting on individual qubits,
but spiders link multiple qubits, expressing entanglement.
The observation of Ref.~\cite{Coecke08} is that the
structures of spiders and rotations can be merged
into a single diagrammatic entity,
where each spider (see Figure~\ref{figZXSpiders}) is given a \emph{colour}, of either green indicating Z or red indicating X,
and a \emph{phase}, indicating the angle of rotation.

\begin{figure}
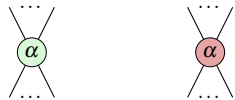

	\begin{align*}
		\spider{gn}{\alpha} \qquad & \qquad \spider{rn}{\alpha}
	\end{align*}
	\caption{\label{figZXSpiders}The spiders of ZX can have any number of inputs and outputs,
		have a colour of red or green, and are labelled by an angle.
		If the angle is 0 it is often omitted.
		The colours used in this paper have been chosen such that Z (green) should appear lighter than X (red), even when viewed in greyscale \cite{ZXAccessibility}.}
\end{figure}

The Bloch Sphere, which we cover in more detail in \S\ref{secBlochSphere}, is not a perfect analogy \cite{Wharton2015}.
Although it provides us with useful intuition and a way to consider a single qubit in real Euclidean space,
its group of rotations, \SO3, is a subgroup of the group of special unitary evolutions, \SU2,
which the standard circuit model of quantum computing actually uses \cite{NielsonChuang}.
The group \SU2 itself is isomorphic to the group of unit-length quaternions,
and so we shall use these quaternions as a replacement for \SO3's rotations,
giving us the `Q' in ZQ.
This use of quaternions to represent rotations is not new to quantum computing \cite{Wharton2015},
nor other domains such as engineering or computer graphics \cite{Shoemake},
but has recently surfaced as a useful component of Intermediate Representations for quantum circuits.
Intermediate Representations sit between the user's specification of an algorithm
and the actual implementation on a specific piece of hardware.
The system TriQ \cite{1905.11349}
provides such an Intermediate Representation, targeting existing quantum computers run by IBM, Rigetti, and the University of Maryland.
The authors claim a speed-up in execution of their benchmarks on the seven quantum computers considered,
in part because of TriQ's use of quaternions in the optimisation process \cite[\S4]{TRIQ}:
Any sequence of single qubit gates can be combined into just one quaternion,
then decomposed into the most efficient sequence of gates for the target hardware architecture.

Our aims in making ZQ are the following:
\begin{itemize}
	\item Construct a graphical calculus that succinctly expresses all single qubit operations
	\item Provide a complete graphical calculus that can express the Intermediate Representation of TriQ
	\item Construct a qubit graphical calculus whose phases form a non-commutative group
\end{itemize}

Before we give the definition of ZQ we first give a brief overview of
the Bloch Sphere, the groups \SU2 and \SO3, and unit quaternions in \S\ref{secBlochSphere}.
In \S\ref{secZX} we describe the graphical calculus ZX,
in \S\ref{secSpiders} we show why the spiders of ZX are incompatible with non-commutative groups, and then
in \S\ref{secZQ} we introduce the graphical calculus ZQ and demonstrate its universality, soundness and completeness.

\section{Rotations, Quaternions and TriQ}
\label{secBlochSphere}

The book `Quantum computation and quantum information' \cite{NielsonChuang}
defines a qubit as a unit vector in $\Hilbert := \bbC^2$,
but notes that if two qubits differ by a unit complex scalar then
they result in the same experimental observations.
The paper `Unit Quaternions and the Bloch Sphere' \cite{Wharton2015}
instead uses the term \emph{spinor} for a unit vector in $\Hilbert$,
often represented by a quaternion, with rotations also being represented as quaternions,
and the term qubit to mean a point in the quotient space $\Hilbert / v \sim e^{i \alpha} v$.
We are taking care to highlight this difference
because this paper will take qubits as unit vectors in $\bbC^2 = \Hilbert$
as in Ref.~\cite{NielsonChuang}
but will be using quaternions to represent rotations
in a manner related to Ref.~\cite{Wharton2015}.
Our reason for this is to make best use of
the $\bbC$-tensor product of $(\bbC^2)^{\tensor n}$,
allowing us to use the paradigm of Categorical Quantum Mechanics \cite{CQM},
but to also provide a tidy representation of the group of rotations.

\begin{definition}[Qubits and the Bloch Sphere]  \label{defQubitAndBlochSphere} \cite{NielsonChuang}
	A qubit is a unit vector in $\Hilbert := \bbC^2$.
	Two qubits $v$ and $v'$ are considered experimentally indistinguishable
	if $v' = e^{i\alpha} v$, defining the equivalence relation $v' \sim v$.
	Any qubit $v$ is equivalent via this relation to a vector defined
	just using two angles, $\theta$ and $\phi$.
	\begin{align}
		\ket{0} & := \begin{pmatrix}
			1 \\ 0
		\end{pmatrix} &
		\ket{1} & := \begin{pmatrix}
			0 \\ 1
		\end{pmatrix} &
		v& \sim e^{-i \phi / 2} \cos \half[\theta] \ket{0} + e^{i \phi / 2} \sin \half[\theta] \ket{1}
	\end{align}
	The space of qubits quotient the relation $\sim$ is called the Bloch Sphere,
	with a `qubit up to global phase' given by the spherical coordinates
	$(\theta, \phi)$.
\end{definition}

\begin{definition}[Rotations of the Bloch Sphere]  \label{defQubits} \cite[\S{2}]{Wharton2015}
	The Bloch Sphere is the familiar 2-sphere in 3-dimensional real space.
	Accordingly its group of rotations is \SO3.
\end{definition}

This presentation of rotations of the Bloch Sphere corresponds
to the naming of the Pauli $X$, $Y$, and $Z$ matrices
as those that fix the $x$, $y$, and $z$ axes.
This correspondence, however, is imperfect:
The Bloch Sphere has already discarded the global phase,
but the Pauli matrices act on qubits.
Rather than continue to use the language of 3D rotations
we shall instead be using unit quaternions
(via group isomorphism with \SU2) to label our fundamental, single-qubit evolutions.
Quaternions are a four-dimensional real algebra,
in the same way that the complex numbers are a two-dimension real algebra.

\begin{definition}[Quaternions]  \label{defQuaternions} \cite[p12]{Hazewinkel2004algebras}
	The quaternions, invented by Hamilton in 1843, are a non-commutative, four-dimensional, real algebra:
	\begin{align}
		\bbR + i\bbR + j \bbR + k \bbR \qquad \qquad
		i^2 = j^2 = k^2 = ijk = -1
	\end{align}
	For ZQ we are only interested in unit-length quaternions, forming the group \UQuat\ under multiplication.
	The group \UQuat\ is isomorphic with \SU2, via the isomorphism:
	\begin{align}
		\phi: \UQuat & \to \SU2     \label{eqnUQuatSU2} &
		q_w + iq_x + jq_y + kq_z & \mapsto \begin{pmatrix}
			q_w - iq_z & -q_y + iq_x \\
			-q_y-iq_x  & q_w + iq_z
		\end{pmatrix}
	\end{align}
	The proof that this is an isomorphism is given as Proposition~\ref{propZQPhi}.
\end{definition}

At first glance \SO3\ and \UQuat\ may seem to be unrelated mathematical entities, but
there is another way to represent unit-length quaternions, and that is by an angle and a unit vector.
It is important to note that this is not the same thing as `an angle rotation along a unit vector':
The angle-vector pair $(\alpha, \hat v)$ and the angle-vector pair $(-\alpha, -\hat v)$ are different as pairs,
but would constitute the same rotation in \SO3. This, in fact, describes the relationship between \UQuat\ and \SO3.

\begin{definition}[Relating unit quaternions to \SO3]  \label{defUQUatSO3}
	There is a canonical homomorphism from \UQuat\ to \SO3, given by
	\begin{align}
		\psi         : \UQuat  &\to \SO3 & (\alpha, v) &:= \cos \half[\alpha] + \sin\half[\alpha](iv_x + jv_y + kv_z)       \\
		(\alpha, v) &\mapsto \text{ rotation by angle $\alpha$ along vector $v$} &  \ker \psi  &= \set{1, -1}    
	\end{align}
\end{definition}

\begin{remark} \label{remPauliXYZ}
	The axes $x$, $y$ and $z$ relate these quaternions as rotations (via $\phi$) to the Pauli matrices $X$, $Y$, and $Z$.
	Our presentation introduces a scale factor of $\pm i$, similar to that in Ref.~\cite[Table~1]{Wharton2015}.

	\begin{align}
		\phi((\pi, x))                         & =
		-i
		\begin{pmatrix}
			0 & 1 \\
			1 & 0
		\end{pmatrix} = -iX
		                                       &
		\phi((\pi, y))                         & =
		-i
		\begin{pmatrix}
			0  & i \\
			-i & 0
		\end{pmatrix} = iY
		\\
		\phi((\pi, z))                         & =
		-i
		\begin{pmatrix}
			1 & 0  \\
			0 & -1
		\end{pmatrix} = -iZ
		                                       &
		\phi((\pi, \frac{x + z}{\sqrt{2}})) & =
		\frac{-i}{\sqrt{2}}
		\begin{pmatrix}
			1 & 1  \\
			1 & -1
		\end{pmatrix} = -iH
	\end{align}
\end{remark}

Since unit quaternions can represent the fundamental
single qubit operations (with $\psi$ linking composition of operations
to multiplication of quaternions)
it can be simpler to just use the quaternion representation,
as in the example of the cross-hardware compiler TriQ:

\begin{example}[Quaternions in TriQ] \label{exaQuaternionsInTriQ}
	The compiler TriQ uses quaternions as part of its optimisation process.
	\begin{displayquote}[Full-Stack, Real-System Quantum Computer Studies:
			Architectural Comparisons and Design Insights \cite{TRIQ}]
		Since 1Q operations are rotations, each
		1Q gate in the [Intermediate Representation] can be expressed using a unit rotation quaternion
		which is a canonical representation using a 4D complex number.
		TriQ composes rotation operations by multiplying the corresponding
		quaternions and creates a single arbitrary rotation. This rotation is
		expressed in terms of the input gate set. Furthermore, on all three
		vendors, Z-axis rotations are special operations that are implemented
		in classical hardware and are therefore error-free. TriQ expresses
		the multiplied quaternion as a series of two Z-axis rotations and one
		rotation along either X or Y axis, thereby maximizing the
		number of error-free operations.
	\end{displayquote}
\end{example}

We shall explicitly construct this decomposition of a quaternion
into a Z-X-Z rotation in Proposition~\ref{propZQQDecomp}
when we explore how to translate from ZQ to ZX.
With these notions of rotations and quaternions established we turn to the ZX-calculus.

\section{The ZX-calculus} \label{secZX}

The ZX-calculus is a graphical calculus
similar to the usual quantum circuit notation of e.g. Ref.~\cite{NielsonChuang}.
We provide here only a brief introduction, for more see Ref.~\cite{PQP}.
ZX-diagrams are built from red (X) and green (Z) spiders, as shown in Figure~\ref{figZXSpiders},
joined by wires.
These spiders can have any number of inputs or outputs,
and they, along with the wires, form the building blocks of the diagrams.
Two diagrams can be placed side by side (horizontal composition, $\tensor$)
or the outputs of one are plugged into the inputs of another above
(vertical composition, $\comp$).
Note that these diagrams are read bottom-to-top,
rather than left-to-right,
but this is purely a matter of convention.

These spiders and wires represent linear maps,
with the notation $\interpret{D}$ indicating the linear map associated with the diagram D (see Figure~\ref{figZXInterpretation}).
Indeed the calculus is \emph{universal}
in that any linear map $M : \Hilbert^{\tensor m} \to \Hilbert^{\tensor n}$
can be represented as a ZX-diagram.
The calculus also comes with a set of rules,
and these rules are \emph{complete},
meaning that if two diagrams represent the same linear map
then one can be transformed to the other by the rules.
In fact there are several fragments of ZX,
each of which can be seen as a restriction on the available Z and X rotations,
and each of which has a complete ruleset:
Stabilizer ZX \cite{Backens13},
Clifford+T ZX \cite{JPV17},
various finite subgroups beyond Clifford+T ZX \cite{BeyondCliffordT},
and the Universal ZX \cite{UniversalComplete}.
We shall be looking at just the last of these in this paper,
and the ruleset we shall be considering is given in Figure~\ref{figZXRules}.

Note that the rule (EU') of Figure~\ref{figZXRules}
includes a long description of the calculation of $\beta_1$, $\beta_2$, $\beta_3$
and $\gamma$.
This is known as a \emph{side condition},
and the complexity of the condition stems from the equivalence of different Euler Angle Decompositions.
There are other complete rulesets for Universal ZX \cite{UniversalComplete}  \cite{BeyondCliffordT},
but each has a side condition requiring the calculation of moduli and arguments of complex numbers.
When we reach the definition of ZQ we will see that there is no such side condition
related to Euler Angle Decompositions\footnote{ZQ's side condition in the (Y) rule relates to the transpose.}, because
it is inherent in the group action of \UQuat\ and the rule (Q).

\begin{figure}
	\begin{align}
		\interpret{\spider{gn}{\alpha}} & = \ketbra{0...0}{0...0} +		e^{i \alpha} \ketbra{1...1}{1...1}  &
		\interpret{\ \node{none}{}\ }          & = \begin{pmatrix}		1 & 0 \\ 0 & 1		\end{pmatrix} \\
		\interpret{\spider{rn}{\alpha}} & =  \ketbra{+...+}{+...+} +		e^{i \alpha} \ketbra{-...-}{-...-} &
		\interpret{\dcap}                  & = \begin{pmatrix}		1 & 0 & 0 & 1		\end{pmatrix}\\
		\interpret{\tikzfig{wire/swap}} & =  \begin{pmatrix}
			1 & 0 & 0 & 0 \\
			0 & 0 & 1 & 0 \\
			0 & 1 & 0 & 0 \\
			0 & 0 & 0 & 1
		\end{pmatrix}                                &
		\interpret{\dcup}                  & = \begin{pmatrix}		1 \\ 0 \\ 0 \\ 1		\end{pmatrix}
	\end{align}
	\caption{The interpretation of the elements of ZX-calculus diagrams \label{figZXInterpretation}}
\end{figure}

\begin{figure}
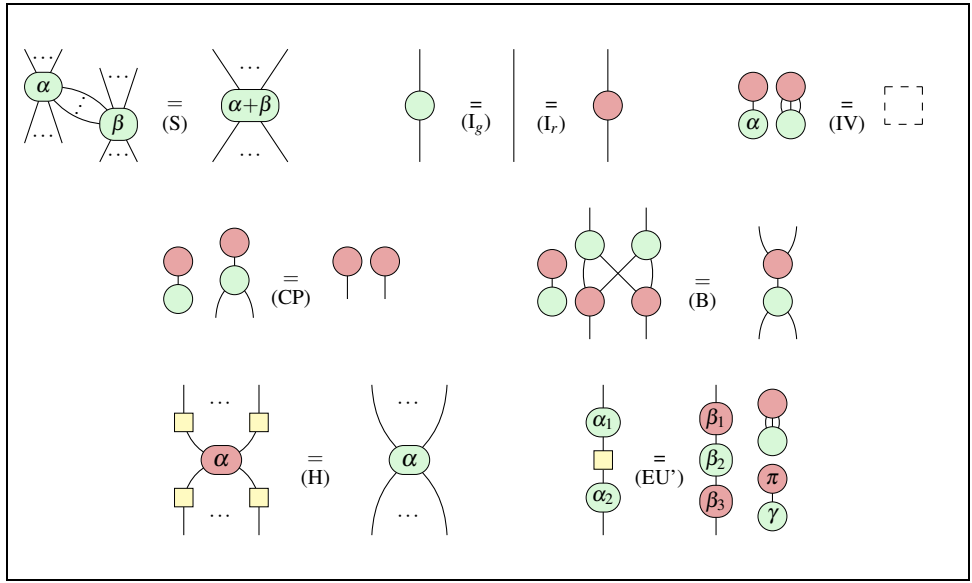

	\centering
	\scalebox{1}{\begin{tabular}{|c|}
			\hline                                                                                                                         \\
			\tikzfig{ZX/Vilmart/spider-1}$\qquad\quad~$\tikzfig{ZX/Vilmart/s2-green-red}$\qquad\quad~$\tikzfig{ZX/Vilmart/inverse-param}~~ \\\\
			\tikzfig{ZX/Vilmart/b1s}$\qquad\qquad$\tikzfig{ZX/Vilmart/b2s}                                                                 \\\\
			\tikzfig{ZX/Vilmart/h2}$\qquad\qquad$\tikzfig{ZX/Vilmart/Euler-Had}                                                            \\\\
			\hline
		\end{tabular}}
	\caption[]{Set of rules ZX for the ZX-Calculus with scalars from Ref.~\cite{VilmartZX}. The right-hand side of (IV) is an empty diagram.
		(...) denote zero or more wires, while (\protect\rotatebox{45}{\raisebox{-0.4em}{$\cdots$}}) denote one or more wires. In rule (EU'), $\beta_1,\beta_2,\beta_3$ and $\gamma$ can be determined as follows:
		$x^+:=\frac{\alpha_1+\alpha_2}{2}$, $x^-:=x^+-\alpha_2$, $z := -\sin{x^+}+i\cos{x^-}$ and $
			z' := \cos{x^+}-i\sin{x^-}$, then
		$\beta_1 = \arg z + \arg z',
			\beta_2 = 2\arg\left(i+\left|\frac{z}{z'}\right|\right),
			\beta_3 = \arg z - \arg z',
			\gamma = x^+-\arg(z)+\frac{\pi-\beta_2}{2}$
		where by convention $\arg(0):=0$ and $z'=0\implies \beta_2=0$.
	}
	\label{figZXRules}
\end{figure}

\begin{example}[Quantum circuits are ZX diagrams] \label{exaQunatumCircuitsAsZX}

	Quantum circuits constructed
	from the universal set of gates shown in Figure~\ref{figCircuitGenerators}
	(the \CNOT, parameterised Pauli Z, and Hadamard gates \cite{NielsonChuang})
	are ZX diagrams.
	As shown in that figure each gate has a ZX-calculus analogue.
	Other common gates can easily be expressed in terms of these gates,
	for example $S := Z_{\frac\pi2}$ and $T := Z_{\frac\pi4}$ as well as $X_\alpha$ and CZ shown below:
	\begin{align}\label{eq:zx-derived-gates}
		X_\alpha & = \tikzfig{ZX/Xalpha} &
		\CZ & = \tikzfig{ZX/CZ}
	\end{align}

	\begin{figure}
		\centering
		\begin{tabular}{r c c c}
			Gate                       & \CNOT                        & $Z_\alpha$                   & H                                              \\[2em]
			\begin{tabular}{@{}r@{}}Circuit Diagram \\ (Inputs at the bottom)\end{tabular} & \tikzfig{circuit/CNOT}       & \node{box}{Z_\alpha}         & \node{box}{\text{H}}                           \\[2em]
			ZX Diagram                 & \tikzfig{ZX/CNOT}            & \node{gn}{\alpha}            & \node{h}{}                                     \\[2em]
			Matrix Interpretation      & $\begin{pmatrix}
					1 & 0 & 0 & 0 \\
					0 & 1 & 0 & 0 \\
					0 & 0 & 0 & 1 \\
					0 & 0 & 1 & 0 \\
				\end{pmatrix}$ & $\begin{pmatrix}
					1 & 0            \\
					0 & e^{i \alpha}
				\end{pmatrix}$ & $\frac{1}{\sqrt{2}}\begin{pmatrix}
					1 & 1  \\
					1 & -1
				\end{pmatrix}$
		\end{tabular}
		\caption{\label{figCircuitGenerators}A universal set of gates for quantum circuits,
			their ZX counterparts, and their interpretation as matrices acting on Hilbert space.}
	\end{figure}

\end{example}
\begin{remark}
	Note that there are wires in the depictions of the \CNOT\ and \CZ\ gates that are horizontal,
	and so it is ambiguous whether they are connected to inputs or outputs.
	This is a reflection of the `only connectivity matters' rule of ZX;
	any deformation of the diagram, provided it preserves the connectivity of the wires,
	results in another ZX diagram with the same interpretation.
	We can therefore draw horizontal wires without ambiguity.
\end{remark}

\section{Spiders and non-commutative groups} \label{secSpiders}

Spiders were introduced by Coecke and Duncan in the paper `Interacting Quantum Observables' \cite{Coecke11Observables},
and have already been exhibited in this paper as the red and green spiders of ZX.
The Observable Structures of that paper (also called spiders, Definition~6.4)
are commutative monoids over a given $\dagger$-SMC, along with other properties.
This commutativity was then vital to their Decorated Spider Rule~\cite[Theorem 7.11]{Coecke11Observables},
exhibited for ZX as the rule (S) of Figure~\ref{figZXRules}.
Our first result will be to show that \emph{any} monoid acting on $\Hilbert$ is commutative.

\begin{definition}[Monoid over $\Hilbert$]  \label{defMonoidOverQubits} In the manner of \cite[Definition~6.1]{Coecke11Observables}:
	A monoid over \Hilbert\ is a set $M$ of distinct states in \Hilbert\
	and an associative multiplication gate $\mu$.
	One of the states, $e$, is the unit for $\mu$.
	We depict $\mu$ and the elements of $M$ graphically as:
	\begin{align*}
		\binary[white]{} & : \bbH^{\tensor 2} \to \bbH &   & \set{\state{white}{m}  : \bbC \to \bbH }_{m \in M} \\
	\end{align*}
\end{definition}

\begin{restatable}{proposition}{propQubitModelsCommutative}\label{propQubitModelsCommutative}
	Every monoid over \Hilbert\ is commutative.
\end{restatable}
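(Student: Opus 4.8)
The plan is to recognise the data $(M,\mu,e)$ as nothing more than a low-dimensional unital associative algebra over $\bbC$, and then to exploit the fact that $\Hilbert = \bbC^2$ has dimension only $2$. Concretely, the multiplication gate $\mu : \bbH^{\tensor 2} \to \bbH$ is a bilinear map; associativity of $\mu$ reads $\mu \circ (\mu \tensor \id) = \mu \circ (\id \tensor \mu)$, and the unit law reads $\mu \circ (e \tensor \id) = \id = \mu \circ (\id \tensor e)$. Read together these make $\Hilbert$, equipped with product $\mu$ and unit $e$, into a unital associative $\bbC$-algebra of dimension $2$.

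First I would isolate the purely algebraic fact on which everything rests: \emph{every unital associative algebra of dimension at most $2$ over a field is commutative}. The proof of this is short. The unit $e$ is nonzero, so it spans a one-dimensional subspace; extend $\{e\}$ to a basis $\{e, x\}$ of $\Hilbert$. The product of any two basis elements is then forced: $e$ is central by the unit law, and $\mu(x \tensor x) = \alpha e + \beta x$ for some $\alpha, \beta \in \bbC$. By bilinearity the product of arbitrary elements is an expression in these data that is manifestly symmetric under swapping its two arguments, whence $\mu \circ \sigma = \mu$ for the swap $\sigma$ (one even checks that associativity imposes no further constraint). Hence $\mu$ is commutative.

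Finally I would transfer this back to the monoid. Since the multiplication of $(M, \mu)$ is by definition the restriction of $\mu$ to states drawn from $M$, commutativity of $\mu$ immediately yields $\mu(m \tensor m') = \mu(m' \tensor m)$ for all $m, m' \in M$, which is the claim.

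The main obstacle I anticipate is not the dimension count but pinning down exactly how strong the unit hypothesis is. The definition only asserts that $e$ is ``the unit for $\mu$'', which literally guarantees $\mu(e \tensor m) = m = \mu(m \tensor e)$ for the elements $m \in M$, rather than the full morphism identity $\mu \circ (e \tensor \id) = \id$ on all of $\Hilbert$. If $M$ spans $\Hilbert$ this gap closes at once by bilinearity, reducing to the algebra argument above. If instead $M$ spans only a one-dimensional subspace, every element is a scalar multiple of a single state $w$, and then $\mu(c w \tensor c' w) = c c'\, \mu(w \tensor w) = \mu(c' w \tensor c w)$ is commutative purely because scalars in $\bbC$ commute. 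Handling these two cases cleanly, and confirming that $\Hilbert$ genuinely forces dimension at most $2$, is where the care is needed.
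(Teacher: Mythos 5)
Your proposal is correct and follows essentially the same route as the paper: both split on the dimension of the span of $M$, dispose of the degenerate (at most one-dimensional) case by pure bilinearity, and in the spanning case use the unit law to force the matrix of $\mu$ into a form that is manifestly symmetric under swapping its two inputs. The paper carries out the two-dimensional case as an explicit matrix computation rather than invoking the general fact that unital algebras of dimension at most two are commutative, but the content is identical, and you correctly flag the one point that genuinely requires care, namely extending the unit law from the elements of $M$ to all of $\Hilbert$.
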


The proof of Proposition~\ref{propQubitModelsCommutative} is pure linear algebra and is found in \S\ref{prfPropQubitModelsCommutative}.
Our interest in this result is that only commutative
monoids can be modelled over $\Hilbert$ in a non-degenerate manner.
Since \UQuat\ is a non-commutative group,
and since the action of a group $\mu : G \times G \to G$ is also necessarily a monoid,
this means that we cannot faithfully model the group structure of $\UQuat$
as a monoid over  $\Hilbert$.

\begin{corollary} \label{corUQuatNotMonoid}
	There is no monoid over  $\Hilbert$ such that the monoid action $(M, \mu)$ is isomorphic
	to the group action of $\UQuat$.
\end{corollary}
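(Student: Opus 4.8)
The plan is to argue by contradiction, reducing everything to Proposition~\ref{propQubitModelsCommutative}. First I would observe that the group multiplication of $\UQuat$ is itself an instance of a monoid: the map $\mu : \UQuat \times \UQuat \to \UQuat$ is associative and has $1$ as a unit, so the underlying structure $(\UQuat, \mu)$ is a monoid (in fact a group, but we only need the monoid structure). This is the point of the remark preceding the corollary in the excerpt, that a group action is \emph{a fortiori} a monoid action, and it is what lets us bring Proposition~\ref{propQubitModelsCommutative} to bear.

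Next, suppose for contradiction that some monoid over \Hilbert, say $(M, \mu_M)$ with its states in $\bbH$ and multiplication gate, is isomorphic to the group action of $\UQuat$. By Proposition~\ref{propQubitModelsCommutative} the monoid $(M, \mu_M)$ is commutative. Since commutativity is a property preserved by monoid isomorphism (if $f$ intertwines the two multiplications and one of them is symmetric in its arguments, so is the other), the multiplication on $\UQuat$ must also be commutative. But $\UQuat$ is non-commutative: from the defining relations $i^2 = j^2 = k^2 = ijk = -1$ one computes $ij = k$ while $ji = -k$, so $ij \neq ji$. This contradicts the commutativity just derived, and so no such monoid over \Hilbert\ can exist.

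The argument is essentially immediate once the proposition is in hand, so there is no serious technical obstacle; the only care needed is in the statement that \emph{isomorphism of monoid actions transports commutativity}. Concretely, I would spell out that an isomorphism here is a bijection of the carrier states commuting with the two multiplication gates, and then note that $\mu_M(a,b) = \mu_M(b,a)$ for all $a,b$ forces $\mu(f(a),f(b)) = \mu(f(b),f(a))$, which by surjectivity of $f$ gives commutativity of $\mu$ on all of $\UQuat$. Everything else is the contradiction with the explicit witness $ij \neq ji$.
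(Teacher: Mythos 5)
Your proof is correct and follows the same route as the paper, which derives the corollary immediately from Proposition~\ref{propQubitModelsCommutative} by noting that a group multiplication is in particular a monoid and that $\UQuat$ is non-commutative. The extra details you supply (transport of commutativity along an isomorphism, and the explicit witness $ij = k \neq -k = ji$) are sound and merely make explicit what the paper leaves implicit.
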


By Corollary~\ref{corUQuatNotMonoid} we cannot find a spider
with phases labelled by unit quaternions that would obey the generalised spider
law \cite[Theorem~4]{Coecke11Observables}.

\section{The language ZQ} \label{secZQ}

Corollary~\ref{corUQuatNotMonoid} shows that we cannot simply
change ZX by labelling spiders with unit quaternions,
and so we have constructed a different approach.
We will use unit quaternion labels on directed edges to indicate rotations,
and use the phase-free Z spiders of ZX to mediate entanglement.
We present the graphical calculus ZQ as a compact closed PROP
generated by the morphisms in Figure~\ref{figZQGenProp}
and then present the interpretation of these generators in Figure~\ref{figZQInterpretation}.
We build the transpose of the $Q_q$ node in the usual way, as shown in Figure~\ref{figQTranspose}.

\begin{definition}[ZQ]  \label{defZQ}
	The graphical calculus ZQ is formed of:
	\begin{itemize}
		\item The generators of Figure~\ref{figZQGenProp}
		\item The interpretation of Figure~\ref{figZQInterpretation}
		\item The rules of Figure~\ref{figZQRules}
	\end{itemize}
\end{definition}

\begin{theorem}[ZQ is sound] \label{thmZQSound}
	The rules of ZQ are sound with respect to the standard interpretation.
\end{theorem}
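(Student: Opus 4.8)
The plan is to verify soundness rule by rule: for each equation in Figure~\ref{figZQRules} I would compute the standard interpretation (Figure~\ref{figZQInterpretation}) of its left- and right-hand diagrams and check that the two linear maps coincide. Since ZQ is the compact closed PROP freely generated by the morphisms of Figure~\ref{figZQGenProp} and then quotiented by the rules, the interpretation $\interpret{\cdot}$ extends uniquely to a monoidal functor sending $\tensor$ to the tensor product and $\comp$ to matrix composition. Consequently soundness reduces to checking each generator-level equation on the nose; functoriality then propagates the equality to every diagram in which a rule is applied as a subdiagram, so no induction on diagram structure is required beyond this observation.

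First I would dispatch the rules governing the phase-free Z spider. These are exactly the Z-spider fragment of ZX, whose interpretation copies and merges computational basis states; their soundness is the familiar spider/bialgebra computation and carries over unchanged. I would likewise handle the compact-closed generators (cups, caps, and swaps) by checking the yanking and naturality equations directly against the matrices listed in Figure~\ref{figZQInterpretation}.

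The heart of the argument is the rule (Q), which says that composing two $Q$ nodes labelled $q_1$ and $q_2$ equals a single $Q$ node labelled by their quaternion product (in the appropriate order). Its soundness is immediate once $\phi$ is known to be a group homomorphism: by Proposition~\ref{propZQPhi}, $\interpret{Q_{q_1}}\comp\interpret{Q_{q_2}} = \phi(q_1)\phi(q_2) = \phi(q_1 q_2) = \interpret{Q_{q_1 q_2}}$, so I would simply invoke that proposition rather than recompute. This is precisely where the absence of an Euler-angle side condition comes from: the composition law is carried by the group structure of \UQuat\ itself rather than by an external trigonometric calculation.

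The main obstacle will be the rules that mix the non-commutative $Q$ labels with the spider and colour structure, together with the (Y) rule. For these I would compute the relevant $2\times 2$ (or, for multi-legged spiders, $2^n\times 2^n$) matrices explicitly from $\phi$, taking care that the non-commutativity of \UQuat\ is respected by the order in which $Q$ nodes are composed. The (Y) rule requires relating $\phi(q)$ to its transpose: a direct calculation with the matrix of Definition~\ref{defQuaternions} shows that transposition corresponds to the quaternion operation named in the rule's side condition, and I would verify that this matches the definition of the transposed node in Figure~\ref{figQTranspose}. Throughout, the only genuine bookkeeping is keeping global scalar factors consistent, since the chosen normalisation of $\phi$ introduces factors of $\pm i$ as recorded in Remark~\ref{remPauliXYZ}.
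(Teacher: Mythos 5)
Your proposal matches the paper's proof: the paper likewise verifies soundness rule by rule in the appendix (\S\ref{secZQSound}) by computing the interpretations of both sides of each equation, dispatching rule (S) as the standard Z-spider law, deriving rule (Q) from Proposition~\ref{propZQPhi}'s homomorphism property, and handling (Y) and the remaining rules by direct matrix calculation. Your added observation that functoriality of $\interpret{\cdot}$ reduces everything to generator-level checks is left implicit in the paper but is the same argument.
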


\begin{proof}
	\label{prfThmZQSound}
	This proof is covered in \S \ref{secZQSound},
	since it amounts to just evaluating each side of each rule.
\end{proof}

\begin{theorem}[ZQ is complete] \label{thmZQComplete}
	The rules of ZQ are complete with respect to the standard interpretation.
\end{theorem}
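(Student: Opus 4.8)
The plan is to prove completeness by translation, reusing the completeness of the Universal \ZX-calculus \cite{VilmartZX} established with the ruleset of Figure~\ref{figZXRules}. Concretely I would build two interpretation-preserving, monoidally and compositionally functorial translations: a functor $\mathcal{T} \colon \ZQ \to \ZX$ and a functor $\mathcal{S} \colon \ZX \to \ZQ$. For $\mathcal{T}$, the phase-free Z spider of \ZQ\ maps to the phase-free green spider of \ZX, while each quaternion node $Q_q$ maps to its Euler Z-X-Z decomposition as a string of green and red phase spiders, exactly as produced by Proposition~\ref{propZQQDecomp}, carrying whatever fixed scalar is needed so that $\interpret{\mathcal{T}(Q_q)} = \interpret{Q_q}$ on the nose (the $\pm i$ normalisation of Remark~\ref{remPauliXYZ} has to be absorbed here). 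For $\mathcal{S}$, the green spider $Z_\alpha$ maps to a phase-free Z spider carrying a $Q$-node for the $z$-axis quaternion of angle $\alpha$, the red spider maps to the same conjugated by the Hadamard quaternion $(\pi, (x+z)/\sqrt2)$, and the Hadamard itself maps to that quaternion node; cups, caps, swaps and the bare wire map to themselves. Since both assignments are defined on generators and extended functorially, it suffices to check interpretation preservation generator-by-generator, which is the routine linear-algebra content of Figure~\ref{figZQInterpretation} together with Remark~\ref{remPauliXYZ}.

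Two lemmas then power the argument. First, a \emph{round-trip lemma}: for every \ZQ\ generator $g$ one has $\mathcal{S}(\mathcal{T}(g)) =_{\ZQ} g$, and hence $\mathcal{S}(\mathcal{T}(D)) =_{\ZQ} D$ for every diagram $D$ by functoriality. The only non-trivial case is $g = Q_q$: here $\mathcal{T}$ splits $q$ into three axis rotations and $\mathcal{S}$ re-reads them as three $Q$-nodes in series, so the content of the lemma is precisely that the (Q) rule recombines those three quaternion factors back into $q$ by quaternion multiplication, with the scalar rules and the (Y) rule cleaning up the leftover normalisation. Second, a \emph{simulation lemma}: for each rule $L = R$ of Figure~\ref{figZXRules}, the translated equation $\mathcal{S}(L) =_{\ZQ} \mathcal{S}(R)$ is derivable in \ZQ. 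The spider-fusion, identity, bialgebra/copy and colour-change rules translate to their direct \ZQ\ counterparts; the interesting cases are the Euler rules, where the elaborate \ZX\ side condition computing $\beta_1,\beta_2,\beta_3,\gamma$ should collapse to a single use of the (Q) rule, because the composite single-qubit rotation that the side condition laboriously reconstructs is exactly the product of the corresponding unit quaternions in \UQuat.

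With both lemmas in hand the theorem follows by a sandwich argument. Given \ZQ\ diagrams $D_1, D_2$ with $\interpret{D_1} = \interpret{D_2}$, interpretation preservation of $\mathcal{T}$ gives $\interpret{\mathcal{T}(D_1)} = \interpret{\mathcal{T}(D_2)}$, so completeness of \ZX\ yields a \ZX-derivation $\mathcal{T}(D_1) =_{\ZX} \mathcal{T}(D_2)$. Applying $\mathcal{S}$ to that derivation step by step and invoking the simulation lemma turns it into a \ZQ-derivation $\mathcal{S}(\mathcal{T}(D_1)) =_{\ZQ} \mathcal{S}(\mathcal{T}(D_2))$, and the round-trip lemma finally gives $D_1 =_{\ZQ} \mathcal{S}(\mathcal{T}(D_1)) =_{\ZQ} \mathcal{S}(\mathcal{T}(D_2)) =_{\ZQ} D_2$.

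I expect the main obstacle to be the simulation lemma for the Euler rules, together with exact scalar bookkeeping. The whole appeal of \ZQ\ is that the \ZX\ side condition, built from moduli and arguments of auxiliary complex numbers, dissolves into group multiplication in \UQuat; but to actually discharge it one must verify that the translated left- and right-hand sides differ only by applications of (Q), together with the scalar and (Y) rules, which requires pinning down the $\pm i$ and $1/\sqrt2$ factors consistently across $\mathcal{T}$, $\mathcal{S}$ and the round trip. Getting every scalar to match on the nose, since the chosen \ZX\ ruleset is scalar-exact, is the fiddly part on which the argument's correctness rests.
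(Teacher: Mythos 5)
Your proposal follows essentially the same route as the paper: your $\mathcal{T}$ and $\mathcal{S}$ are the paper's $F_X$ and $F_Q$ (Figure~\ref{figZQZXTranslation}), your simulation lemma is the rule-by-rule derivation of the translated ZX rules in \S\ref{secZXZQTranslatedRules}, your round-trip lemma is \S\ref{secCompletenessZXZQBack}, and the concluding sandwich argument is the paper's displayed chain of entailments. You also correctly identify where the real work lies, namely the translated (EU') rule and the scalar bookkeeping, which is exactly what occupies most of the paper's appendix.
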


\begin{proof}
	\label{prfThmZQComplete}
	This proof is covered in \S \ref{secZQComplete},
	and is performed by an equivalence with the ZX calculus, via the translation given in Figure~\ref{figZQZXTranslation}.
\end{proof}

\begin{theorem}[ZQ is universal] \label{thmZQUniversal}
	The diagrams of ZQ are universal for linear maps $\Hilbert^{\tensor m} \to \Hilbert^{\tensor n}$.
\end{theorem}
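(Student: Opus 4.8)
The plan is to reduce the claim to the universality of ZX established in \S\ref{secZX}: every linear map $\Hilbert^{\tensor m}\to\Hilbert^{\tensor n}$ is already the interpretation of some ZX-diagram, so it suffices to produce, for each ZX-diagram, a ZQ-diagram with the same interpretation. Concretely I would define an interpretation-preserving translation $(\cdot)^\sharp$ from ZX to ZQ on generators and extend it over $\comp$, $\tensor$, and the compact-closed structure; since $\interpret{\cdot}$ is itself a monoidal functor on each side, checking $\interpret{G^\sharp}=\interpret{G}$ on the finitely many generators $G$ forces $\interpret{E^\sharp}=\interpret{E}$ for every ZX-diagram $E$, and then $E^\sharp$ realises the same linear map as $E$. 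If the equivalence with ZX underlying Theorem~\ref{thmZQComplete} already supplies such a translation $\ZX\to\ZQ$, universality is then immediate from it.

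On the structural generators — identity wire, swap, cup and cap — and on the phase-free green spider, $(\cdot)^\sharp$ is the evident map into the matching ZQ generators of Figure~\ref{figZQGenProp}, whose interpretations in Figure~\ref{figZQInterpretation} agree with those of ZX. The two substantive cases are the green phase spider $\spider{gn}{\alpha}$ and the Hadamard $\node{h}{}$, which I would translate using the $Q_q$ node. By Remark~\ref{remPauliXYZ} and the isomorphism $\phi:\UQuat\to\SU2$, the $z$-axis quaternion $(\alpha,z)=\cos\half[\alpha]+k\sin\half[\alpha]$ is sent to $\text{diag}(e^{-i\alpha/2},e^{i\alpha/2})$, which equals the single-qubit phase gate $\text{diag}(1,e^{i\alpha})$ up to the global factor $e^{i\alpha/2}$; likewise the Hadamard-direction quaternion realises $H$ up to the factor $i$, since $\phi((\pi,\frac{x+z}{\sqrt2}))=-iH$. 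I would therefore send $\spider{gn}{\alpha}$ to a phase-free green spider carrying the corresponding $z$-rotation node on one leg and $\node{h}{}$ to the Hadamard node, each multiplied by a ZQ scalar that repairs this discrepancy; the red spider, if present as a generator, is then handled by conjugating the green translation with the Hadamard translation.

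The main obstacle is exactly this scalar bookkeeping. Because $\phi$ lands in $\SU2$, every $Q_q$ node has determinant one, so the quaternion nodes alone only produce special-unitary single-qubit maps and can never directly supply the global phase $e^{i\alpha/2}$, the factor $i$, or other normalising constants needed to match the ZX interpretations on the nose. The crux is to show that the scalar fragment generated by Figure~\ref{figZQGenProp} is rich enough to realise every required correction: for example, tracing a single $Q_q$ node with a cup and a cap gives the real scalar $2\cos\half[\alpha]$, while sandwiching a $z$-rotation node between the computational-basis state and effect (themselves built from spiders and caps) gives the pure phase $e^{-i\alpha/2}$, so the required phases are in principle available. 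Assembling these into precisely the scalars demanded by the ZX interpretations is where the genuine work lies; once it is done, functoriality of $(\cdot)^\sharp$ completes the argument.

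A self-contained alternative avoids translating all of ZX: verify directly that ZQ expresses the universal gate set of Figure~\ref{figCircuitGenerators} — $\CNOT$ and $\CZ$ from phase-free green spiders, every single-qubit unitary from $\UQuat\cong\SU2$ together with a scalar, and the computational-basis states and effects from spiders and caps — and then invoke the standard decomposition of an arbitrary linear map into state preparations, single-qubit unitaries, $\CNOT$s and post-selections. Either route leads to the same scalar-generation obligation, so I would present the translation from ZX as the primary argument and fall back on the gate-set construction only if it streamlines the scalar analysis.
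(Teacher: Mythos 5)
Your proposal is correct and is essentially the paper's own argument: the paper's Figure~\ref{figZQZXTranslation} defines exactly the translation $F_Q$ you describe (green $\alpha$-spider to a phase-free Z spider with an $(\alpha,z)$ quaternion node and a scalar $e^{i\alpha/2}$, Hadamard to $Q_H\,\lambda_i$, red spider by conjugating with Hadamards), checks that it preserves interpretations on generators by inspection, and concludes from the universality of ZX. The one thing to note is that the ``scalar bookkeeping'' you single out as the crux is a non-issue: $\lambda_c : 0 \to 0$ for arbitrary $c \in \bbC$ is itself a generator of ZQ (Figure~\ref{figZQGenProp}) with $\interpret{\lambda_c} = c$, so every correcting scalar is available as a primitive and nothing needs to be synthesised from traced $Q_q$ nodes.
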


\begin{proof} \label{prfThmZQUniversal}
	The translation from ZX diagrams to ZQ diagrams exhibited in Figure~\ref{figZQZXTranslation}
	preserves interpretations (this is shown by inspection of the interpretations), and since ZX is universal therefore ZQ is universal.
\end{proof}

\begin{figure}[p]
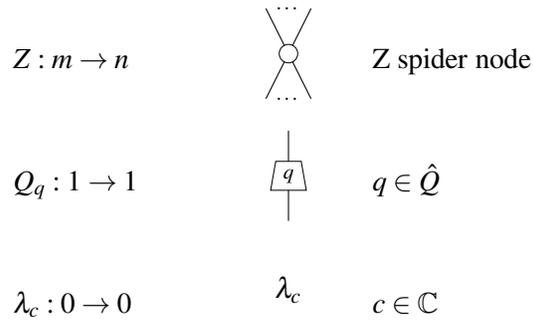

	\begin{center}
		\renewcommand{\arraystretch}{3}
		\tabcolsep=10pt
		\begin{tabular}[p]{m{1in} c m{2in}}
			$Z : m \to n$         & $\spider{smallZ}{}$ & Z spider node                \\
			$Q_{q} : 1 \to 1$     & $\node{qn}{q}$      & $q \in \UQuat$ \\
			$\lambda_c : 0 \to 0$ & $\lambda_c$         & $c \in \bbC$
		\end{tabular}
	\end{center}
	\caption{\label{figZQGenProp}The generators of ZQ as a PROP}
\end{figure}

\begin{figure}[p]
	\begin{align}
		\node{qnt}{q} & :=  \; \tikzfig{ZQ/q_y}
	\end{align}
	\caption{The transpose of $Q_q$ in ZQ\label{figQTranspose}}
\end{figure}

\begin{figure}[p]
	\begin{center}
		\begin{align}
			\interpret{\spider{smallZ}{}} & =
			\begin{pmatrix}
				1      & 0 & \dots  &   & 0      \\
				0      & 0 & \dots  &   & 0      \\
				\vdots &   & \ddots &   & \vdots \\
				0      &   & \dots  & 0 & 0      \\
				0      &   & \dots  & 0 & 1
			\end{pmatrix}
			  & \interpret{\dcup} & = \begin{pmatrix}
				1 \\ 0 \\ 0 \\ 1
			\end{pmatrix} \\[\rowgap]
			\interpret{\node{qn}{q}}      & =
			\begin{pmatrix}
				q_w - iq_z & -q_y + iq_x \\
				-q_y-iq_x  & q_w + iq_z
			\end{pmatrix} \label{eqnZQInterpretQ}&
			\interpret{\dcap}             & = \begin{pmatrix}
				1 & 0 & 0 & 1
			\end{pmatrix}\\[\rowgap]
			\interpret{\lambda_c}         & =
			c
		\end{align}
	\end{center}
	\caption{Interpretations of the generators of ZQ \label{figZQInterpretation}}
\end{figure}

\begin{figure}[p]
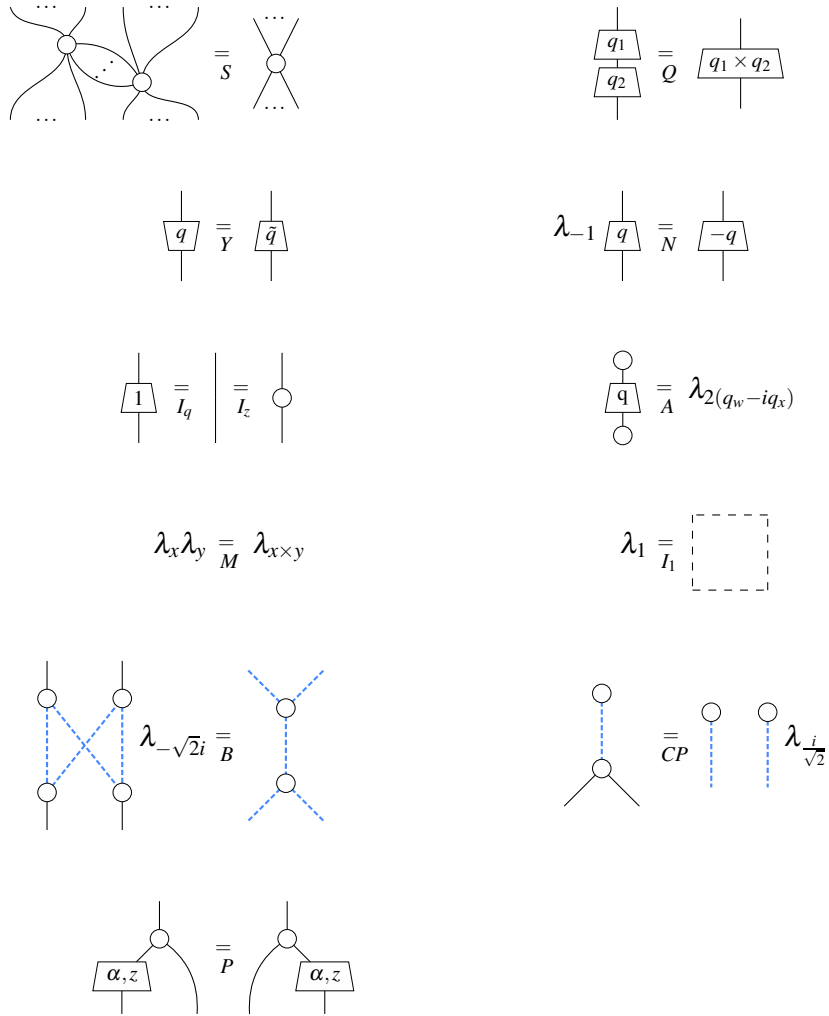

	\centering
	\begin{align*}
		{\tikzfig{ZQ/q_s_lhs}}                      & \by{S} \spider{smallZ}{}               &
		{\tikzfig{ZQ/q_lhs}} & \by{Q} \node{qn}{q_1 \times q_2} \\[2em]
		\node{qnt}{q}                               & \by{Y} \node{qn}{\tilde q}             &
		\lambda_{-1} \node{qn}{q} & \by{N} \node{qn}{-q}\\[2em]
		\node{qn}{1} \by{I_q}                       & \node{none}{} \by{I_z} \node{smallZ}{} &
		{\tikzfig{ZQ/q_l3_lhs}} & \by{A} \lambda_{2(q_w-iq_x)}\\[2em]
		\lambda_x \lambda_y                         & \by{M} \lambda_{x \times y}            &
		\lambda_1 & \by{I_1} \tikzfig{wire/empty} \\[2em]
		{\tikzfig{ZQ/q_b_lhs}} \lambda_{-\sqrt{2}i} & \by{B} {\tikzfig{ZQ/q_b_rhs}}          &
		{\tikzfig{ZQ/q_cp_lhs}} &\by{CP} {\tikzfig{ZQ/q_cp_rhs}}  \lambda_{\frac{i}{\sqrt{2}}} \\[2em]
		{\tikzfig{ZQ/q_phase_lhs}}                  & \by{P} {\tikzfig{ZQ/q_phase_rhs}}      &
	\end{align*}
	\caption{
		The rules of ZQ.\label{figZQRules}
		In rule S the diagonal dots indicate one or more wires, horizontal dots indicate zero or more wires.
		The right hand side of rule $I_1$ is the empty diagram,
		and $\tilde q$ is the quaternion $q$ reflected in the map $j \mapsto -j$.
	}
\end{figure}

\begin{remark} \label{remAngleVectorPair}
	Using angle-vector pair notation (Definition~\ref{defUQUatSO3}) we also have the interpretation:
	\begin{align}
		\interpret{\node{qn}{\alpha, v}} & =
		\begin{pmatrix}
			\cos \frac{\alpha}{2} - i \sin \frac{\alpha}{2} v_z & -i \sin\frac{\alpha}{2} (v_x + i v_y)                \\
			- i \sin\frac{\alpha}{2} (v_x - i v_y)              & \cos{\frac{\alpha}{2}} + i \sin\frac{\alpha}{2}{v_z}
		\end{pmatrix}  \label{eqnZQInterpretAlpha}
	\end{align}
	This is the same as the (transpose of the) operators $R_{v}(\alpha)$ or $e^{i\alpha (v \cdot \sigma)}$ \cite[\S{2.2}]{Wharton2015}.
	This transpose arises from the choice of $\pm Y$ as the canonical Pauli $Y$ matrix.
\end{remark}

\begin{definition}[Hadamard edge] \label{defHadamardEdge}
	In order to decrease diagrammatic clutter we shall use the following notation:
	\begin{align}
		\hedge & := \node{trap}{\pi, \frac{1}{\sqrt{2}}(x + z)} = \node{qn}{H}
	\end{align}
	This is a scaled version of the familiar `Hadamard edge' from e.g. \cite{2019arXiv190203178D},
	and we will use the shorthand $H$ rather than writing out $\pi, \frac{1}{\sqrt{2}}(x + z)$.
	Note that the Hadamard edge is symmetrical,
	but the $Q_H$ quaternion edge decoration is not,
	and so we will require a lemma to show that this Hadamard edge is well defined:
\end{definition}

\begin{lemma} \label{lemHadamardEdgeWellDefinedZQ}
	The Hadamard edge is well defined in ZQ,
	in that:
	\begin{align}
		ZQ \semantic \node{qn}{\text{H}} & = \hedge = \node{qnt}{H} & ZQ \entails \node{qn}{\text{H}} & = \hedge = \node{qnt}{H}
	\end{align}
\end{lemma}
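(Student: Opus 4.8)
The plan is to reduce both equalities to one geometric fact: the Hadamard quaternion $H = (\pi, \frac{1}{\sqrt 2}(x+z))$ lies in the $xz$-plane, so in the component notation of equation~\eqref{eqnZQInterpretQ} it has $q_w = q_y = 0$ and $q_x = q_z = \frac{1}{\sqrt 2}$; in particular its $j$-coefficient $q_y$ vanishes. Since Definition~\ref{defHadamardEdge} sets $\hedge := \node{qn}{H}$ by fiat, in each of the two claims the equality involving $\hedge$ is immediate, and only $\node{qn}{H} = \node{qnt}{H}$ needs an argument.

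For the syntactic equality I would apply rule (Y) to the transpose node, giving $\node{qnt}{H} = \node{qn}{\tilde H}$, where $\tilde H$ is $H$ reflected in the map $j \mapsto -j$. Because this reflection changes only the $j$-coefficient $q_y$, and $q_y = 0$ for $H$, we have $\tilde H = H$; hence $\node{qnt}{H} = \node{qn}{H} = \hedge$ in ZQ. This is the whole of the syntactic half, and it is short precisely because the Hadamard axis was chosen to avoid the $y$-direction.

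For the semantic equality I would compute $\interpret{\node{qn}{H}}$ directly from the generator interpretation, obtaining the symmetric matrix $-iH$ already recorded in Remark~\ref{remPauliXYZ}. The transpose node of Figure~\ref{figQTranspose} is built from $Q_H$ using the cup and cap, so its interpretation is the matrix transpose of $\interpret{\node{qn}{H}}$; symmetry of $-iH$ then yields $\interpret{\node{qnt}{H}} = \interpret{\node{qn}{H}} = \interpret{\hedge}$. Equivalently, one can skip the matrix computation and combine soundness of rule (Y) (Theorem~\ref{thmZQSound}) with the identity $\tilde H = H$ established above.

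The step requiring the most care is the bookkeeping around the transpose convention. The paper has already noted that the angle-vector interpretation of equation~\eqref{eqnZQInterpretAlpha} differs from $R_v(\alpha)$ by a transpose (Remark~\ref{remAngleVectorPair}), and that rule (Y) encodes precisely this transpose as the reflection $j \mapsto -j$. I would therefore fix a single convention for $\interpret{\node{qn}{q}}$, verify that the cup-cap construction of Figure~\ref{figQTranspose} realises the matrix transpose under it, and confirm that ``$j \mapsto -j$ fixes $H$'' is exactly the quaternionic shadow of ``$\interpret{\node{qn}{H}}$ is symmetric''. Once these conventions are pinned down, both halves collapse to the single observation that the Hadamard axis lies in the $xz$-plane, away from the $y$-axis.
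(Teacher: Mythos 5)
Your proposal is correct and follows essentially the same route as the paper: the semantic half is the observation that $\interpret{\node{qn}{H}} = \frac{-i}{\sqrt{2}}\begin{pmatrix}1 & 1\\ 1 & -1\end{pmatrix}$ is symmetric, and the syntactic half is a single application of rule (Y) together with the fact that $j \mapsto -j$ fixes $H$ because its $j$-coefficient vanishes. The extra bookkeeping you describe around the transpose convention is sensible but not something the paper spells out beyond what is already in Remark~\ref{remAngleVectorPair}.
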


\begin{proof} \label{prfLemHadamardEdgeWellDefinedZQ}
	For the semantics:
	\begin{align}
		\interpret{\node{qn}{\text{H}}} = \frac{-i}{\sqrt{2}}
		\begin{pmatrix}
			1 & 1  \\
			1 & -1
		\end{pmatrix} = \interpret{\node{qnt}{H}}
	\end{align}
	Syntactically:
	\begin{align}
		\node{qnt}{\cos \half[\pi] + \sin\half[\pi](i + k)}
		\by{Y}
		\node{qn}{\cos \half[\pi] + \sin\half[\pi](i + k)}
	\end{align}
\end{proof}

\subsection{Translation to and from ZX} \label{secZQZXTranslation}

We define the strict monoidal functors $F_X$ and $F_Q$ on generators in Figure~\ref{figZQZXTranslation}.
In defining this translation we make use of two facts:
Firstly that we can decompose any unit quaternion into
Z then X then Z rotations. This is tantamount to Euler Angle Decomposition
and is performed explicitly in Proposition~\ref{propZQQDecomp}.
Secondly we need to be able to express any complex number
in a rather particular form, which is shown in Lemma~\ref{lemRealNumberCanonicalForm}.

\begin{figure}[p]
	\begin{align}
		F_Q : &   & ZX \to                                                           & ZQ  \nonumber                                                                                                                                \\
		      &   & \spider{gn}{\alpha} \mapsto                                      & \tikzfig{ZQ/q_spider_expand} \lambda_{e^{i\alpha/2}}                                                                                         \\
		      &   & \spider{rn}{\alpha} \mapsto                                      & \left(\lambda_i \node{qn}{\text{H}}\right)^{\tensor n}
		\comp {\tikzfig{ZQ/q_spider_expand}} \lambda_{e^{i\alpha/2}}
		\comp \left(\lambda_i \node{qn}{\text{H}}\right)^{\tensor m} \\
		      &   & \node{h}{} \mapsto                                               & \node{qn}{\text{H}} \lambda_i                                                                                                                \\[\rowgap]
		F_X : &   & ZQ \to                                                           & ZX \nonumber                                                                                                                                 \\
		      &   & \node{qn}{q} \mapsto                                             & \ \tikzfig{ZX/abc} \galpharpi{-\alpha/2}\nonumber\galpharpi{-\beta/2}\galpharpi{-\gamma/2} \tripleblobs\tripleblobs\tripleblobs    \nonumber \\
		      &   & \spider{smallZ}{} \mapsto                                        & \spider{gn}{}                                                                                                                                \\
		      &   & \lambda_{\left(\sqrt{2}\right)^n e^{i\alpha} \cos \beta} \mapsto & \galpharpi{\alpha} \halfblobs \left(\rg{}{\pi}\right)^{\tensor n} \rg{\beta}{-\beta}
	\end{align}
	\caption{Translation from ZQ to ZX and back again.\label{figZQZXTranslation}
		The existence of $\alpha$, $\beta$ and $\gamma$ when translating the Q node is shown in Proposition~\ref{propZQQDecomp},
		likewise the decomposition of any complex number as $\left(\sqrt{2}\right)^n e^{i\alpha} \cos \beta$ is shown in Lemma~\ref{lemRealNumberCanonicalForm}.
	}
\end{figure}

\begin{restatable}{proposition}{propZQQDecomp}\label{propZQQDecomp}
	There exist $\alpha$ and $\gamma \in [0, 2\pi)$, and $\beta \in [0,\pi]$ such that:
	\begin{align}
		q_w + i q_x + j q_y + k q_z = \left(\cos \frac{\alpha}{2} + k \sin \frac{\alpha}{2}\right)\left(\cos \frac{\beta}{2} + i \sin \frac{\beta}{2}\right)\left(\cos \frac{\gamma}{2} + k \sin \frac{\gamma}{2}\right)
	\end{align}
\end{restatable}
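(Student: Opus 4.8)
The plan is to prove the identity by the direct method: expand the right-hand product into its four real quaternion components and then invert the resulting system for the three angles. Writing $a = \alpha/2$, $b = \beta/2$, $c = \gamma/2$ and multiplying the three factors out using $ki = j$, $ik = -j$, $jk = i$ and $k^2 = -1$, the product $(\cos a + k\sin a)(\cos b + i\sin b)(\cos c + k\sin c)$ collapses, after the angle-sum identities, to
\begin{align*}
	q_w &= \cos b\,\cos(a+c), & q_x &= \sin b\,\cos(a-c), \\
	q_y &= \sin b\,\sin(a-c), & q_z &= \cos b\,\sin(a+c).
\end{align*}
Proving the proposition therefore amounts to showing this system is solvable for every unit quaternion with $\alpha,\gamma \in [0,2\pi)$ and $\beta \in [0,\pi]$; equivalently $a, c \in [0,\pi)$ and $b \in [0,\pi/2]$.

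First I would pin down $\beta$. Since $b \in [0,\pi/2]$ forces $\cos b, \sin b \ge 0$, and since $q_w^2 + q_z^2 = \cos^2 b$ and $q_x^2 + q_y^2 = \sin^2 b$ (these sum to $1$ exactly because $q$ is a unit quaternion), the value $\cos b = \sqrt{q_w^2 + q_z^2}$ determines $\beta \in [0,\pi]$ uniquely. Next, in the generic case $0 < \beta < \pi$ both $\cos b$ and $\sin b$ are strictly positive, so $(q_w,q_z)$ and $(q_x,q_y)$ are nonzero and I can read off $a+c$ and $a-c$ as the arguments of the complex numbers $q_w + i q_z$ and $q_x + i q_y$, each determined in $\mathbb{R}/2\pi\mathbb{Z}$. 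Solving the linear system for $s = a+c$ and $d = a-c$ gives $a = (s+d)/2$, $c = (s-d)/2$, which I then reduce into the required window. The degenerate cases $\beta = 0$ (so $q_x = q_y = 0$, only $a+c$ constrained) and $\beta = \pi$ (so $q_w = q_z = 0$, only $a-c$ constrained) I would dispatch separately by fixing the free angle, e.g. $\gamma = 0$, and reading off the remaining one.

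The hard part, and the step I would spend the most care on, is the range bookkeeping in the generic case. The subtlety is precisely the double cover: $\UQuat \cong \SU2$ surjects two-to-one onto $\SO3$ (Definition~\ref{defUQUatSO3}, with $\ker\psi = \{1,-1\}$), and a $z$--$x$--$z$ Euler decomposition lives most naturally on $\SO3$, where the classical theorem guarantees representatives with $\alpha,\gamma \in [0,2\pi)$ and $\beta \in [0,\pi]$. Lifting the three $\SO3$ factors back along $\psi$ introduces independent signs, so the lifted product reproduces $q$ only up to an overall factor of $\pm 1$; tracking the parity of the half-angle solutions $a = (s+d)/2$, $c = (s-d)/2$ and choosing representatives that simultaneously land in $[0,\pi)$ and fix that sign is the crux of the argument. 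This is also exactly where rule (N), $\lambda_{-1} Q_q = Q_{-q}$, becomes relevant, since the sign of $q$ is precisely the ambiguity the decomposition must resolve. Once the angles are confined correctly, substituting back into the four component equations verifies the identity and completes the proof.
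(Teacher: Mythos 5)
Your overall strategy is the same as the paper's: expand the triple product into its four real components, determine $\beta$ from a sum of two squared components, recover $\alpha\pm\gamma$ as arguments of the two resulting complex numbers, and treat $\beta\in\{0,\pi\}$ separately. In fact your expansion is the correct one for the product as written: multiplying out with $ki=j$, $ik=-j$, $jk=i$ really does give $q_w=\cos\frac{\beta}{2}\cos\frac{\alpha+\gamma}{2}$, $q_z=\cos\frac{\beta}{2}\sin\frac{\alpha+\gamma}{2}$, $q_x=\sin\frac{\beta}{2}\cos\frac{\alpha-\gamma}{2}$, $q_y=\sin\frac{\beta}{2}\sin\frac{\alpha-\gamma}{2}$, whereas the paper's stated components have the roles of the $i$ and $k$ parts interchanged (they are the components of the $i$--$k$--$i$ product, not the $k$--$i$--$k$ one); so you should use $q_w^2+q_z^2=\cos^2\frac{\beta}{2}$ where the paper uses $q_w^2+q_x^2$, exactly as you do.

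The genuine gap is the step you yourself flag as ``the crux'' and then do not carry out: showing that the half-angle representatives can always be confined to $a,c\in[0,\pi)$ (equivalently $\alpha,\gamma\in[0,2\pi)$). This cannot be done in general. The solution set for $(a,c)$ is $\{(a_0+p\pi,\,c_0+q\pi)\}$, but only the shifts with $p+q$ even preserve the quaternion; the odd ones produce $-q$. The window $[0,\pi)^2$ contains exactly one representative of the full lattice, and its parity may be the wrong one, so the decomposition as stated can fail by an overall sign. Concretely, take $q=\tfrac{1}{\sqrt{2}}(1-i)$: then $\sin^2\frac{\beta}{2}=q_x^2+q_y^2=\tfrac12$ forces $\beta=\tfrac{\pi}{2}$, and one needs $\alpha+\gamma\equiv 0$ and $\alpha-\gamma\equiv 2\pi \pmod{4\pi}$ simultaneously, which is impossible for $\alpha,\gamma\in[0,2\pi)$ (while $-q$ decomposes with $\alpha=\gamma=\pi$). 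This is precisely the $\ker\psi=\{1,-1\}$ ambiguity you identify, and it can only be repaired by widening one of the angle ranges or by tolerating a factor of $-1$ (absorbed by rule (N) at the diagrammatic level, where $F_X$ is actually applied). To be fair, the paper's own proof has the same unresolved issue --- it ends with a remark deferring ``the choices we made'' to spider and $\pi$-commutation rules in ZX --- but since you explicitly single this step out as the hard part, a complete proof would need either the parity analysis showing when the sign obstruction occurs, or a restatement of the ranges (e.g.\ allowing $\alpha\in[0,4\pi)$) under which it disappears.
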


The proof of this lemma is in \S\ref{prfPropZQQDecomp}.

\begin{restatable}{lemma}{lemRealNumberCanonicalForm}\label{lemRealNumberCanonicalForm}
	Any complex number $c$ can be expressed uniquely as $\left(\sqrt{2}\right)^n e^{i\alpha} \cos \beta$ where
	$n \in \bbN$, $\alpha \in [0,2\pi)$, $\beta \in [0, \pi)$
	and where $n$ is chosen to be the least $n$ such that $\sqrt{2}^n \geq \abs{c}$.
\end{restatable}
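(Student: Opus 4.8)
The plan is to split off the modulus and phase of $c$ separately, letting the power of $\sqrt{2}$ absorb the bulk of the modulus while $\cos\beta$ accounts for a residual factor confined to a bounded interval. The structural fact driving everything is that $(\sqrt{2})^n$ is strictly increasing and unbounded over $n\in\bbN$, so for each $c$ there is a unique least $n$ with $(\sqrt{2})^n\geq\abs{c}$. This choice simultaneously fixes $n$ and forces the residual ratio $r:=\abs{c}/(\sqrt{2})^n$ to lie in $(0,1]$; in fact, once $n\geq 1$, minimality of $n$ means $(\sqrt{2})^{n-1}<\abs{c}$, whence $r\in(1/\sqrt{2},1]$. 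Because $r\leq 1$, it is a legitimate cosine value, which is precisely what makes the representation possible.

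For existence I would first set aside $c=0$ as degenerate: here $\abs{c}=0$ forces $\cos\beta=0$, i.e.\ $\beta=\pi/2$, and the phase $\alpha$ plays no role. For $c\neq 0$, write $c=\abs{c}\,e^{i\theta}$ with $\theta\in[0,2\pi)$, take $n$ minimal as above, and set $\beta:=\arccos(r)$ and $\alpha:=\theta$. Since $r\in(0,1]$ we get $\beta\in[0,\pi/2)$ with $\cos\beta=r>0$, and then $(\sqrt{2})^n e^{i\alpha}\cos\beta=(\sqrt{2})^n e^{i\theta}r=\abs{c}\,e^{i\theta}=c$, as required.

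For uniqueness I would proceed in three stages. First, $n$ is determined outright, since the clause ``least $n$ with $(\sqrt{2})^n\geq\abs{c}$'' refers only to $\abs{c}$. Second, with $n$ fixed, taking moduli in $c=(\sqrt{2})^n e^{i\alpha}\cos\beta$ yields $\abs{\cos\beta}=\abs{c}/(\sqrt{2})^n$, pinning down $\abs{\cos\beta}$; once the sign of $\cos\beta$ is settled, injectivity of $\cos$ on $[0,\pi]$ recovers $\beta$ uniquely. Third, dividing $c$ by the now-fixed nonzero real number $(\sqrt{2})^n\cos\beta$ leaves $e^{i\alpha}$, determining $\alpha\in[0,2\pi)$ uniquely.

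The step I expect to be the crux is settling the sign of $\cos\beta$. Over the full interval $[0,\pi)$ cosine takes both signs, and the quantity $(\sqrt{2})^n e^{i\alpha}\cos\beta$ is invariant under the simultaneous substitution $\beta\mapsto\pi-\beta$ and $\alpha\mapsto\alpha+\pi\pmod{2\pi}$ (using $\cos(\pi-\beta)=-\cos\beta$ and $e^{i\pi}=-1$). Hence a literal reading admits two representations whenever $\cos\beta\neq\pm 1$, so genuine uniqueness requires fixing a sign convention on the modulus factor $(\sqrt{2})^n\cos\beta$. Demanding that this factor be nonnegative is the natural choice; it amounts to $\cos\beta\geq 0$, i.e.\ $\beta\in[0,\pi/2]$, which is exactly the range that the existence construction produces. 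With this convention in place the three-stage argument closes, and the only residual slack is the phase in the degenerate case $c=0$.
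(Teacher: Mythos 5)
Your proof is correct and follows essentially the same route as the paper's: write $c = \abs{c}e^{i\alpha}$, pick the least $n \in \bbN$ with $\sqrt{2}^n \geq \abs{c}$, and recover $\beta$ from $\cos\beta = \abs{c}/\sqrt{2}^n$. Your observation about the sign ambiguity (invariance under $\beta \mapsto \pi - \beta$, $\alpha \mapsto \alpha + \pi$) is a genuine refinement rather than a digression: the paper's own proof resolves it only implicitly, by requiring $\cos\beta\,\sqrt{2}^n = \abs{c} \geq 0$, and it likewise leaves unaddressed the non-uniqueness of $\alpha$ in the degenerate case $c = 0$, both of which you handle explicitly.
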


The proof of this lemma is in \S\ref{prfLemRealNumberCanonicalForm}.

\section{Conclusion}

This paper introduces the ZQ calculus,
showing it is sound, complete, and universal for qubit quantum computation.
What's more this paper has shown that simply extending the ZX-calculus
to allow arbitrary quaternions as phases would be fundamentally incompatible
with ZX's founding principle of spiders.
Despite this the completeness result for ZQ was shown via an equivalence with the ZX-calculus.

Additionally ZQ is, to the author's knowledge,
the first graphical calculus for qubits that uses a non-commutative phase group.
Indeed the only other qubit graphical calculus that uses a phase group that
is not a subgroup of $[0,2\pi)$
is the graphical calculus for Spekkens' Toy Bit Model in Ref.~\cite{Spekkens}.
This change in algebraic structure allows for the expression of the rules of ZQ
with a far simpler, although not eliminated, appeal to side conditions in comparison with the rules of Universal ZX.

With ZQ now described the author hopes that it will serve
as an Intermediate Representation for quantum circuit synthesis,
allowing it to benefit from the optimisation strategies of both TriQ \cite{TRIQ}
and ZX \cite{Backens2020circuit, 2019arXiv190203178D, kissinger2019reducing}.
The optimisation results of TriQ are not solely down to the use of quaternions
but also include routing and gate-decomposition concerns, which we have not addressed here.
Further practical work would include implementing such strategies,
and implementing ZQ in proof assistants such as Quantomatic \cite{Quantomatic} or PyZX \cite{PyZX}.
Further theoretical work would seek to eliminate the reliance on side conditions in the rules of ZQ,
and potentially adapt this calculus to express quaternionic quantum computing directly.

\FloatBarrier

\bibliographystyle{eptcs}
\bibliography{references}

\appendix

\section{Commutativity of Monoids over Hilbert Space proof}

\propQubitModelsCommutative*

\begin{proof} \label{prfPropQubitModelsCommutative}
	Taking a generic monoid $M$
	we look at the interpretation of the image of the generators of $\monprop_M$.
	For brevity we will just write $\interpret{D}$
	for $D$ a diagram in $\monprop_M$ to mean the interpretation
	of the image of $D$ in the model.
	We proceed by looking at the span of the interpretations of the elements of $M$.
	\begin{align}
		W & := span \set{\interpret{\state{white}{e}},\interpret{\state{white}{a}},\interpret{\state{white}{a'}}, \dots}
	\end{align}
	\begin{itemize}
		\item If $\dim W = 0$ then the monoid has only one element, $e$,
		      and so is commutative.
		\item If $\dim W > 0$ then
		      there either $M = \set{e}$ (and so commutative),
		      or there is some other element $a \in M$.
		      This implies that $\interpret{\state{white}{e}} \neq 0$, since:
		      \begin{align}
			      \text{assume} \quad \interpret{\state{white}{e}} & = 0                                                                       \\
			      \therefore \interpret{\tikzfig{monoid/md_0}}     & = 0 \quad \text{(any element $a$)}                                        \\
			      \interpret{\tikzfig{monoid/md_0}}                & = \interpret{\state{white}{a}}  \quad  \text{($e$ is the unit for $m$)}   \\
			      \therefore \interpret{\state{white}{a}}          & =\interpret{\state{white}{e}}  \contradiction \quad \text{$e$ and $a$ are distinct}
		      \end{align}
		\item If $\dim W = 1$ then without loss of generality:
		      \begin{align}\interpret{\state{white}{e}}                 & = \begin{pmatrix}
				      1 \\ 0
			      \end{pmatrix}                                                       \\
			      \therefore \interpret{\state{white}{a}}      & = \begin{pmatrix}
				      \lambda_a \\ 0
			      \end{pmatrix} \; \forall a\; \quad \text{ some } \lambda_a \in \bbC \\
			      \interpret{\tikzfig{monoid/md_0}}            & = \interpret{\state{white}{a}}  \quad  \text{$e$ is the unit for $m$}              \\
			      \therefore \interpret{\binary[white]{}}      & = \begin{pmatrix}
				      1 & \cdot & \cdot & \cdot \\
				      0 & \cdot & \cdot & \cdot
			      \end{pmatrix} \text{ where $\cdot$ represents unknowns}             \\
			      \therefore \interpret{\tikzfig{monoid/md_1}} & = \begin{pmatrix}
				      \lambda_a \lambda_b \\ 0
			      \end{pmatrix}= \interpret{\tikzfig{monoid/md_2}}
		      \end{align}
		\item if $\dim W = 2$ then the states span all of $\bbC^2$:
		      \begin{align}
			      \interpret{\state{white}{e}}                 & = \begin{pmatrix}
				      1 \\ 0
			      \end{pmatrix}        \quad \text{w.l.o.g.}                                                               \\
			      \tikzfig{monoid/md_0}                        & \by{unit_l} \state{white}{a} \quad \forall a                                                                            \\
			      \therefore \interpret{\tikzfig{monoid/m_2}}  & = \interpret{\node{none}{}} = \interpret{\tikzfig{monoid/m_1}} \quad \text{since } span \set{\state{white}{a}} = \bbC^2 \\
			      \therefore \interpret{\binary[white]{}}      & = \begin{pmatrix}
				      1 & 0 & 0 & \cdot \\
				      0 & 1 & 1 & \cdot
			      \end{pmatrix} \text{ where $\cdot$ represents unknowns}                                                  \\
			      \therefore \interpret{\tikzfig{monoid/m_3}}  & = \interpret{\binary[white]{}}                                                                                          \\
			      \therefore \interpret{\tikzfig{monoid/md_1}} & = \interpret{\tikzfig{monoid/md_2}}
		      \end{align}
	\end{itemize}
\end{proof}

\section{Quaternion / Rotation and Quaternion Decomposition Proofs}

\begin{proposition} \label{propZQPhi}
	The map $\phi$, given by
	\begin{align}
		  & \phi : & (\text{Unit Quaternions}, \times) & \to (2\times 2\text{ complex matrices}, \comp) \\
		  & \phi:  & q_w + iq_x + jq_y + kq_z          & \mapsto \begin{pmatrix}
			q_w - iq_z & q_y - iq_x \\
			-q_y-iq_x  & q_w + iq_z
		\end{pmatrix}
	\end{align}
	is a group homomorphism with trivial kernel.
\end{proposition}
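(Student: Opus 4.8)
The plan is to treat $\phi$ as the restriction of a single $\bbR$-linear map $\Phi : \bbH \to M_2(\bbC)$ given by the same formula on all quaternions, not merely the unit ones. Each entry of the target matrix is visibly a real-linear function of the coordinates $(q_w, q_x, q_y, q_z)$, so $\Phi$ is $\bbR$-linear. Since multiplication in $\bbH$ and composition $\comp$ of matrices are both $\bbR$-bilinear, the identity $\Phi(pq) = \Phi(p)\Phi(q)$ holds for all $p,q$ as soon as it holds when $p$ and $q$ each range over the real basis $\set{1, i, j, k}$. This reduces the homomorphism property from an identity in eight real parameters to a finite check on the products of four matrices.

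First I would compute the four basis images $\Phi(1), \Phi(i), \Phi(j), \Phi(k)$ by substituting the appropriate unit coordinate into the defining formula. Then I would verify that these satisfy Hamilton's relations: $\Phi(1) = I$ is the identity, $\Phi(i)^2 = \Phi(j)^2 = \Phi(k)^2 = -I$, and the cyclic products $\Phi(i)\Phi(j) = \Phi(k)$, $\Phi(j)\Phi(k) = \Phi(i)$, $\Phi(k)\Phi(i) = \Phi(j)$ (their reverses then follow). Because these matrix identities mirror $i^2 = j^2 = k^2 = -1$ and $ij = k$ and so on, bilinearity upgrades them to $\Phi(pq) = \Phi(p)\Phi(q)$ for arbitrary $p,q$, and in particular for unit quaternions. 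Since $\Phi(1) = I$ and every unit quaternion $q$ has an inverse $q^{-1}$ with $\Phi(q)\Phi(q^{-1}) = \Phi(1) = I$, each image is invertible, so $\phi$ is a genuine homomorphism of groups (its codomain being the unit group inside the monoid $(M_2(\bbC), \comp)$).

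For the kernel, suppose $\phi(q) = I$. Comparing the diagonal entries gives $q_w - iq_z = 1 = q_w + iq_z$, forcing $q_w = 1$ and $q_z = 0$; setting the two off-diagonal entries to $0$ then forces $q_x = q_y = 0$. Hence $q = 1$ is the only quaternion mapping to the identity, so $\ker \phi$ is trivial.

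The main obstacle is not conceptual but a matter of sign bookkeeping in the finite verification, since the factors of $i$ must be tracked carefully: the Pauli-type matrices obey $\sigma_x\sigma_y\sigma_z = iI$ whereas $ijk = -1$, so an incorrect assignment of $i,j,k$ to the three matrices would yield an \emph{anti}-homomorphism rather than a homomorphism. A useful consistency check is that $\det \phi(q) = q_w^2 + q_x^2 + q_y^2 + q_z^2$ is the squared quaternion norm; its multiplicativity is exactly what a homomorphism into the invertible matrices demands, and it equals $1$ on unit quaternions, confirming that the image in fact lands in $\SU2$ as asserted in Definition~\ref{defQuaternions}.
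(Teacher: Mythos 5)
Your reduction to the real basis $\set{1,i,j,k}$ via $\bbR$-bilinearity is a genuinely different and tidier route than the paper's, which instead expands all four entries of $\phi(q_1)\phi(q_2)$ and of $\phi(q_1\times q_2)$ in full eight-parameter generality and compares them term by term; your kernel argument and determinant sanity check are also fine. The problem is that the finite verification you defer is exactly where the proof breaks, and your own caveat about anti-homomorphisms is prophetic. Substituting unit coordinates into the stated formula gives
\begin{align*}
\Phi(i) = \begin{pmatrix} 0 & -i \\ -i & 0 \end{pmatrix}, \qquad
\Phi(j) = \begin{pmatrix} 0 & 1 \\ -1 & 0 \end{pmatrix}, \qquad
\Phi(k) = \begin{pmatrix} -i & 0 \\ 0 & i \end{pmatrix},
\end{align*}
and while $\Phi(1)=I$ and $\Phi(i)^2=\Phi(j)^2=\Phi(k)^2=-I$ do hold, the cyclic relation fails:
\begin{align*}
\Phi(i)\Phi(j) = \begin{pmatrix} i & 0 \\ 0 & -i \end{pmatrix} = -\Phi(k) = \Phi(ji),
\end{align*}
and likewise for the other two cyclic products. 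So the map defined by the displayed matrix is an \emph{anti}-homomorphism, $\Phi(pq)=\Phi(q)\Phi(p)$, with respect to ordinary matrix multiplication; your argument as written cannot be completed into a proof of the literal claim, though it does establish that $q \mapsto \phi(q)^T$ (equivalently $q \mapsto \phi(\bar q)$) is a homomorphism.

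To be fair, this defect is inherited from the source rather than introduced by your method: the paper's own proof records $LHS_{(1,1)} = (ww'-zz'-yy'-xx') - i(wz' - xy' + yx' + w'z)$ against $RHS_{(1,1)} = (ww'-xx'-yy'-zz') - i(wz'+xy'-yx'+zw')$, which differ in the sign of $xy'-yx'$ (the off-diagonal entries disagree similarly), so the asserted equality $\phi(q_1)\phi(q_2)=\phi(q_1\times q_2)$ does not actually hold for the formula as printed. The statement can be repaired by transposing the matrix, conjugating the quaternion, or reading $\comp$ in diagrammatic rather than applicative order, after which your basis-reduction strategy goes through verbatim. As it stands, though, the deferred check is not a formality to be waved through --- it is precisely the step that detects the sign error.
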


\begin{proof} \label{prfPropZQPhi}

	Write $q_1$ as $w + x + y + z$ and $q_2$ as $w' + x' + y' + z'$:

	\begin{itemize}
		\item Show that $\phi(1) = \begin{pmatrix}
				      1 & 0 \\ 0 & 1
			      \end{pmatrix}$:
		      \begin{align}
			      \phi(1) = & \begin{pmatrix}
				      1 - i0 & 0 - i0 \\
				      -0 -i0 & 1 + i0
			      \end{pmatrix}
			      =         \begin{pmatrix}
				      1 & 0 \\ 0 & 1
			      \end{pmatrix}
		      \end{align}
		\item Show that $\phi(q_1)\phi(q_2) = \phi(q_1 \times q_2)$:
		      \begin{align}
			      LHS =         &
			      \begin{pmatrix}
				      w - iz & y - ix \\
				      -y-ix  & w + iz
			      \end{pmatrix} \comp
			      \begin{pmatrix}
				      w' - iz' & y' - ix' \\
				      -y'-ix'  & w' + iz'
			      \end{pmatrix} \\
			      LHS_{(1,1)} = & ( (ww' - zz'-yy' - xx') - i(wz' - xy' + yx' + w'z ) ) \\
			      LHS_{(1,2)} = & ( ((w -iz)(y' - i x') + (y-ix)(w'+iz'))               \\
			      =             & ( wy'- zx' + yw'+xz') - i  (wx' + xw' - yz' + zy')    \\
			      LHS_{(2,1)} = & ( (-y - ix)(w'-iz') + (w+iz)(-y'-ix') )               \\
			      =             & - (yw' + xz' + wy' - zx') - i (wx' + xw' - yz' + zy') \\
			      LHS_{(2,2)} = & ( (-y -ix)(y'-ix') + (w + iz)(w' + i z') )            \\
			      =             & ( ww'-xx'-yy'-zz') +i( wz' -xy' + yx' +zw')           \\
			      \nonumber\\
			      RHS_{(1,1)} = & (  ww' - xx'-yy'-zz') - i( wz' + xy'- yx' + zw')      \\
			      RHS_{(1,2)} = & ( wy' - xz' + yw') + zx' - i( wx' + xw' +yz' - zy')   \\
			      RHS_{(2,1)} = & -( wy' - xz' + yw') + zx')-i( wx' + xw' +yz' - zy')   \\
			      RHS_{(2,2)} = & ( ww' - xx'-yy'-zz') + i( wz' + xy'- yx' + zw')
		      \end{align}
		\item Show that $\phi(q) = \begin{pmatrix}
				      1 & 0 \\ 0 & 1
			      \end{pmatrix} \implies q = 1$:
		      Looking at the matrix entries individually:
		      \begin{align}
			      1 = & \ q_w - iq_z & \implies q_w & = 1 & \text{and}\quad  q_z & = 0 \\
			      0 = & -q_y - iq_x  & \implies q_y & = 0 & \text{and}\quad  q_x & = 0 \\
			      \therefore q = & 1
		      \end{align}
	\end{itemize}

\end{proof}

\propZQQDecomp*

\begin{proof} \label{prfPropZQQDecomp}
	\begin{align}
		RHS = & \left(\cos \frac{\alpha}{2} + k \sin \frac{\alpha}{2}\right)\left(\cos \frac{\beta}{2} + i \sin \frac{\beta}{2}\right)\left(\cos \frac{\gamma}{2} + k \sin \frac{\gamma}{2}\right) \\
		=     & \left(\cos \frac{\alpha}{2} \cos \frac{\beta}{2} \cos \frac{\gamma}{2} - \sin \frac{\alpha}{2} \cos \frac{\beta}{2} \sin \frac{\gamma}{2}\right) +                                 \\
		      & i \left(\cos \frac{\alpha}{2} \cos \frac{\beta}{2} \sin \frac{\gamma}{2} + \sin \frac{\alpha}{2} \cos \frac{\beta}{2} \cos \frac{\gamma}{2}\right) + \nonumber                     \\
		      & j \left(\cos \frac{\alpha}{2} \sin \frac{\beta}{2} \sin \frac{\gamma}{2} - \sin \frac{\alpha}{2} \sin \frac{\beta}{2} \cos \frac{\gamma}{2}\right) + \nonumber                     \\
		      & k \left(\cos \frac{\alpha}{2} \sin \frac{\beta}{2} \cos \frac{\gamma}{2} + \sin \frac{\alpha}{2} \sin \frac{\beta}{2} \sin \frac{\gamma}{2}\right) \nonumber                       \\
		=     & \cos \frac{\beta}{2} \left(\cos \frac{\alpha + \gamma}{2} + i \sin \frac{\alpha + \gamma}{2}\right) +
		j \sin \frac{\beta}{2} \left(\sin \frac{\gamma - \alpha}{2} -  i \cos \frac{\gamma - \alpha}{2}\right)\nonumber
	\end{align}
	From this we gather:
	\begin{align}
		q_w = & \cos \frac{\beta}{2} \cos \frac{\alpha + \gamma}{2} &
		q_x =& \cos \frac{\beta}{2} \sin \frac{\alpha + \gamma}{2} \\
		q_y = & \sin \frac{\beta}{2} \sin \frac{\gamma - \alpha}{2} &
		q_z =&\sin \frac{\beta}{2} \cos \frac{\gamma - \alpha}{2}
	\end{align}
	And finally use these to determine values of $\alpha$, $\beta$ and $\gamma:$
	\begin{itemize}
		\item $q_w^2 + q_x^2 = \cos^2 \frac{\beta}{2}$ determines up to two different possibilities of $\beta \in [0, 2\pi)$.
		      We will enforce $\beta \in [0, \pi]$ to make this unique and $\cos \frac{\beta}{2}$ non-negative.
		\item If $\beta = 0$ then set $\gamma = 0$, use $q_w$ and $q_x$ to determine $\alpha$
		\item Likewise if $\beta = \pi$ set $\gamma = 0$, use $q_y$ and $q_z$ to determine $\alpha$
		\item Otherwise determine $\alpha + \gamma / 2$ from $q_w$ and $q_x$,
		      and $\alpha - \gamma / 2$ from $q_y$ and $q_z$; their sum and difference give $2\alpha$ and $\gamma$ respectively.
	\end{itemize}

	The choices we made in this proof we justify by noting that we can represent these choices
	by certain applications of the spider rule (in the case $\beta = 0$) and $\pi$-commutativity rules
	(relating $(\alpha, \beta,\gamma) \sim (\alpha + \pi, -\beta, \gamma + \pi)$) in ZX.
\end{proof}

\lemRealNumberCanonicalForm*

\begin{proof} \label{prfLemRealNumberCanonicalForm}
	Express the complex number $c$ as $re^{i\alpha}$, where $r \in \bbR_{\geq 0}$.
	This matches our choice of $\alpha \in [0, 2\pi)$.
	For all $r$ there is at least one $n$ where $\sqrt{2}^n \geq r$ and so we can find a least such $n$.
	Once we know $n$ there is a unique $\beta \in [0,\pi)$ such that $\cos{\beta} \sqrt{2}^n = r$.
\end{proof}

\section{Soundness of ZQ} \label{secZQSound}

In this section we go through each of the rules given in Figure~\ref{figZQRules},
showing that the interpretations of the left and right hand sides of the rules are equal.

\begin{proposition} \label{propZQSSound}
	The rule S is sound:
	\begin{align}
		\interpret{{\tikzfig{ZQ/q_s_lhs_dots}}} & = \interpret{\spidermn{smallZ}{}{c+d}{a+b}}
	\end{align}
	Where there are $k \geq 1$ wires represented by \reflectbox{$\ddots$} in the middle of the left hand side.
\end{proposition}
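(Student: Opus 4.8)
The plan is to establish soundness by direct evaluation of both sides in the computational basis, as the general strategy of \S\ref{secZQSound} indicates. First I would isolate the single fact needed about the phase-free Z spider, read off from Figure~\ref{figZQInterpretation}: a Z spider with $m$ inputs and $n$ outputs has interpretation $\ketbra{0\ldots0}{0\ldots0} + \ketbra{1\ldots1}{1\ldots1}$, so in the computational basis its tensor component depends only on whether all of its legs carry one common bit, independently of how the legs are split into inputs and outputs. I would record in particular that this component is the indicator $[\,\text{all legs equal}\,]$ and is symmetric under permutations of the legs, so the orientation of the $k$ connecting wires will be immaterial.

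Next I would introduce coordinates for the left-hand side. Calling the two spiders $Z^{(1)}$ and $Z^{(2)}$, I assign bit values to their external legs (which together number $c+d$ inputs and $a+b$ outputs, matching the right-hand side) and bit values $r_1,\ldots,r_k$ to the $k$ wires joining them. Composing along those wires contracts the $r$-indices, so for a fixed assignment of the external bits the left-hand component equals
\[
\sum_{r \in \set{0,1}^k} \bigl[\, Z^{(1)} \text{ all legs equal} \,\bigr]\,\bigl[\, Z^{(2)} \text{ all legs equal} \,\bigr],
\]
where each bracket is the indicator above, evaluated on that spider's external legs together with $r_1,\ldots,r_k$.

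I would then evaluate the sum. The first bracket forces the external legs of $Z^{(1)}$ and all the $r_\ell$ to share one common value; the second does the same for $Z^{(2)}$. Since the $r_\ell$ are shared between the two brackets, a nonzero summand requires these two common values to coincide. Hence, if all the external legs equal a common bit $v$, the unique surviving term is $r_1 = \cdots = r_k = v$ and contributes $1$; and if the external legs are not all equal, no choice of the $r_\ell$ satisfies both brackets, so the sum is $0$. The resulting component is exactly the indicator that all external legs are equal, i.e.\ the component of $\interpret{\spidermn{smallZ}{}{c+d}{a+b}}$, which gives the stated equality.

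The point I would flag, as opposed to the routine index bookkeeping, is the hypothesis $k \geq 1$: it is precisely the presence of at least one shared wire that couples the two common values. With $k = 0$ the contraction is empty, the two brackets become independent, and the result would factor as a tensor product of two disconnected spiders rather than fuse into a single one; so $k \geq 1$ is exactly what the fusion rule needs.
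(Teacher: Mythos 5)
Your proof is correct, but it takes a different route from the paper: the paper's entire proof of Proposition~\ref{propZQSSound} is a one-line citation of the original Z spider law (Theorem~6.12 of Ref.~\cite{Coecke08}), of which this rule is a phase-free special case, whereas you verify the identity directly by evaluating matrix components in the computational basis. Your contraction argument is sound — the interpretation in Figure~\ref{figZQInterpretation} is indeed the indicator $[\text{all legs equal}]$, it is symmetric in its legs and independent of the input/output split, and summing over the $k$ internal bits collapses to the single term where both spiders share a common value, with no stray scalar. What the citation buys is economy and a direct link to the established spider literature; what your computation buys is a self-contained argument that makes explicit the one point the citation leaves implicit, namely exactly why the hypothesis $k \geq 1$ is needed (with $k=0$ the two indicators decouple and the diagram denotes a tensor product of two spiders rather than a fused one). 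Either is acceptable here; if you wanted to match the paper you would simply observe that the phase-free Z spider is a special case of the decorated spider of Ref.~\cite{Coecke08} and invoke the spider theorem.
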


\begin{proof} \label{prfPropZQSSound}
	This is simply a restating of the original Z spider law from \cite[Theorem~6.12]{Coecke08}.
\end{proof}

\begin{proposition} \label{propZQQSound}
	The rule $Q$ is sound:
	\begin{align}
		\interpret{{\tikzfig{ZQ/q_lhs}}} & = \interpret{\node{qn}{q_1 \times q_2}}
	\end{align}
\end{proposition}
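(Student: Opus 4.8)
The plan is to reduce the soundness of rule $Q$ to the fact, already established in Proposition~\ref{propZQPhi}, that the assignment $q \mapsto \phi(q)$ is a group homomorphism from $\UQuat$ into the $2 \times 2$ complex matrices under composition. The only genuinely new observation needed is that the interpretation $\interpret{\node{qn}{q}}$ of a single $Q$ node, recorded in equation~\eqref{eqnZQInterpretQ}, is exactly this map $\phi(q)$; everything else is bookkeeping about how diagrammatic composition in the left-hand side translates into matrix multiplication. In other words, the heavy calculation has been done once and for all in Proposition~\ref{propZQPhi}, and this soundness statement is essentially a corollary of it.

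First I would read off the structure of the left-hand side $\tikzfig{ZQ/q_lhs}$: it is the vertical composite of two $Q$ nodes sharing a single wire, with $Q_{q_1}$ stacked above $Q_{q_2}$. Because $\interpret{-}$ is a strict monoidal functor, the interpretation of a vertical composite is the matrix product of the interpretations of its parts, read in the bottom-to-top order of the diagrammatic convention. Combining this with the identification of each node's interpretation with $\phi$, I obtain $\interpret{\tikzfig{ZQ/q_lhs}} = \interpret{\node{qn}{q_1}} \comp \interpret{\node{qn}{q_2}} = \phi(q_1)\,\phi(q_2)$.

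Next I would invoke Proposition~\ref{propZQPhi} directly, which yields $\phi(q_1)\,\phi(q_2) = \phi(q_1 \times q_2)$. Since $\phi(q_1 \times q_2)$ is by definition $\interpret{\node{qn}{q_1 \times q_2}}$, these equalities chain together to give exactly the claimed identity $\interpret{\tikzfig{ZQ/q_lhs}} = \interpret{\node{qn}{q_1 \times q_2}}$, which completes the proof.

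The one point requiring care, and the only plausible obstacle, is aligning the orientation conventions so that the non-commutative product lands in the correct order: the bottom-to-top reading of the diagram must be matched against the left-to-right order of the matrix product $\phi(q_1)\phi(q_2)$ and against the quaternion product $q_1 \times q_2$. I would pin this down by checking on the basis quaternions (or against Remark~\ref{remPauliXYZ}) that stacking $Q_{q_1}$ over $Q_{q_2}$ corresponds to $q_1$ acting after $q_2$, so that the resulting product is $q_1 \times q_2$ and not $q_2 \times q_1$; any mismatch here is purely conventional and would at most swap the labels $q_1$ and $q_2$ in the statement.
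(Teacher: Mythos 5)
Your proposal is correct and follows essentially the same route as the paper: the paper's proof simply states that soundness of rule $Q$ follows from $\phi$ being a group isomorphism, with the left-hand side being multiplication in \SU2\ and the right-hand side multiplication in \UQuat, which is exactly your reduction to Proposition~\ref{propZQPhi} via the identification $\interpret{\node{qn}{q}} = \phi(q)$. Your extra care about matching the bottom-to-top diagrammatic order with the order of the matrix and quaternion products is a sensible check that the paper leaves implicit.
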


\begin{proof}  \label{prfPropZQQSound}
	Follows from $\phi$ (see Definition~\ref{defQuaternions}) being a group isomorphism.
	The left hand side is multiplication in \SU2, the right hand side is multiplication in \UQuat.
\end{proof}

\begin{proposition} \label{propZQYSound}
	The rule $Y$ is sound:
	\begin{align}
		\interpret{\node{qn}{q_w + iq_x - jq_y + kq_z}} & = \interpret{\node{qnt}{q_w + iq_x + jq_y + kq_z}}
	\end{align}
\end{proposition}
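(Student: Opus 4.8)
The plan is to prove this entirely by direct matrix computation, since both sides are single-qubit operators (maps $1 \to 1$) whose interpretations are explicit $2 \times 2$ complex matrices, so there is nothing diagrammatic left to do once the transpose node is unfolded.

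First I would unfold the transpose node $\node{qnt}{q}$ on the right-hand side. By its definition in Figure~\ref{figQTranspose} it is obtained from $\node{qn}{q}$ by bending the input and output wires around with a cup and a cap. Reading the interpretations $\interpret{\dcup} = (1,0,0,1)^T$ and $\interpret{\dcap} = (1,0,0,1)$ off Figure~\ref{figZQInterpretation}, this wire-bending implements precisely the matrix transpose, so $\interpret{\node{qnt}{q}} = \interpret{\node{qn}{q}}^T$. I can then read $\interpret{\node{qn}{q}}$ off equation~\eqref{eqnZQInterpretQ} and transpose it simply by exchanging its two off-diagonal entries.

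Next I would compute the left-hand side $\interpret{\node{qn}{\tilde q}}$ directly from equation~\eqref{eqnZQInterpretQ}, substituting $q_y \mapsto -q_y$. The useful structural observation is that in that formula $q_w$ and $q_z$ occur only on the diagonal, while $q_x$ and $q_y$ occur only off the diagonal. Hence the diagonal entries of the two sides agree automatically, involving neither the transpose nor the $j$-component, and the entire comparison reduces to the two off-diagonal entries: the transpose exchanges them, while the reflection $j \mapsto -j$ flips the sign of their $q_y$-part, and I would check that these two operations land on the same pair.

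The crux, and the only place any care is needed, is this off-diagonal sign bookkeeping. It is not obvious in advance that the correct quaternion for the transpose is the $j$-reflection $\tilde q$ rather than the full conjugate $\bar q$ or $q$ itself, and matching every sign is delicate. This is exactly the subtlety anticipated in Remark~\ref{remAngleVectorPair}: because the canonical Pauli $Y$ is fixed only up to sign, the chosen quaternion convention already carries a built-in transpose twist, and it is that choice which singles out the $j$-reflection, and no other sign pattern, as the one matching the matrix transpose. Once the conventions are fixed the remaining work is a short finite check, needing none of the heavier structure such as the homomorphism property of $\phi$.
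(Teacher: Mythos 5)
Your route is exactly the paper's: unfold the transpose node of Figure~\ref{figQTranspose} into cups and caps, observe that with $\interpret{\dcup}$ and $\interpret{\dcap}$ as given this wire-bending is the matrix transpose, and then compare entries of explicit $2\times2$ matrices. The paper's proof is precisely this computation, and your structural observations (the diagonal entries are untouched by both operations, so everything reduces to the off-diagonal pair) are correct.

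However, the step you rightly single out as the crux is exactly where your plan, as written, would fail. You propose to read the matrix from Equation~\eqref{eqnZQInterpretQ}, whose $(1,2)$ entry is printed as $-q_y+iq_x$; but Remark~\ref{remPauliXYZ}, Remark~\ref{remAngleVectorPair}, Proposition~\ref{propZQPhi} and the Hadamard computation in Lemma~\ref{lemHadamardEdgeWellDefinedZQ} all take that entry to be $q_y-iq_x$, and the two conventions are genuinely different. With the entry as printed in~\eqref{eqnZQInterpretQ} the two off-diagonal entries, $-q_y+iq_x$ and $-q_y-iq_x$, differ only in the sign of $q_x$; exchanging them (the transpose) is therefore the reflection $i\mapsto-i$, not $j\mapsto-j$, and your final check does not close for $\tilde q$ as defined in rule (Y). With the $(1,2)$ entry $q_y-iq_x$ the off-diagonal entries are $q_y-iq_x$ and $-q_y-iq_x$, which differ only in the sign of $q_y$, so the exchange is exactly $j\mapsto-j$ and the verification succeeds, yielding
\begin{align}
	\interpret{\node{qn}{q}}^T =
	\begin{pmatrix}
		q_w - iq_z & -q_y - iq_x \\
		q_y - iq_x & q_w + iq_z
	\end{pmatrix}
	= \interpret{\node{qn}{q_w + iq_x - jq_y + kq_z}}.
\end{align}
This latter convention is the one the paper's own proof silently adopts, and it is the one you must use for your argument to go through; otherwise your (correct) method would lead you to conclude that the rule pairs the transpose with $i\mapsto-i$ instead.
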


\begin{proof} \label{prfPropZQYSound}
	The action of the cups and caps in Figure~\ref{figQTranspose}
	(where we defined the diagrammatic transpose),
	is to enact the transpose in the interpretation:
	\begin{align}
		\interpret{{\tikzfig{ZQ/q_y}}} =  \begin{pmatrix}
			q_w - iq_z & -q_y - iq_x \\
			q_y-iq_x   & q_w + iq_z
		\end{pmatrix}
		=                                 \interpret{\node{qn}{q_w + iq_x - jq_y + kq_z}}
	\end{align}
\end{proof}

\begin{proposition} \label{propZQNSound}
	The rule $N$ is sound:
	\begin{align}
		\interpret{\lambda_{-1} \node{qn}{q}}= \interpret{\node{qn}{-q}}
	\end{align}
\end{proposition}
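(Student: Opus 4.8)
The plan is to prove soundness by directly evaluating both sides of the rule $N$ using the interpretations fixed in Figure~\ref{figZQInterpretation}. The left-hand side is the scalar $\lambda_{-1}$ placed alongside the node $Q_q$, so by the interpretation of $\lambda_c$ its interpretation is simply $-1$ times the matrix $\interpret{\node{qn}{q}}$ recorded in Equation~\ref{eqnZQInterpretQ}. The right-hand side is the node $Q_{-q}$, whose interpretation is obtained by substituting the negated quaternion into that same matrix formula.

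First I would record that negating the quaternion $q = q_w + iq_x + jq_y + kq_z$ negates each of its four real coordinates, so that $(-q)_w = -q_w$ and likewise for the $x$, $y$ and $z$ components. The key observation is then that every entry of the defining matrix in Equation~\ref{eqnZQInterpretQ} is an $\bbR$-linear combination of $q_w, q_x, q_y, q_z$; equivalently, the map $\phi$ of Definition~\ref{defQuaternions} is $\bbR$-linear as a map out of the four-dimensional real algebra. From this linearity it follows immediately that $\interpret{\node{qn}{-q}} = \phi(-q) = -\phi(q) = -\interpret{\node{qn}{q}}$, which is exactly the interpretation of the left-hand side, completing the argument.

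Should one prefer a purely computational check instead of appealing to linearity, I would compare the two matrices entry by entry: scaling $\interpret{\node{qn}{q}}$ by $-1$ flips the sign of each of its four entries, while substituting $-q_w, -q_x, -q_y, -q_z$ into the template of Equation~\ref{eqnZQInterpretQ} produces precisely those same negated entries. There is no genuine obstacle here, since the whole content is the real-linearity of the coordinate-to-matrix assignment; the only mild point of care is the sign bookkeeping in the off-diagonal entries, where both the $i$-coefficients and the $q_y$-coefficients change sign, but these flips occur identically on the two sides.
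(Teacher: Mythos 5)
Your proposal is correct and matches the paper's proof in substance: the paper simply multiplies the interpretation matrix of $Q_q$ by $-1$ and observes entry by entry that the result equals the interpretation of $Q_{-q}$, which is exactly your ``purely computational check.'' Your additional framing via the $\bbR$-linearity of $\phi$ is a harmless (and tidy) repackaging of the same observation, not a different route.
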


\begin{proof} \label{prfPropZQNSound}
	\begin{align}
		LHS =  -1 \begin{pmatrix}
			q_w - iq_z & q_y - iq_x \\
			-q_y-iq_x  & q_w + iq_z
		\end{pmatrix}
		=      \begin{pmatrix}
			-q_w + iq_z & -q_y + iq_x \\
			q_y+iq_x    & -q_w - iq_z
		\end{pmatrix} = \, RHS
	\end{align}
\end{proof}

\begin{proposition} \label{propZQISound}
	The rules $I_q$ and $I_z$ are sound:
	\begin{align}
		\interpret{\node{qn}{1}} = \interpret{\node{none}{}} = \interpret{\node{smallZ}{}}
	\end{align}
\end{proposition}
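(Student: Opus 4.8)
The plan is to verify that all three diagrams in the statement carry the same interpretation, namely the $2\times2$ identity matrix, and then conclude the chain of equalities by transitivity. Since soundness of $I_q$ is the claim $\interpret{\node{qn}{1}} = \interpret{\node{none}{}}$ and soundness of $I_z$ is the claim $\interpret{\node{none}{}} = \interpret{\node{smallZ}{}}$, it suffices to compute each of the three interpretations separately and observe they coincide.

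First I would handle the quaternion node. Substituting $q = 1$, i.e. $(q_w, q_x, q_y, q_z) = (1,0,0,0)$, into the interpretation of the $Q$ node given in Equation~\eqref{eqnZQInterpretQ}, every off-diagonal entry depends only on $q_x$ and $q_y$ and therefore vanishes, while each diagonal entry collapses to $q_w = 1$; this yields $\interpret{\node{qn}{1}} = \begin{pmatrix} 1 & 0 \\ 0 & 1 \end{pmatrix}$. Next, the bare wire $\node{none}{}$ is the identity morphism of the compact closed PROP on the object $\Hilbert$, so $\interpret{\node{none}{}}$ is by definition the $2\times2$ identity, in agreement with the ZX identity wire recorded in Figure~\ref{figZXInterpretation}.

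Finally I would specialise the general Z-spider interpretation of Figure~\ref{figZQInterpretation} to one input and one output. Since the spider $Z : m \to n$ acts as $\ketbra{0\dots 0}{0\dots 0} + \ketbra{1\dots 1}{1\dots 1}$, taking $m = n = 1$ reduces this to $\ketbra{0}{0} + \ketbra{1}{1}$, which is again the identity. With all three morphisms shown to interpret to the same matrix, the displayed equalities follow at once. I do not expect a genuine obstacle here; the only point demanding a little care is correctly specialising the $1\to1$ case out of the displayed staircase-of-zeros matrix for the general Z spider, after which the argument is a direct evaluation.
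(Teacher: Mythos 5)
Your proposal is correct and follows the same approach as the paper, which simply observes that all three diagrams have interpretation $\begin{pmatrix} 1 & 0 \\ 0 & 1 \end{pmatrix}$; you merely spell out the three evaluations the paper leaves implicit.
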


\begin{proof} \label{prfPropZQISound}
	They all have the interpretation $\begin{pmatrix}
			1 & 0 \\ 0 & 1
		\end{pmatrix}$.
\end{proof}

\begin{proposition} \label{propZQASound}
	The rule $A$ is sound:
	\begin{align}
		\interpret{{\tikzfig{ZQ/q_l3_lhs}}} = \interpret{\lambda_{2(q_w-iq_x)}}
	\end{align}
\end{proposition}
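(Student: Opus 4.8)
The plan is to prove this exactly as the preceding soundness propositions in this section: rule $A$ equates a closed, input- and output-free diagram \tikzfig{ZQ/q_l3_lhs} with the scalar $\lambda_{2(q_w - iq_x)}$, so both sides interpret as $1\times 1$ matrices and it suffices to evaluate the left-hand side directly and check that it equals $2(q_w - iq_x) = \interpret{\lambda_{2(q_w-iq_x)}}$.

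First I would read off the interpretation of each generator appearing in \tikzfig{ZQ/q_l3_lhs} from Figure~\ref{figZQInterpretation}: the phase-free Z spider as the matrix carrying $1$s only in its extreme corners, each cup as $(1,0,0,1)^T$ and each cap as $(1,0,0,1)$, and the $Q_q$ node as the matrix in \eqref{eqnZQInterpretQ}. Wherever the diagram bends the legs of the $Q_q$ node I would use the transpose convention fixed in Figure~\ref{figQTranspose}, whose interpretation was already computed in the proof of Proposition~\ref{propZQYSound}; this is the only place where orientation matters.

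Next I would compose these linear maps in the order dictated by the diagram, taking tensor products for juxtaposed pieces and ordinary matrix composition for stacked pieces. Because the diagram is closed, the cups, caps and Z spider contract the four entries of the $Q_q$ matrix against fixed scalars, collapsing everything to a single complex number. Collecting the resulting terms---being careful with the $\pm i$ factors and with the signs attached to $q_x$, $q_y$ and $q_z$ in \eqref{eqnZQInterpretQ}---yields $2(q_w - iq_x)$, as required.

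The computation is conceptually routine; the only genuine obstacle is bookkeeping. In particular one must track which matrix entries of $Q_q$ survive the contraction and with what sign, and correctly apply the transpose orientation from Figure~\ref{figQTranspose}, so that the off-diagonal $iq_x$ contribution appears with the correct sign rather than cancelling against the $q_y$ contributions. No appeal to any rule of ZQ is needed here, only the interpretations of Figure~\ref{figZQInterpretation}.
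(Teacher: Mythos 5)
Your proposal takes exactly the paper's route: the paper evaluates the left-hand side directly as $\begin{pmatrix}1 & 1\end{pmatrix}\comp\interpret{\node{qn}{q}}\comp\begin{pmatrix}1 \\ 1\end{pmatrix}$, sums the four matrix entries so that the $q_y$ and $q_z$ contributions cancel, and is left with $2(q_w-iq_x)$. One caveat on the bookkeeping you rightly flag: the matrix the paper substitutes here has $q_y - iq_x$ in the top-right entry (matching Proposition~\ref{propZQPhi}) rather than the $-q_y+iq_x$ of Equation~\eqref{eqnZQInterpretQ}, so reading the entries literally from Figure~\ref{figZQInterpretation} would instead give $2(q_w-q_y)$ --- the sign conventions are precisely where the care is needed.
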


\begin{proof} \label{prfPropZQASound}
	\begin{align}
		\interpret{{\tikzfig{ZQ/q_l3_lhs}}} = &
		\begin{pmatrix}
			1 & 1
		\end{pmatrix} \comp
		\begin{pmatrix}
			q_w - iq_z & q_y - iq_x \\
			-q_y-iq_x  & q_w + iq_z
		\end{pmatrix} \comp
		\begin{pmatrix}
			1 \\ 1
		\end{pmatrix} \\
		=                                     & q_w - iq_z + q_y - iq_x -q_y-iq_x + q_w + iq_z \\
		=                                     & 2(q_w - iq_x)                                  \\
		=                                     & \interpret{\lambda_{2(q_w - iq_x)}}
	\end{align}
\end{proof}

\begin{proposition} \label{propZQMSound}
	The rule $M$ is sound:
	\begin{align}
		\interpret{\lambda_x \lambda_y} = \interpret{\lambda_{x \times y}}
	\end{align}
\end{proposition}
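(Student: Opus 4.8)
The plan is to evaluate both sides directly using the interpretation of the scalar generator given in Figure~\ref{figZQInterpretation}, namely $\interpret{\lambda_c} = c$ for any $c \in \bbC$. The key observation is that $\lambda_x$ and $\lambda_y$ are both morphisms $0 \to 0$, so their juxtaposition $\lambda_x \lambda_y$ is their monoidal product, and since $\interpret{}$ is a monoidal functor and $\bbC$ is the monoidal unit, this monoidal product is computed as multiplication in $\bbC$.

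First I would rewrite the left-hand side, reading $\lambda_x \lambda_y$ as the monoidal product of the two scalar diagrams, so that
\begin{align}
\interpret{\lambda_x \lambda_y} = \interpret{\lambda_x} \cdot \interpret{\lambda_y} = x \cdot y.
\end{align}
Then I would evaluate the right-hand side directly from the interpretation table as $\interpret{\lambda_{x \times y}} = x \times y$, and finally observe that $x \cdot y = x \times y$ is just ordinary multiplication of complex numbers, which closes the equality.

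The only point that requires any care is the identification of the tensor product of two scalar morphisms with their product in $\bbC$, but this is immediate in a PROP whose monoidal unit is $\bbC$, since $\interpret{}$ is monoidal; consequently there is no genuine obstacle, and this soundness check amounts to a one-line computation rather than a substantive argument.
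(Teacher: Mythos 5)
Your proposal is correct and matches the paper's proof, which simply observes that both sides have interpretation $x \times y$; you have merely spelled out the (standard) fact that the interpretation of a juxtaposition of $0 \to 0$ morphisms is the product of their interpretations in $\bbC$. No issues.
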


\begin{proof} \label{prfPropZQMSound}
	Both sides have interpretation $x \times y$.
\end{proof}

\begin{proposition} \label{propZQL'Sound}
	The rule $I_\lambda$ is sound:
	\begin{align}
		\interpret{\lambda_1} = \interpret{\epsilon}
	\end{align}
	Where $\epsilon$ is the empty diagram.
\end{proposition}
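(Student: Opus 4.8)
The plan is to evaluate both sides of the rule directly from the interpretation of the generators in Figure~\ref{figZQInterpretation}, just as in the preceding scalar soundness checks (Propositions~\ref{propZQMSound} and~\ref{propZQISound}). Since the rule $I_\lambda$ asserts $\lambda_1 = \epsilon$, it suffices to show that the two diagrams denote the same scalar.

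First I would compute the left-hand side. The interpretation of the scalar generator is $\interpret{\lambda_c} = c$, so instantiating $c = 1$ gives immediately
\begin{align}
    \interpret{\lambda_1} = 1.
\end{align}
Next I would treat the right-hand side. In the compact closed PROP underlying ZQ the empty diagram $\epsilon$ is the identity morphism on the monoidal unit, that is, on the object $0$. Under the interpretation functor this object is sent to $\Hilbert^{\tensor 0} \cong \bbC$, and the identity on it is the identity linear map $\bbC \to \bbC$, which as a $1 \times 1$ matrix is precisely the scalar $1$. Hence
\begin{align}
    \interpret{\epsilon} = 1,
\end{align}
and the two sides coincide.

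The computation itself is trivial; the only point that warrants any care is the structural observation that the empty diagram denotes the identity on the tensor unit and therefore evaluates to the scalar $1$, which is a fact about the compact closed PROP rather than a calculation. Accordingly I expect no genuine obstacle here, and the proof reduces to the single line that both sides have interpretation $1$.
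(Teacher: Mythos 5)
Your proposal is correct and matches the paper's proof, which simply states that both sides have interpretation $1$; you have merely spelled out the routine details of why $\interpret{\lambda_1} = 1$ and why the empty diagram denotes the scalar $1$. No further comment is needed.
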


\begin{proof} \label{prfPropZQL'Sound}
	Both sides have interpretation $1$.
\end{proof}

\begin{proposition} \label{propZQBSound}
	The rule $B$ is sound:
	\begin{align}
		\interpret{{\tikzfig{ZQ/q_b_lhs}} \lambda_{-\sqrt{2}i}} = \interpret{{\tikzfig{ZQ/q_b_rhs}}}
	\end{align}
\end{proposition}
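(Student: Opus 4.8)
The plan is to follow exactly the recipe used in the preceding soundness propositions: evaluate the interpretation of each side of rule $B$ as an explicit matrix and check that the two agree. Since $\interpret{\cdot}$ is a monoidal functor, it sends the vertical composition $\comp$ in the diagram to matrix multiplication and the horizontal composition $\tensor$ to the Kronecker product, so the computation reduces to assembling the matrices of the constituent generators recorded in Figure~\ref{figZQInterpretation} — the phase-free $Z$ spiders $\spider{smallZ}{}$, the quaternion edges $\node{qn}{q}$ (in particular the Hadamard edges $\node{qn}{\text{H}}$), and the cups and caps $\dcup$, $\dcap$ — together with the scalar $\lambda_{-\sqrt{2}i}$.

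First I would read off the matrix of the left-hand diagram $\tikzfig{ZQ/q_b_lhs}$ by decomposing it into layers. Each $Z$ spider of arity $m \to n$ contributes its diagonal block $\ket{0\dots0}\bra{0\dots0} + \ket{1\dots1}\bra{1\dots1}$; each Hadamard edge contributes $\interpret{\node{qn}{\text{H}}} = \tfrac{-i}{\sqrt{2}}\left(\begin{smallmatrix}1 & 1 \\ 1 & -1\end{smallmatrix}\right)$, as recorded in Lemma~\ref{lemHadamardEdgeWellDefinedZQ}; and the cups and caps merely bend wires. Multiplying these factors in the order dictated by the connectivity, and then scaling by $-\sqrt{2}i$, yields a single matrix $M_{\mathrm{LHS}}$. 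I would then compute $M_{\mathrm{RHS}}$ for $\tikzfig{ZQ/q_b_rhs}$ in the same fashion and verify $M_{\mathrm{LHS}} = M_{\mathrm{RHS}}$ entrywise.

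Because rule $B$ is the bialgebra (strong complementarity) law, the two sides differ only in how the $Z$ spiders and Hadamard edges are wired together, so the arithmetic is entirely routine once the factors are laid out; no group-theoretic input is needed here, unlike in rules $Q$, $Y$ or $N$. The only subtlety, and the point I expect to be the main obstacle, is purely bookkeeping: getting the ordering of the Kronecker factors correct where wires cross, and tracking the accumulated powers of $\tfrac{1}{\sqrt{2}}$ and $-i$ introduced by the Hadamard edges. The scalar $\lambda_{-\sqrt{2}i}$ is present precisely to rescale the left-hand side so that its overall normalisation matches the right, and the proof succeeds exactly when these factors are seen to cancel to $-\sqrt{2}i$.
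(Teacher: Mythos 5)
Your proposal matches the paper's proof: Proposition~\ref{propZQBSound} is established there by exactly the direct computation you describe, layering the interpretations of the $Z$ spiders, the four (respectively five) Hadamard edges each contributing $\frac{-i}{\sqrt{2}}\left(\begin{smallmatrix}1 & 1\\ 1 & -1\end{smallmatrix}\right)$, and the swap, then checking that both sides reduce to $\frac{-i}{\sqrt{2}^{3}}$ times the same $4\times 4$ matrix. The approach and the bookkeeping concerns you flag are precisely what the paper carries out, so the plan is correct and essentially identical.
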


\begin{proof} \label{prfPropZQBSound}
	\begin{align}
		LHS = & -\sqrt{2} i \times \begin{pmatrix}
			1 & 0 & 0 & 0 \\
			0 & 0 & 0 & 1
		\end{pmatrix}^{\tensor 2} \comp
		\left(\frac{-i}{\sqrt{2}}\begin{pmatrix}
			1 & 1 \\ 1 & -1
		\end{pmatrix}\right)^{\tensor 4} \comp \\
		      & \left(id_2 \tensor \begin{pmatrix}
				1 & 0 & 0 & 0 \\
				0 & 0 & 1 & 0 \\
				0 & 1 & 0 & 0 \\
				0 & 0 & 0 & 1
			\end{pmatrix} \tensor \id_2\right) \comp
		\begin{pmatrix}
			1 & 0 \\
			0 & 0 \\
			0 & 0 \\
			0 & 1
		\end{pmatrix}^{\tensor 2} \\
		=     & \frac{-i}{\sqrt{2}^3} \begin{pmatrix}
			1 & 1  & 1  & 1 \\
			1 & -1 & -1 & 1 \\
			1 & -1 & -1 & 1 \\
			1 & 1  & 1  & 1
		\end{pmatrix}                                 \\
		RHS = & \left(\frac{-i}{\sqrt{2}}\right)^5 \times \begin{pmatrix}
			1 & 1 \\ 1 & -1
		\end{pmatrix}^{\tensor 2}
		\comp
		\begin{pmatrix}
			1 & 0 \\
			0 & 0 \\
			0 & 0 \\
			0 & 1
		\end{pmatrix} \comp \\ &  \begin{pmatrix}
			1 & 1 \\ 1 & -1
		\end{pmatrix} \comp
		\begin{pmatrix}
			1 & 0 & 0 & 0 \\
			0 & 0 & 0 & 1
		\end{pmatrix} \comp  \begin{pmatrix}
			1 & 1 \\ 1 & -1
		\end{pmatrix}^{\tensor 2} \\
		=     & \left(\frac{-i}{\sqrt{2}}\right)^5 \times \begin{pmatrix}
			2 & 2  & 2  & 2 \\
			2 & -2 & -2 & 2 \\
			2 & -2 & -2 & 2 \\
			2 & 2  & 2  & 2
		\end{pmatrix}
		= \frac{-i}{\sqrt{2}^3} \times \begin{pmatrix}
			1 & 1  & 1  & 1 \\
			1 & -1 & -1 & 1 \\
			1 & -1 & -1 & 1 \\
			1 & 1  & 1  & 1
		\end{pmatrix}
	\end{align}
\end{proof}

\begin{proposition} \label{propZQCPSound}
	The rule $CP$ is sound:
	\begin{align}
		\interpret{{\tikzfig{ZQ/q_cp_lhs}}} = \interpret{{\tikzfig{ZQ/q_cp_rhs}}  \lambda_{\frac{i}{\sqrt{2}}}}
	\end{align}
\end{proposition}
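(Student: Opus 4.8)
The plan is to prove soundness by direct matrix computation, in exactly the same style as every preceding proposition in this section (compare the proof of rule $B$). I would read the left-hand diagram ${\tikzfig{ZQ/q_cp_lhs}}$ as a composite of generators, substitute the interpretations from Figure~\ref{figZQInterpretation} --- the Z spider, the $Q_q$ node of Equation~\eqref{eqnZQInterpretQ}, and the relevant cups and caps --- and contract the resulting expression of tensor products and compositions down to a single matrix. I would then repeat the computation for the right-hand diagram ${\tikzfig{ZQ/q_cp_rhs}}$, multiplying through by the scalar $\frac{i}{\sqrt{2}}$ supplied by $\lambda_{\frac{i}{\sqrt{2}}}$, and finally check that the two matrices agree entrywise.

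Concretely the steps are: first, fix the bottom-to-top reading and the tensor factorisation of each side so that the order of composition is unambiguous; second, expand any $Q_q$ node in terms of its quaternion components $q_w, q_x, q_y, q_z$; third, perform the contractions, letting the Z-spider matrix carry out the copying and merging of wires; and fourth, compare the two normalised matrices and confirm that the factor $\frac{i}{\sqrt{2}}$ is precisely what reconciles their normalisations.

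The main obstacle I expect is bookkeeping rather than anything conceptual: keeping the tensor ordering and the placement of swaps, cups, and caps consistent across the two sides, so that each contraction is taken against the same index convention. As the proof of rule $B$ illustrates, the characteristic risk here is an off-by-a-scalar error from miscounting Hadamard or cup/cap factors, and the lone scalar $\frac{i}{\sqrt{2}}$ on the right-hand side signals that the whole verification turns on getting exactly one such normalisation factor correct. Once the conventions are pinned down, the remainder reduces to routine $2\times 2$ and $4\times 4$ arithmetic.
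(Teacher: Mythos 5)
Your plan matches the paper's proof exactly: the paper verifies rule $CP$ by substituting the interpretations of the generators and contracting both sides to the row vector $\frac{-i}{\sqrt{2}}\begin{pmatrix}2 & 0 & 0 & 0\end{pmatrix}$, with the scalar $\frac{i}{\sqrt{2}}$ on the right-hand side reconciling the normalisations just as you anticipate. The remaining work is only the routine arithmetic you describe, so the proposal is correct and takes essentially the same approach.
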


\begin{proof} \label{prfPropZQCPSound}
	\begin{align}
		LHS = & \begin{pmatrix}
			1 & 1
		\end{pmatrix} \comp \frac{-i}{\sqrt{2}} \begin{pmatrix}
			1 & 1 \\ 1 & -1
		\end{pmatrix} \comp \begin{pmatrix}
			1 & 0 & 0 & 0 \\ 0 & 0 & 0 & 1
		\end{pmatrix}
		=  \frac{-i}{\sqrt{2}} \begin{pmatrix}
			2 & 0 & 0 & 0
		\end{pmatrix} \\
		RHS = & \frac{i}{\sqrt{2}} \left(\begin{pmatrix}
			1 & 1
		\end{pmatrix} \comp \frac{-i}{\sqrt{2}} \begin{pmatrix}
			1 & 1 \\ 1 & -1
		\end{pmatrix} \right)^{\tensor 2}
		=  \frac{-i}{\sqrt{2}} \begin{pmatrix}
			2 & 0 & 0 & 0
		\end{pmatrix}
	\end{align}
\end{proof}

\begin{proposition} \label{propZQPSound}
	The rule $P$ is sound:
	\begin{align}
		\interpret{{\tikzfig{ZQ/q_phase_lhs}}} = \interpret{{\tikzfig{ZQ/q_phase_rhs}}}
	\end{align}
\end{proposition}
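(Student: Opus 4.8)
The plan is to follow exactly the template used for every other rule in this section (\S\ref{secZQSound}): reduce the claimed equality of diagrams to an equality of explicit complex matrices and then verify it by direct computation. Concretely, I would first read off the composite represented by each side of rule~P as a sequence of tensor products ($\tensor$) and compositions ($\comp$) of generators, and then substitute the interpretations fixed in Figure~\ref{figZQInterpretation} --- the phase-free Z-spider matrix, the $Q_q$ matrix of \eqref{eqnZQInterpretQ}, and the cup and cap --- so as to obtain one explicit matrix for each side.

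The structural observation that should make the computation short is that, as P is the \emph{phase} rule, I expect the $Q$ nodes appearing in it to be $z$-axis rotations, i.e.\ quaternions of the form $\cos\frac{\alpha}{2} + k\sin\frac{\alpha}{2}$. By \eqref{eqnZQInterpretQ} (equivalently the angle-vector form \eqref{eqnZQInterpretAlpha}) such a rotation is interpreted by a \emph{diagonal} $2\times 2$ matrix, namely $\mathrm{diag}(e^{-i\alpha/2}, e^{i\alpha/2})$. Since the phase-free Z-spider is interpreted by the matrix that is $1$ on the all-zeros and all-ones diagonal entries and $0$ elsewhere, pre- or post-composing it with a diagonal rotation merely rescales those two surviving entries. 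I would therefore reduce both sides to a scalar multiple of the Z-spider's support and check that the two scalings agree; if instead the rule involves a general quaternion, the same matrix method applies unchanged, only with more bookkeeping.

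The main obstacle I anticipate is convention-tracking rather than mathematics: I must keep the sign and phase conventions of this paper straight --- in particular the scale factor of $\pm i$ recorded in Remark~\ref{remPauliXYZ} and the $j \mapsto -j$ transpose convention noted in Remark~\ref{remAngleVectorPair} --- and I must make sure the arities on the two sides agree so that the orders of $\tensor$ and $\comp$ on the left and right genuinely match before multiplying. Once the conventions are pinned down, the verification is a routine entry-by-entry comparison of the two resulting matrices, in the same style as the proofs of Propositions~\ref{propZQBSound} and \ref{propZQCPSound} above.
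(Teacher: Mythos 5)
Your proposal matches the paper's proof: rule P does indeed involve a $z$-axis rotation, interpreted as the diagonal matrix $\mathrm{diag}(q_w - iq_z,\, q_w + iq_z)$, and the paper verifies soundness exactly as you describe, by composing this diagonal matrix onto either leg of the Z-spider and checking that both sides yield the same matrix with the two surviving entries scaled by $q_w - iq_z$ and $q_w + iq_z$. The only caveat is your closing hedge --- for a genuinely non-diagonal quaternion the two sides would not agree, so the diagonality is essential rather than a convenience --- but your primary expectation is correct and the verification is the same routine entry-by-entry comparison the paper performs.
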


\begin{proof} \label{prfPropZQPSound}
	\begin{align}
		LHS = & \begin{pmatrix}
			1 & 0 & 0 & 0 \\
			0 & 0 & 0 & 1
		\end{pmatrix} \comp \left(\begin{pmatrix}
				q_w - iq_z & 0           \\
				0          & q_w + i q_z
			\end{pmatrix} \tensor id_2 \right) \\
		=     & \begin{pmatrix}
			q_w - iq_z & 0 & 0 & 0           \\
			0          & 0 & 0 & q_w + i q_z
		\end{pmatrix}                                                              \\
		RHS = & \begin{pmatrix}
			1 & 0 & 0 & 0 \\
			0 & 0 & 0 & 1
		\end{pmatrix} \comp \left(id_2 \tensor \begin{pmatrix}
				q_w - iq_z & 0           \\
				0          & q_w + i q_z
			\end{pmatrix}\right)  \\
		=     & \begin{pmatrix}
			q_w - iq_z & 0 & 0 & 0           \\
			0          & 0 & 0 & q_w + i q_z
		\end{pmatrix}
	\end{align}
\end{proof}

\section{Completeness of ZQ} \label{secZQComplete}

The completion of ZQ is achieved by finding an equivalence between ZQ and ZX as PROPs.
We already know that ZX is complete \cite{UniversalComplete}
and this proof was by a similar equivalence with ZW, which was shown to be complete in Ref.~\cite{ZW}.
Equivalence is shown by finding a translation of the generators from ZX to ZQ and vice versa
(\S\ref{secZQZXTranslation}),
before then translating all of the rules from ZX into ZQ (\S\ref{secZXZQTranslatedRules}),
and keeping these as rules in ZQ.
Finally one has to ensure that any diagram translated from ZQ to ZX and back again
can be proven to be equivalent to the original ZQ diagram (\S\ref{secCompletenessZXZQBack}).
In symbols this is:
\begin{align}
	                                         & \interpret{D_1} = \interpret{D_2} \quad \text{Two diagrams in ZQ} \\
	ZX \entails                              & F_X D_1 = F_X D_2                                                 \\
	\S\ref{secZXZQTranslatedRules} \entails  & F_Q F_X D_1 = F_Q F_X D_1                                         \\
	\S\ref{secCompletenessZXZQBack} \entails & D_1 = F_Q F_X D_1 \qquad \text{and} \qquad D_2 = F_Q F_X D_2      \\
	\therefore ZQ \entails                   & D_1 = F_Q F_X D_1 = F_Q F_X D_2 = D_2
\end{align}

\subsection{Proving the translated ZX rules} \label{secZXZQTranslatedRules}

We aim to show that the rules translated from ZX are all derivable from the rules in Figure~\ref{figZQRules}, which we will refer to as $ZQ$.
We will use the ZX ruleset from \cite[Figure~2]{VilmartZX}, quoted here as Figure~\ref{figZXRules},
and refer to individual ZX rules as $ZX_{\text{rule name}}$.
To save space, we will assume applications of the $M$ rule (scalar multiplication) in the statements of the propositions.

\begin{lemma} \label{lemZQTranslationZ} Translation of the Z spider
	\begin{align}
		ZQ \entails F_Q\left(\spider{gn}{}\right) = \spider{smallZ}{}
	\end{align}
\end{lemma}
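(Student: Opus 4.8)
The plan is to unfold the definition of the translation functor $F_Q$ from Figure~\ref{figZQZXTranslation} at zero phase and then collapse the resulting ZQ diagram with the identity-flavoured rules of Figure~\ref{figZQRules}. By definition $F_Q$ sends $\spider{gn}{\alpha}$ to the expansion $\tikzfig{ZQ/q_spider_expand}$ together with the scalar $\lambda_{e^{i\alpha/2}}$, where that expansion is a phase-free Z spider carrying a single $Q$ node decorated by the $z$-axis rotation quaternion $\cos\frac{\alpha}{2}+k\sin\frac{\alpha}{2}$ (the angle-vector pair $(\alpha,z)$ of Definition~\ref{defUQUatSO3}). The content of the lemma is that at $\alpha=0$ both the scalar and the $Q$ node degenerate to the unit, and that this degeneration is not merely semantic but derivable syntactically.

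First I would set $\alpha=0$ in $F_Q\!\left(\spider{gn}{}\right)$, so that the scalar becomes $\lambda_{e^{i\cdot 0/2}}=\lambda_1$ and the quaternion on the $Q$ node becomes $\cos 0 + k\sin 0 = 1$. Second I would discharge the scalar by a single application of rule $I_1$, which identifies $\lambda_1$ with the empty diagram. Third I would rewrite the node $\node{qn}{1}$ to a bare wire using rule $I_q$. At that point the former $Q$ node is just an identity wire extending one leg of the phase-free Z spider, which the unit law of the PROP absorbs into that leg (one may instead invoke $I_z$ together with the spider rule $S$ if an explicit diagrammatic justification is wanted), leaving exactly $\spider{smallZ}{}$ and establishing the equality.

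I expect no genuinely hard step here: each simplification is one named rule applied to a degenerate generator, and this lemma functions as the base case recording how the generic green-spider translation collapses when the phase vanishes. The only point that needs care is verifying that the quaternion decoration inside $\tikzfig{ZQ/q_spider_expand}$ really is the $z$-rotation $\cos\frac{\alpha}{2}+k\sin\frac{\alpha}{2}$, so that it evaluates to the unit quaternion at $\alpha=0$; this is immediate from the angle-vector reading of a Z-rotation in Remark~\ref{remAngleVectorPair} and requires no appeal to the full Euler decomposition of Proposition~\ref{propZQQDecomp}.
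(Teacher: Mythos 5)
Your proof is correct and takes essentially the same route as the paper's: the paper likewise instantiates the translation at zero phase, applies $I_q$ to erase the resulting $Q_1$ node, and then applies the spider rule $S$ to fuse the expansion back into a single phase-free $Z$ spider, with the $\lambda_1$ scalar handled implicitly under the section's stated convention of suppressing scalar bookkeeping. The only cosmetic difference is that the expanded diagram consists of two phase-free $Z$ spiders joined by a $Q_{(\alpha,z)}$ edge rather than a single spider with a decorated leg, so the final collapse is genuinely an application of $S$ rather than a PROP unit law --- a case your parenthetical hedge already covers.
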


\begin{proof} \label{prfLemZQTranslationZ}
	\begin{align}
		LHS & = {\tikzfig{ZQ/q_spider_0}} \by{I_q} {\tikzfig{ZQ/q_spider_1}} \by{S} \spider{smallZ}{}
	\end{align}
\end{proof}

\begin{proposition} \label{propZQTranslationS} Translation of the Z spider rule
	\begin{align}
		ZQ     & \entails F_Q\left(ZX_S\right)                        \\
		\ie ZQ & \entails {\tikzfig{ZQ/q_ts1}} = {\tikzfig{ZQ/q_ts5}}
	\end{align}
	(The diagonal dots represent at least one wire between the Z spiders)
\end{proposition}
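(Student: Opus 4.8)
The plan is to push both sides of the rule through the functor $F_Q$ and then fuse the resulting ZQ diagrams using rules $P$, $Q$, and $S$, with the scalars tidied up by $M$. Recall from Figure~\ref{figZQZXTranslation} that a green spider of phase $\alpha$ is sent to a phase-free ZQ Z-spider carrying a single quaternion node $Q_{q_\alpha}$ on one leg, where $q_\alpha := \cos\frac{\alpha}{2} + k\sin\frac{\alpha}{2}$ is the pure $k$-axis (Z-axis) rotation, together with the scalar $\lambda_{e^{i\alpha/2}}$; a direct interpretation check confirms that absorbing the scalar reproduces the diagonal $\mathrm{diag}(1, e^{i\alpha})$. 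Hence the translated left-hand side is two phase-free Z-spiders joined by the $k \geq 1$ connecting wires, each decorated by its own $k$-axis quaternion and scalar, while the translated right-hand side is a single phase-free Z-spider decorated by $Q_{q_{\alpha+\beta}}$ and the scalar $\lambda_{e^{i(\alpha+\beta)/2}}$.

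First I would use rule $P$ to slide both quaternion nodes off the external legs and onto one common connecting wire, so that the two $k$-axis rotations sit in series. Rule $P$ is precisely the statement that a diagonal (pure $k$-axis) quaternion commutes through the Z copy structure, so iterating it lets a phase migrate to any leg, including the wire between the two spiders. Next I would apply rule $Q$ to multiply the two quaternions on that wire; the key computation is
\begin{align}
    \left(\cos\tfrac{\alpha}{2} + k\sin\tfrac{\alpha}{2}\right)\left(\cos\tfrac{\beta}{2} + k\sin\tfrac{\beta}{2}\right) = \cos\tfrac{\alpha+\beta}{2} + k\sin\tfrac{\alpha+\beta}{2},
\end{align}
which holds because the $k$-axis quaternions form a commutative subgroup of $\UQuat$ isomorphic to the unit circle, with $k^2 = -1$ playing the role of $i^2 = -1$. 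I would then slide the combined node $Q_{q_{\alpha+\beta}}$ back off the connecting wire with $P$, leaving the two spiders joined only by bare wires, and fuse them with rule $S$ into a single phase-free Z-spider carrying $Q_{q_{\alpha+\beta}}$. Finally rule $M$ collapses the scalars, $\lambda_{e^{i\alpha/2}}\lambda_{e^{i\beta/2}} = \lambda_{e^{i(\alpha+\beta)/2}}$, producing exactly the translated right-hand side; the phase-free backbone here is handled as in Lemma~\ref{lemZQTranslationZ}.

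The main obstacle will be justifying the free movement of the phase in the sliding steps: rule $P$ as written in Figure~\ref{figZQRules} concerns a single fixed copy, so I must argue it propagates through a Z-spider of arbitrary arity. I would handle this by using rule $S$ in reverse to resolve the relevant spider into a composite of elementary copies, applying $P$ at each one, and then re-fusing with $S$; some care is needed to route the phase onto a single shared wire before the $Q$-merge when there is more than one connecting wire. By contrast the quaternion identity above and the scalar multiplication are entirely routine once this structural sliding is established.
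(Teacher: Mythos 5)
Your proposal is correct and follows essentially the same route as the paper's proof: fuse the phase-free Z spiders with (S), migrate the two $k$-axis quaternion nodes onto a common wire with (P), multiply them with (Q) using the fact that Z-axis rotations compose by angle addition, re-fuse with (S), and absorb the scalars with (M). The only step you leave implicit that the paper makes explicit is the use of rule (Y) to reorient a $Q$ node when sliding it between input and output legs of a spider, which is harmless here because $\tilde q = q$ for a pure $k$-axis quaternion (it has no $j$ component).
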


\begin{proof} \label{prfPropZQTranslationS}
	\begin{align}
		{\tikzfig{ZQ/q_ts1}}
		  & \by{S} {\tikzfig{ZQ/q_ts2}}    \\
		\by{P, Y} {\tikzfig{ZQ/q_ts3}}
		  & \by{Q, S} {\tikzfig{ZQ/q_ts4}} \\
		\by{P, Q, S} {\tikzfig{ZQ/q_ts5}}
	\end{align}
\end{proof}

\begin{proposition} \label{propZQTranslationIg} Translation of the Z spider identity
	\begin{align}
		ZQ     & \entails  F_Q\left(ZX_{I_g}\right)                                                \\
		\ie ZQ & \entails \node{smallZ}{} \comp \node{qn}{1} \comp \node{smallZ}{} = \node{none}{}
	\end{align}
\end{proposition}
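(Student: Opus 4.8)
The plan is to collapse the left-hand side using only the three ``identity'' rules $I_q$ and $I_z$ together with the spider fusion rule $S$; no appeal to the bialgebra, copy, or phase rules is needed. The composite $\node{smallZ}{} \comp \node{qn}{1} \comp \node{smallZ}{}$ is precisely $F_Q$ of the degree-2 green spider: the two Z spiders arise from the legs of the spider-expansion in the definition of $F_Q$, while the central $\node{qn}{1}$ node carries the (trivial) phase. Thus the proposition is exactly the image under $F_Q$ of the ZX identity $ZX_{I_g}$, and proving it amounts to simplifying this one composite down to a bare wire.

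First I would apply $I_q$ to rewrite the central $Q_1$ node as the identity wire, leaving two degree-2 Z spiders joined along a single edge. Next I would apply spider fusion $S$ to merge these two $1 \to 1$ spiders into one $1 \to 1$ Z spider. Finally I would apply $I_z$ (read right-to-left) to rewrite that lone degree-2 spider as the identity wire $\node{none}{}$. In symbols the derivation would read:
\begin{align}
	\node{smallZ}{} \comp \node{qn}{1} \comp \node{smallZ}{}
	\by{I_q} \node{smallZ}{} \comp \node{smallZ}{}
	\by{S} \node{smallZ}{}
	\by{I_z} \node{none}{}
\end{align}

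There is no genuine mathematical obstacle here; the difficulty is purely bookkeeping. The one point to verify is that the arities line up correctly, namely that fusing two $1 \to 1$ spiders via $S$ yields a $1 \to 1$ spider so that $I_z$ is applicable, and that any incidental scalar factors are absorbed silently by the $M$ rule, as per the convention announced at the start of \S\ref{secZXZQTranslatedRules}. The argument is in fact a sub-derivation of the one already used in Lemma~\ref{lemZQTranslationZ}, so I expect it to be immediate once those conventions are in place.
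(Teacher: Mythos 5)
Your derivation is correct and essentially the same as the paper's: the paper simply applies $I_z$ to each outer Z spider first, leaving $\node{qn}{1}$, and then finishes with $I_q$, whereas you remove the $Q_1$ node first and so need one extra application of the fusion rule $S$ before $I_z$ can be used. Both routes are sound applications of the same identity rules and neither requires any scalar bookkeeping.
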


\begin{proof} \label{prfPropZQTranslationIg}
	\begin{align}
		\node{smallZ}{} \comp \node{qn}{1} \comp \node{smallZ}{} \by{I_z} \node{qn}{1}
		  & \by{I_q} \node{none}{}
	\end{align}
\end{proof}

\begin{proposition} \label{propZQTranslationIr} Translation of the X spider identity
	\begin{align}
		ZQ     & \entails F_Q\left(ZX_{I_r}\right)                                                                                                                          \\
		\ie ZQ & \entails  \node{qn}{\text{H}} \comp \node{smallZ}{} \comp \node{qn}{1} \comp \node{smallZ}{} \comp \node{qn}{\text{H}} \lambda_i \lambda_i = \node{none}{}
	\end{align}
\end{proposition}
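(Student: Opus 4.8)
The plan is to reduce the translated diagram to a bare wire in two stages: first collapse its interior using the companion result for the Z spider, then resolve the two surviving Hadamard-quaternion nodes. The interior subdiagram $\node{smallZ}{} \comp \node{qn}{1} \comp \node{smallZ}{}$ is exactly the left-hand side of the translated Z spider identity established in Proposition~\ref{propZQTranslationIg}, so the first step is to rewrite it as a plain wire $\node{none}{}$. Indeed the whole point is that the red identity $ZX_{I_r}$ is the green identity $ZX_{I_g}$ conjugated by Hadamards, so reusing Proposition~\ref{propZQTranslationIg} for the middle is the natural opening move. After this reduction the two outer nodes $\node{qn}{\text{H}}$ become adjacent along a single wire, carrying the accumulated scalars $\lambda_i \lambda_i$.

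The next step is to fuse the two quaternion nodes with rule $Q$, giving $\node{qn}{\text{H}} \comp \node{qn}{\text{H}} \by{Q} \node{qn}{\text{H} \times \text{H}}$. The one genuinely computational ingredient is the quaternion product $\text{H} \times \text{H}$. Writing $H$ in the angle-vector form $(\pi, \tfrac{1}{\sqrt{2}}(x+z))$ of Definition~\ref{defHadamardEdge}, i.e.\ (via Definition~\ref{defUQUatSO3}) as the unit quaternion $\tfrac{1}{\sqrt{2}}(i+k)$, one computes $\text{H} \times \text{H} = \tfrac{1}{2}(i+k)(i+k) = \tfrac{1}{2}(i^2 + ik + ki + k^2)$; since $i^2 = k^2 = -1$ and $ik = -j = -ki$ the two cross terms cancel, so $\text{H} \times \text{H} = -1$ and the fused node is $\node{qn}{-1}$.

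Finally I would discharge the scalars. Rule $M$ merges them as $\lambda_i \lambda_i = \lambda_{i \times i} = \lambda_{-1}$, rule $N$ absorbs the surviving scalar into the node as $\lambda_{-1} \node{qn}{-1} \by{N} \node{qn}{1}$, and a closing application of rule $I_q$ deletes the trivial quaternion, $\node{qn}{1} \by{I_q} \node{none}{}$, leaving the identity wire as required.

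The heart of the argument is the single quaternion identity $\text{H}^2 = -1$; every other step is a direct appeal to one of the rules $Q$, $M$, $N$, $I_q$ or to Proposition~\ref{propZQTranslationIg}. The point I expect to need the most care is the scalar bookkeeping: one must verify that the two factors $\lambda_i$ multiply to precisely the $\lambda_{-1}$ that rule $N$ requires in order to turn $\node{qn}{-1}$ back into $\node{qn}{1}$, so that no stray scalar is left over. This matching is exactly what forces the $\lambda_i$ decorations in the definition of $F_Q$ on red spiders, and it becomes mechanical once the $\text{H}^2$ computation is in place.
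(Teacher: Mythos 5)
Your proposal is correct and follows essentially the same route as the paper: strip the interior down to a wire (the paper applies $I_z$ then $I_q$ directly, which is exactly the content of Proposition~\ref{propZQTranslationIg} that you cite), fuse the two $H$ nodes via rule $Q$ using $H \times H = -1$, and then cancel the scalars with $N$, $M$ and $I_q$. The only differences are cosmetic orderings of the scalar bookkeeping (the paper extracts $\lambda_{-1}$ from $\node{qn}{-1}$ first and finishes with $M$ and $I_1$, whereas you merge $\lambda_i\lambda_i$ first), and both are valid.
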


\begin{proof} \label{prfPropZQTranslationIr}
	\begin{align}
		LHS =    & \node{qn}{\text{H}} \comp \node{smallZ}{} \comp \node{qn}{1} \comp \node{smallZ}{} \comp \node{qn}{\text{H}} \lambda_i \lambda_i \\
		\by{I_z} & \node{qn}{\text{H}} \comp \node{qn}{1} \comp \node{qn}{\text{H}} \lambda_i \lambda_i                                             \\
		\by{I_q} & \node{qn}{\text{H}} \comp \node{qn}{\text{H}} \lambda_i \lambda_i
		\by{Q} \node{qn}{-1} \lambda_i \lambda_i \\
		\by{N}   & \node{qn}{1} \lambda_{-1} \lambda_i \lambda_i
		\by{I_q}  \node{none}{} \lambda_{-1} \lambda_i \lambda_i
		\by{M}  \node{none}{} \lambda_1
		\by{I_1}  \node{none}{}
	\end{align}
\end{proof}

We introduce our first three intermediate lemmas, corresponding to properties of the following three ZX diagrams:

\begin{align}
	\rg{\alpha}{\beta} \label{eqnSmallRG} , \qquad \tikzfig{ZX/HH}, \qquad \tikzfig{ZX/HHH}
\end{align}

\begin{lemma} \label{lemZQAlphaHBeta} Interaction of a Z state and Z effect joined by a Hadamard
	\begin{align}
		ZQ & \entails  {\tikzfig{ZQ/q_alpha_H_beta}} & = \lambda_{-\sqrt{2}\left(\left(\sin\frac{\alpha+\beta}{2}\right) + i \cos\frac{\alpha-\beta}{2}\right)}
	\end{align}
\end{lemma}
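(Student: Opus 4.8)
The plan is to collapse the entire left-hand diagram into a single scalar by fusing all of its edge decorations into one $Q$ node and then firing rule~$A$. Reading the diagram bottom-to-top, it is the composite of a phase-free $Z$ state, a $z$-axis rotation $\node{qn}{q_\alpha}$ with $q_\alpha = \cos\half[\alpha] + k\sin\half[\alpha]$, the Hadamard edge $\hedge$, a second $z$-axis rotation $\node{qn}{q_\beta}$ with $q_\beta = \cos\half[\beta] + k\sin\half[\beta]$, and a phase-free $Z$ effect. The essential observation is that the interior of the diagram is just three $Q$ nodes stacked in series between a $Z$ state and a $Z$ effect, which is exactly the configuration consumed by rule~$A$ once the three nodes are merged.

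First I would rewrite the Hadamard edge as the quaternion node $\node{qn}{\text{H}}$, which is legitimate by Definition~\ref{defHadamardEdge} and Lemma~\ref{lemHadamardEdgeWellDefinedZQ}, so that $q_H = \frac{1}{\sqrt{2}}(i+k)$. Next I would apply rule~$Q$ twice to fuse the three consecutive $Q$ nodes into a single node $\node{qn}{q}$ whose label is the quaternion product $q = q_\beta\, q_H\, q_\alpha$, the order being forced by the bottom-to-top composition convention and the fact that $\phi$ is a homomorphism. At that point the diagram is precisely a $Z$ state followed by $\node{qn}{q}$ followed by a $Z$ effect, so rule~$A$ of Figure~\ref{figZQRules} fires and rewrites it to the scalar $\lambda_{2(q_w - iq_x)}$.

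All that then remains is to read off $q_w$ and $q_x$. Expanding $q = \left(\cos\half[\beta] + k\sin\half[\beta]\right)\frac{i+k}{\sqrt{2}}\left(\cos\half[\alpha] + k\sin\half[\alpha]\right)$ with the non-commutative products $ik=-j$, $ki=j$, $kj=-i$, $k^2=-1$, and collecting the real and $i$ components followed by the sum-angle identities, yields $q_w = -\frac{1}{\sqrt{2}}\sin\frac{\alpha+\beta}{2}$ and $q_x = \frac{1}{\sqrt{2}}\cos\frac{\alpha-\beta}{2}$. Substituting into $2(q_w - iq_x)$ gives exactly $-\sqrt{2}\left(\sin\frac{\alpha+\beta}{2} + i\cos\frac{\alpha-\beta}{2}\right)$, the claimed scalar.

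The main obstacle is this final quaternion multiplication: it is pure bookkeeping, but one must respect $ki=j$ versus $ik=-j$ throughout. A convenient simplification is that rule~$A$ reads off only $q_w$ and $q_x$, and both of these turn out to be symmetric under $\alpha\leftrightarrow\beta$, so the precise order in which rule~$Q$ fuses the labels does not affect the resulting scalar even though it changes the $j$ and $k$ components of $q$. As a sanity check one can recompute the value semantically by composing $\begin{pmatrix}1&1\end{pmatrix}$, $\interpret{\node{qn}{q_\beta}}$, $\interpret{\node{qn}{\text{H}}}=\frac{-i}{\sqrt{2}}\begin{pmatrix}1&1\\1&-1\end{pmatrix}$, $\interpret{\node{qn}{q_\alpha}}$ and $\begin{pmatrix}1\\1\end{pmatrix}$; by soundness of ZQ (Theorem~\ref{thmZQSound}) this must agree with the syntactic derivation.
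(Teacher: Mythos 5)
Your proposal is correct and follows essentially the same route as the paper's proof: fuse the edge decorations into a single $Q$ node via rule $Q$ (after absorbing/removing the phase-free $Z$ spiders, which the paper does with $I_z$), fire rule $A$ to obtain $\lambda_{2(q_w - iq_x)}$, and evaluate the quaternion product $\left(\cos\half[\alpha]+k\sin\half[\alpha]\right)\times H\times\left(\cos\half[\beta]+k\sin\half[\beta]\right)=\frac{1}{\sqrt{2}}\left(-\sin\frac{\alpha+\beta}{2}+i\cos\frac{\alpha-\beta}{2}+j\sin\frac{\alpha-\beta}{2}+k\cos\frac{\alpha+\beta}{2}\right)$ to read off the scalar. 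Your extracted $q_w$ and $q_x$ match the paper's computation, and your observation that these components are symmetric in $\alpha\leftrightarrow\beta$ correctly disposes of the ordering ambiguity.
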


\begin{proof} \label{prfLemZQAlphaHBeta}
	\begin{align}
		{\tikzfig{ZQ/q_alpha_H_beta2}}
		              & \by{I_z}  {\tikzfig{ZQ/q_alpha_H_beta3}}
		\by{Q} {\tikzfig{ZQ/q_alpha_H_beta4}} \by{A} \lambda_{-\sqrt{2}((\sin\frac{\alpha+\beta}{2}) + i \cos\frac{\alpha-\beta}{2})} \\[\rowgap]
		\text{Since } & (\cos \frac{\alpha}{2} + k \sin \frac{\alpha}{2}) \times H \times (\cos \frac{\beta}{2} + k \sin \frac{\beta}{2}) =\nonumber \\
		              & \frac{1}{\sqrt{2}}(
		-\sin\frac{\alpha+\beta}{2} +
		i(\cos\frac{\alpha-\beta}{2}) +
		j(\sin\frac{\alpha-\beta}{2}) +
		k(\cos\frac{\alpha+\beta}{2}))
	\end{align}

\end{proof}

\begin{lemma} \label{propZQHH} Interaction of two Hadamard rotations
	\begin{align}
		ZQ \entails & \node{qn}{\text{H}} \comp \node{qn}{\text{H}} = \lambda_{-1} \node{none}{}
	\end{align}
\end{lemma}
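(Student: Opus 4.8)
The plan is to reduce the statement to a single quaternion computation followed by three rule applications. First I would evaluate the product $H \times H$ inside the quaternion group $\UQuat$. Recalling from Definition~\ref{defHadamardEdge} that $H$ is the unit quaternion $\frac{1}{\sqrt{2}}(i+k)$, and using $i^2 = k^2 = -1$ together with $ik = -ki = -j$, one finds $H \times H = \frac{1}{2}(i+k)(i+k) = -1$. This is the only genuine calculation in the proof, and it is entirely routine.

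With that product in hand, the derivation proceeds purely by rewriting. Rule $Q$ fuses the two composed Hadamard nodes into a single $Q$ node carrying the product of their labels, yielding $\node{qn}{-1}$. Since $-1 = -(1)$ as quaternions, I would then read rule $N$ backwards to split off the scalar, rewriting $\node{qn}{-1}$ as $\lambda_{-1}\node{qn}{1}$. Finally rule $I_q$ identifies $\node{qn}{1}$ with the bare identity wire, leaving $\lambda_{-1}\node{none}{}$, which is exactly the claimed right-hand side. Schematically:
\begin{align}
	\node{qn}{\text{H}} \comp \node{qn}{\text{H}} \by{Q} \node{qn}{-1} \by{N} \lambda_{-1}\node{qn}{1} \by{I_q} \lambda_{-1}\node{none}{}
\end{align}

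I do not anticipate any real obstacle: the quaternion arithmetic is the only place an error could creep in, and even that is short. In fact this exact chain already appears inside the proof of Proposition~\ref{propZQTranslationIr}, so the present lemma simply packages the step $\node{qn}{\text{H}} \comp \node{qn}{\text{H}} \by{Q} \node{qn}{-1}$, together with its cleanup, as a reusable statement for later use. The one point worth double-checking is the order convention of rule $Q$ --- whether it multiplies labels as $q_1 \times q_2$ or $q_2 \times q_1$ --- but since both orders produce $H \times H = -1$, the ambiguity is harmless here.
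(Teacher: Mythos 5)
Your proposal is correct and matches the paper's own proof essentially step for step: the paper likewise computes $H \times H = -1$, fuses via rule $Q$, splits the sign off with rule $N$ read right-to-left, and removes the identity node with $I_q$. The quaternion arithmetic $\frac{1}{2}(i+k)(i+k) = -1$ is also verified in the paper (in Lemma~\ref{lemZQHComm}), so there is nothing to add.
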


\begin{proof} \label{prfPropZQHH}
	\begin{align}
		\node{qn}{\text{H}} \comp \node{qn}{\text{H}} \by{Q} \node{qn}{H \times H} = \node{qn}{-1} \by{N} \node{qn}{1} \lambda_{-1} \by{I_q} \node{none}{} \lambda{-1}
	\end{align}
\end{proof}

\begin{lemma} \label{propZQHHH} The value of the scalar describing three Hadamard rotations in parallel
	\begin{align}
		ZQ \entails {\tikzfig{ZQ/q_HHH}} & = \lambda_{\frac{i}{\sqrt{2}}}
	\end{align}
\end{lemma}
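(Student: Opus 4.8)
The plan is to recognise $\tikzfig{ZQ/q_HHH}$ as the ZQ image of the ZX scalar $\tikzfig{ZX/HHH}$: two phase-free Z spiders, each carrying three legs, joined by three parallel Hadamard rotations $\node{qn}{\text{H}}$ and closed into a scalar. The target value $\frac{i}{\sqrt{2}}$ is small enough that I expect a short diagrammatic rewrite rather than a translation argument, so I would work entirely inside the ruleset of Figure~\ref{figZQRules}, reusing the two immediately preceding lemmas. Note that the parity matters: the analogous diagram with two Hadamard edges evaluates to $-2$, so the odd number of edges is genuinely used.

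First I would peel the three-edge connection apart with the copy rule $CP$. Writing one of the two spiders in unfused form via $S$ presents the pattern $CP$ matches, a two-legged Z spider whose merged output is sent through a single $\node{qn}{\text{H}}$, and $CP$ then replaces it by two separately $H$-capped legs at the cost of a factor $\lambda_{\frac{i}{\sqrt{2}}}$. Applying $CP$ a second time reduces the remaining parallel edges so that only a single Hadamard rotation survives between the two spiders, while the two detached Hadamard strands are composed pairwise and collapse to $\lambda_{-1}$ by Lemma~\ref{propZQHH}, which gives $\node{qn}{\text{H}} \comp \node{qn}{\text{H}} = \lambda_{-1}$. Each $CP$ contributes $\lambda_{\frac{i}{\sqrt{2}}}$, so two applications combine under $M$ to $-\frac{1}{2}$.

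Next I would close the surviving single-Hadamard loop. Using $I_z$ to absorb the bare Z spiders back into the wire leaves a cap--$\node{qn}{\text{H}}$--cup loop, which rule $A$ evaluates to $\lambda_{2(q_w - iq_x)}$; for $q = \text{H} = \frac{1}{\sqrt{2}}(i+k)$ this is $\lambda_{-\sqrt{2}i}$. This is exactly the closed-loop evaluation underlying Lemma~\ref{lemZQAlphaHBeta}, so I would lean on that computation rather than redo it. Finally I would gather the accumulated scalars with $M$, $N$ and $I_1$: the product $-\frac{1}{2} \cdot (-\sqrt{2}i) = \frac{i}{\sqrt{2}}$ is the claimed value.

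The main obstacle will be the scalar bookkeeping rather than the diagrammatics: the factors of $i$, of $\sqrt{2}$, and the signs arising from $CP$, from $A$, and from the $\node{qn}{\text{H}} \comp \node{qn}{\text{H}} = \lambda_{-1}$ collapses must be tracked precisely through the $M$ and $N$ rules so that they combine to exactly $\frac{i}{\sqrt{2}}$. Reassuringly, soundness (Theorem~\ref{thmZQSound}) already pins the answer, so any residual sign is forced to cancel. A secondary care point is the setup: because $CP$ only fires on a Z spider capped by a single Hadamard rotation, the unfusing via $S$ (and, if a crossing turns out to be cleaner, the bialgebra rule $B$) must present the three parallel edges in precisely that shape before the rule can be applied.
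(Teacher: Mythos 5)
There is a genuine gap here, on two fronts. First, the step you lead with does not go through: the left-hand pattern of rule $CP$ is a two-to-one Z merge, followed by a \emph{single} $\node{qn}{\text{H}}$ on the merged wire, followed by a one-to-zero Z delete above it. In $\tikzfig{ZQ/q_HHH}$ every Hadamard rotation sits between the two three-legged spiders, so after any unfusion via $S$ an individual $\node{qn}{\text{H}}$ only ever acquires a copy below it or a merge above it --- never a merge below \emph{and} a delete above. Spider surgery alone cannot present the $CP$ pattern; you must first push an $H$ through a spider, which is exactly what the bialgebra rule $B$ (together with $Y$ to reorient the asymmetric $Q_H$ nodes) is for. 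The paper's derivation proceeds as $S$ with Lemma~\ref{propZQHH}, then $B$, then $CP$, then Lemma~\ref{propZQHH} again, then \emph{two} applications of $A$; so $B$ is essential, not the optional convenience your parenthetical suggests. Your account of $CP$ is also internally inconsistent: read forwards it splits one $H$-capped merged leg into two separate $H$-capped legs, i.e.\ it \emph{increases} the number of Hadamard strands, yet you then invoke it a second time to \emph{reduce} the number of parallel edges.

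Second, and decisively, your scalar bookkeeping does not close. The factors you list --- $\left(\frac{i}{\sqrt{2}}\right)^2 = -\frac{1}{2}$ from two uses of $CP$, $\lambda_{-1}$ from the $\node{qn}{\text{H}} \comp \node{qn}{\text{H}}$ collapse, and $\lambda_{-\sqrt{2}i}$ from the one surviving loop via $A$ --- multiply to $-\frac{i}{\sqrt{2}}$; your final line reaches $+\frac{i}{\sqrt{2}}$ only by silently dropping the $-1$. You cannot appeal to Theorem~\ref{thmZQSound} to force the sign to come out right: soundness guarantees that a correctly executed sequence of rewrites preserves the interpretation, not that a mis-counted scalar will self-correct. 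The discrepancy is a symptom of the first problem: the decomposition that actually exists (the paper's) involves two $H$--$H$ collapses, one $B$ step, a $CP$ step worth $-\frac{1}{2}$, and \emph{two} closed $H$-loops, whose product $(-1)\cdot\frac{i}{\sqrt{2}}\cdot\left(-\frac{1}{2}\right)\cdot(-1)\cdot\left(-\sqrt{2}i\right)^2$ does equal $\frac{i}{\sqrt{2}}$. Your individual ingredient values (the loop value $-\sqrt{2}i$ from $A$, the $-1$ from Lemma~\ref{propZQHH}, the $\frac{i}{\sqrt{2}}$ per $CP$) are all correct, but the structure into which you assemble them is not one the rules can realise.
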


\begin{proof} \label{prfPropZQHHH}
	\begin{align}
		{\tikzfig{ZQ/q_HHH}}
		\by{S,\,\ref{propZQHH}} & \lambda_{-1} {\tikzfig{ZQ/q_HHH2}}          & \by{Y,S} \lambda_{-1}\ {\tikzfig{ZQ/q_HHH3}}                             \\
		\by{M, B}               & \lambda_{-i/\sqrt{2}}{\tikzfig{ZQ/q_HHH4}}  & \by{Y,S,I_z} \lambda_{-i/\sqrt{2}}\ {\tikzfig{ZQ/q_HHH5}}                \\
		\by{CP, Y}              & \lambda_{i/2\sqrt{2}}{\tikzfig{ZQ/q_HHH6}}  & \by{I_z}  \lambda_{i/2\sqrt{2}}\ {\tikzfig{ZQ/q_HHH7}}                   \\
		\by{M,\ref{propZQHH}}   & \lambda_{-i/2\sqrt{2}}{\tikzfig{ZQ/q_HHH8}} & \by{A}  \lambda_{-i/2\sqrt{2}}\ \lambda_{-i\sqrt{2}}\lambda_{-i\sqrt{2}} \\
		\by{M}& \lambda_{\frac{i}{\sqrt{2}}}
	\end{align}
\end{proof}

\begin{proposition} \label{propZQTranslationIV} Translation of the IV rule
	\begin{align}
		ZQ \entails     & F_Q\left(ZX_{IV}\right)          \\
		\ie ZQ \entails & \lambda_{e^{i \frac{\alpha}{2}}}
		{\tikzfig{ZQ/q_IV_1}}= \epsilon
	\end{align}
\end{proposition}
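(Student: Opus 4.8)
The plan is to follow the same recipe used for the preceding translated rules (Propositions~\ref{propZQTranslationS}--\ref{propZQTranslationIr}): unfold the image under $F_Q$ of the single green $\alpha$-spider appearing on the left of $ZX_{IV}$, push all the resulting data onto $Q$-nodes and $\lambda$-scalars, and then collapse everything to the empty diagram. Concretely, I would first rewrite the translated green spider exactly as in Lemma~\ref{lemZQTranslationZ}, separating off the prefactor $\lambda_{e^{i\alpha/2}}$ (already displayed in the statement) and leaving a $Z$-spider whose legs carry the quaternion $Q$-nodes $\cos\half[\alpha] + k\sin\half[\alpha]$ pointing along the $k$ (that is, $Z$) axis, as recorded in Remark~\ref{remAngleVectorPair}.

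Next I would clear away the remaining structure. Any bare $Z$-spiders sitting on a wire are absorbed with $I_z$, and adjacent $Q$-nodes are fused with $Q$, using $Y$ and $N$ to normalise the $j\mapsto -j$ transpose and the sign whenever a cap or cup is unfolded. If the translated diagram contains Hadamard edges -- which it will as soon as $F_Q$ sends a $\node{h}{}$ to $\node{qn}{\text{H}}\lambda_i$ -- I would dispatch the back-to-back Hadamards with Lemma~\ref{propZQHH}, so that $\node{qn}{\text{H}}\comp\node{qn}{\text{H}} = \lambda_{-1}$, and collapse any state--Hadamard--effect triple with Lemma~\ref{lemZQAlphaHBeta}. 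The aim of this phase is to reduce the body of the diagram to a single $Q$-node sandwiched between a $Z$-state and a $Z$-effect.

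At that point rule $A$ converts the sandwich into the scalar $\lambda_{2(q_w - iq_x)}$, where $q$ is the accumulated quaternion, and the claim becomes a pure scalar identity. The final step is bookkeeping: I would gather every $\lambda$ with $M$ and verify that the total scalar -- the product of $\lambda_{e^{i\alpha/2}}$, the scalar emitted by $A$, and whatever $\lambda_i$, $\lambda_{-1}$ or $\sqrt{2}$ factors were picked up along the way -- equals $1$, so that $I_1$ discharges the resulting $\lambda_1$ and leaves the empty diagram $\epsilon$.

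I expect this last scalar computation to be the only genuine obstacle. Each individual rewrite is routine, but it is easy to drop a factor of $i$ or a $\sqrt{2}$, and the trigonometric combination coming out of $A$ (and possibly Lemma~\ref{lemZQAlphaHBeta}) must be matched against $e^{i\alpha/2}$ exactly, not merely up to modulus. As a safeguard I would first pin down the target value semantically: soundness (Theorem~\ref{thmZQSound}) guarantees that the left-hand side of the claim has interpretation $1$, so the net scalar produced by the syntactic reduction is forced to be $1$, and I can use this to confirm that no global factor has been lost or spuriously introduced during the rewriting.
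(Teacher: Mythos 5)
Your plan matches the paper's proof: it unfolds $F_Q(ZX_{IV})$, simplifies with $I_z$, $I_q$, $S$ and $M$, reduces the scalar sub-diagrams via Lemma~\ref{lemZQAlphaHBeta} (whose own proof is exactly your ``fuse with $Q$, then apply $A$ to the $Z$-state/$Z$-effect sandwich'' step) together with the three-parallel-Hadamard scalar of Lemma~\ref{propZQHHH}, and then multiplies the resulting $\lambda$'s down to $\lambda_1$ before discharging with $I_1$. The one computation you leave open, $e^{i\alpha/2}\cdot\left(-\sqrt{2}\,i\,e^{-i\alpha/2}\right)\cdot\left(i/\sqrt{2}\right)=1$, is precisely the bookkeeping the paper carries out in its final line.
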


\begin{proof} \label{prfPropZQTranslationIV}
	\begin{align}
		\lambda_{e^{i \frac{\alpha}{2}}}
		{\tikzfig{ZQ/q_IV_1}} \by{I_z, I_q, S, M}   & \lambda_{e^{i \frac{\alpha}{2}}}  {\tikzfig{ZQ/q_IV_2}}                                             \\
		\by{\ref{lemZQAlphaHBeta}, \ref{propZQHHH}} & \lambda_{e^{i \frac{\alpha}{2}}} \lambda_{-\sqrt{2}ie^{-i \frac{\alpha}{2}}} \lambda_{i / \sqrt{2}} \\
		\by{M}                                      & \lambda_{1} \by{I_1} \epsilon
	\end{align}
\end{proof}

\begin{proposition} \label{propZQTranslationCP} Translation of the CP rule
	\begin{align}
		ZQ \entails     & F_Q\left(ZX_{CP}\right)                                               \\
		\ie ZQ \entails & \lambda_{-1} {\tikzfig{ZQ/q_CP1}} = {\tikzfig{ZQ/q_CP3}} \lambda_{-1}
	\end{align}
\end{proposition}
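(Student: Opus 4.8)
The plan is to follow exactly the template established in the preceding translation results (Propositions~\ref{propZQTranslationS} and~\ref{propZQTranslationIV}): unfold the $F_Q$-images appearing on each side of ${\tikzfig{ZQ/q_CP1}}$ and ${\tikzfig{ZQ/q_CP3}}$ into the phase-free Z-spider together with $Q_H$-decorated legs, slide the Hadamard decorations so that they meet, and then discharge them with the native ZQ rules and the Hadamard lemmas already proven (Lemmas~\ref{propZQHH} and~\ref{propZQHHH}, together with~\ref{lemZQAlphaHBeta}). Because $ZX_{CP}$ is the copy rule, its $F_Q$-image on the left is a translated X-spider with a translated Z-basis state plugged into it; the goal is to reduce this to the disconnected collection of translated states on the right, keeping careful track of the scalar so that the stray $\lambda_{-1}$ factors agree.

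First I would fuse the Z-spiders coming from the translation of the red node using the spider rule $S$, and gather the $Q_H\lambda_i$ factors contributed by each of the red node's legs via rule $Q$ (combining pairs of $Q_H$ nodes) and Lemma~\ref{propZQHH} (each cancelled pair of Hadamards emitting a $\lambda_{-1}$). Next I would apply rule $CP$ of Figure~\ref{figZQRules} at the point where the translated X-effect meets the copy node; this is the crucial step, and it contributes the scalar $\lambda_{i/\sqrt 2}$ while splitting the copy configuration. After the copy has been performed I would again use $S$, $I_z$ and $Y$ to normalise the resulting spiders into the shape of the right-hand diagram, re-introducing $Q_H$ decorations where the translation of the output states demands them, and finally collapse all accumulated scalars with $M$, $N$ and $I_1$.

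The main obstacle I anticipate is the scalar bookkeeping rather than the diagrammatic combinatorics. The translation $F_Q$ attaches a factor of $\lambda_i$ to every Hadamard, rule $CP$ injects a $\lambda_{i/\sqrt 2}$, each Hadamard cancellation via Lemma~\ref{propZQHH} injects a $\lambda_{-1}$, and the translated states carry their own $\lambda_{e^{i\alpha/2}}$ and $\sqrt 2$ powers; the delicate part is verifying that all of these powers of $i$ and $\sqrt 2$ combine, after applying $M$, to exactly the $\lambda_{-1}$ appearing on each side, so that the two sides match as an equality of diagrams-with-scalars and not merely up to a global phase. I would therefore carry the scalar explicitly through every rewrite and compare the running total against the soundness computation of Proposition~\ref{propZQCPSound} as a consistency check.
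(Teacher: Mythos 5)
Your plan matches the paper's proof in all essentials: unfold $F_Q$, discharge the identity spiders and quaternions with $I_z$ and $I_q$, apply the native ZQ rule (CP) as the crux (contributing $\lambda_{i/\sqrt{2}}$), and reconcile the scalars with $M$. The only difference is that the paper's derivation is shorter than you anticipate: the leftover closed Hadamard gadget is evaluated directly by rule $A$ (yielding $\lambda_{-i\sqrt{2}}$, which cancels the $\lambda_{i/\sqrt{2}}$), a step your cited Lemma~\ref{lemZQAlphaHBeta} subsumes, so no Hadamard-pair cancellations via Lemma~\ref{propZQHH} are actually required.
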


\begin{proof} \label{prfPropZQTranslationCP}
	\begin{align}
		LHS \by{1_z, 1_q} & \lambda_{-1} {\tikzfig{ZQ/q_CP2}} \by{A} \lambda_{-1} \lambda_{-i\sqrt{2}}{\tikzfig{ZQ/q_CP2a}}                              \\
		\by{CP}           & \lambda_{-1} \lambda_{-i\sqrt{2}} \lambda_{i / \sqrt{2}} {\tikzfig{ZQ/q_CP3}} \by{M} \lambda_{-1} {\tikzfig{ZQ/q_CP3}} = RHS
	\end{align}
\end{proof}

\begin{proposition} \label{propZQTranslationB} Translation of the B rule
	\begin{align}
		ZQ \entails     & F_Q\left(ZX_{B}\right)                                               \\
		\ie ZQ \entails & \lambda_{-i}\ {\tikzfig{ZQ/q_b1}} = {\tikzfig{ZQ/q_b4}} \lambda_{-i}
	\end{align}
\end{proposition}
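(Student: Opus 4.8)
The plan is to reduce the translated bialgebra rule to the native $(B)$ rule of Figure~\ref{figZQRules}, which is itself a bialgebra between Z spiders mediated by Hadamard edges. Under $F_Q$ the green spiders of the ZX bialgebra are sent to Z spiders, while the red spiders are sent to Hadamard-conjugated Z spiders, each red leg contributing a node $\node{qn}{\text{H}}$ together with a scalar $\lambda_i$. Thus the left-hand diagram $\tikzfig{ZQ/q_b1}$ is, up to these Hadamard decorations and to the identity quaternions $\node{qn}{1}$ produced by the phase-free green spiders, precisely the configuration $\tikzfig{ZQ/q_b_lhs}$ that the native $(B)$ rule rewrites. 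The structural content of the proposition is therefore carried by a single application of $(B)$, and everything else is setup and scalar management.

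First I would clean the translated diagram with the identity rules $I_z$ and $I_q$ to absorb the $\node{qn}{1}$ nodes and the trivial scalars $\lambda_1$ coming from the phase-free spiders, reducing $\tikzfig{ZQ/q_b1}$ to bare Z spiders joined by Hadamard edges. Where the bipartite translation stacks two Hadamard nodes in series I would cancel them using Lemma~\ref{propZQHH} ($\node{qn}{\text{H}} \comp \node{qn}{\text{H}} = \lambda_{-1}$), exposing the $\tikzfig{ZQ/q_b_lhs}$ pattern. I would then apply the native $(B)$ rule, which introduces the scalar $\lambda_{-\sqrt{2}i}$ prescribed in Figure~\ref{figZQRules}, and, following the template already used in Proposition~\ref{propZQTranslationCP}, use $(A)$ or $(CP)$ to dispatch any degenerate spiders produced, before collecting all accumulated scalars with repeated applications of $(M)$ to reach $\tikzfig{ZQ/q_b4} \lambda_{-i}$.

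The hard part will be the scalar bookkeeping rather than the diagrammatic rewrite. Each red leg contributes an $i$ through its $\lambda_i$, each Hadamard cancellation via Lemma~\ref{propZQHH} contributes $-1$, the native $(B)$ rule contributes $-\sqrt{2}i$, and any auxiliary $(CP)$ or $(A)$ steps contribute further $\sqrt{2}$-valued factors; the crux is to verify that all of these multiply, under $(M)$, to exactly the $\lambda_{-i}$ that balances the two sides. The only structural subtlety is ensuring the Hadamard edges are positioned so that the native $(B)$ rule applies directly rather than to a deformed or mis-coloured variant, and this is resolved entirely by the identity and Hadamard-cancellation steps of the opening paragraph.
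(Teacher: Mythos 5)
Your proposal follows essentially the same route as the paper's proof: clean up the translated diagram with $I_z$ and $I_q$, apply the native $(B)$ rule, and finish with $(A)$, $(M)$ and the Hadamard-cancellation lemma (Lemma~\ref{propZQHH}) to collect the scalars into $\lambda_{-i}$. The only quibble is directional: applying $(B)$ left-to-right to a diagram lacking the $\lambda_{-\sqrt{2}i}$ factor picks up the \emph{inverse} scalar $\lambda_{i/\sqrt{2}}$, as in the paper's derivation, rather than "introducing $\lambda_{-\sqrt{2}i}$" as you state, but this is exactly the bookkeeping you already flag for verification.
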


\begin{proof} \label{prfPropZQTranslationB}
	\begin{align}
		LHS \by{I_z, I_q} & \lambda_{-i}{\tikzfig{ZQ/q_b2}} \by{B} & \lambda_{-i} \lambda_{i / \sqrt{2}} {\tikzfig{ZQ/q_b3}} \\
		\by{A, M, \ref{propZQHH}}&\lambda_{-i} \tikzfig{ZQ/q_b4}
	\end{align}
\end{proof}

\begin{proposition} \label{propZQTranslationH} Translation of the H rule
	\begin{align}
		ZQ \entails     & F_Q\left(H\right)                                                                                                                    \\
		\ie ZQ \entails &
		\left(\lambda_i \node{qn}{\text{H}}\right)^{\tensor n} \comp
		\left(\lambda_i \node{qn}{\text{H}}\right)^{\tensor n} \comp
		{\tikzfig{ZQ/q_spider_expand}}
		\lambda_{e^{i\alpha/2}} \comp\\
		                & \quad \left(\lambda_i \node{qn}{\text{H}}\right)^{\tensor m}  \comp \left(\lambda_i \node{qn}{\text{H}}\right)^{\tensor m} \nonumber \\
		                & = {\tikzfig{ZQ/q_spider_expand}} \lambda_{e^{i\alpha/2}}
	\end{align}
\end{proposition}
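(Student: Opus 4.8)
The plan is to exploit the fact that both sides of the translated rule already contain the same central gadget ${\tikzfig{ZQ/q_spider_expand}} \lambda_{e^{i\alpha/2}}$, namely the image $F_Q\left(\spider{gn}{\alpha}\right)$. The translated red spider carries one layer of $\lambda_i \node{qn}{\text{H}}$ on each leg, and the translated Hadamard conjugation in $F_Q\left(H\right)$ adds a second layer; so the entire content of the statement is that two stacked copies of $\lambda_i \node{qn}{\text{H}}$ on each of the $n$ upper legs and each of the $m$ lower legs collapse to bare identity wires. I would therefore reduce the whole proof to a single-wire cancellation and then apply it $n+m$ times, invoking functoriality of $\comp$ and $\tensor$ to act leg-by-leg.

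First I would isolate, on one wire, the composite $\left(\lambda_i \node{qn}{\text{H}}\right) \comp \left(\lambda_i \node{qn}{\text{H}}\right)$. I would gather the two scalars with rule $M$, giving $\lambda_{i \times i} = \lambda_{-1}$, and rewrite the remaining $\node{qn}{\text{H}} \comp \node{qn}{\text{H}}$ using Lemma~\ref{propZQHH}, which produces $\lambda_{-1} \node{none}{}$. Each leg then carries the scalar product $\lambda_{-1} \lambda_{-1}$ sitting on an identity wire.

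Next I would combine those scalars, again by rule $M$, into $\lambda_{(-1) \times (-1)} = \lambda_1$ and discard it with rule $I_1$, so that each leg reduces to a bare wire $\node{none}{}$. Carrying out this cancellation on all $n$ upper legs and all $m$ lower legs strips away every Hadamard decoration, and the left-hand side becomes exactly ${\tikzfig{ZQ/q_spider_expand}} \lambda_{e^{i\alpha/2}}$, which is the right-hand side.

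The only point requiring care is the bookkeeping of the accumulated scalars: each of the $n+m$ legs contributes a factor that must be shown to be $1$, and the chain $i \times i = -1$ followed by $(-1) \times (-1) = 1$ makes every such factor trivial. Beyond this routine scalar tracking there is no genuine obstacle, since the essential geometric fact — that two Hadamard rotations compose to $-1$ times the identity — has already been established as Lemma~\ref{propZQHH}.
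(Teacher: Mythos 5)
Your proposal is correct and follows essentially the same route as the paper: the paper's proof likewise applies Lemma~\ref{propZQHH} on each of the $n+m$ legs to obtain $\left(\lambda_i \lambda_i \lambda_{-1}\right)^{\tensor n}\left(\lambda_i \lambda_i \lambda_{-1}\right)^{\tensor m}$ times the bare translated spider, and then collapses the scalars with rule $M$ (implicitly discarding $\lambda_1$ via $I_1$, as you do explicitly). The only difference is the order in which you interleave the scalar bookkeeping, which is immaterial.
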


\begin{proof} \label{prfPropZQTranslationH}
	\begin{align}
		LHS & \by{\ref{propZQHH}} \left(\lambda_i \lambda_i \lambda_{-1}\right)^{\tensor n}  \left(\lambda_i \lambda_i \lambda_{-1}\right)^{\tensor m}
		{\tikzfig{ZQ/q_spider_expand}} \lambda_{e^{i\alpha/2}}
		\by{M} {\tikzfig{ZQ/q_spider_expand}} \lambda_{e^{i\alpha/2}}
	\end{align}
\end{proof}

Before proving the translation of the (EU') rule (Proposition~\ref{propZQTranslationEU}) we introduce some helpful lemmas.
ZQ $\entails F_Q\left(EU'\right)$.

\begin{lemma} \label{lemZQQLambdas}
	With the conditions of $ZX_{EU'}$
	\begin{align}
		\left(e^{i\left(\beta_1 + \beta_2 + \beta_3 + \gamma + 9\pi\right)/2}\right) \times
		\left(\frac{i}{\sqrt{2}}\right) \times
		\left(-\sqrt{2}\left(e^{i\gamma/2}\right)\right) = e^{i\left(\alpha_1 + \alpha_2 + \pi\right)/2}
	\end{align}
\end{lemma}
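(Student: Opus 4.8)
The plan is to verify this identity by direct computation, reducing the equality of two complex numbers to a single linear relation among the angles $\alpha_1,\alpha_2,\beta_1,\beta_2,\beta_3,\gamma$. First I would collapse the two explicit scalar prefactors, noting that $\frac{i}{\sqrt{2}} \times (-\sqrt{2}) = -i$, so the left-hand side becomes $(-i)\,e^{i(\beta_1+\beta_2+\beta_3+\gamma+9\pi)/2}\,e^{i\gamma/2}$. Writing $-i = e^{-i\pi/2}$ and collecting every exponent over the common denominator $2$, the left-hand side is $e^{i(\beta_1+\beta_2+\beta_3+2\gamma+8\pi)/2}$. Since both sides manifestly have modulus $1$, it suffices to compare phases, and since $8\pi$ contributes a factor $e^{4\pi i}=1$ it may be discarded. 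Thus the claim reduces to showing
\begin{align}
	\beta_1 + \beta_2 + \beta_3 + 2\gamma \equiv \alpha_1 + \alpha_2 + \pi \pmod{4\pi}.
\end{align}

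Next I would substitute the side-condition definitions of rule $(EU')$ from Figure~\ref{figZXRules}. The key observation is that the three $\beta$-terms and the $\gamma$-term are engineered precisely so that the unknown arguments cancel. From $\beta_1 = \arg z + \arg z'$ and $\beta_3 = \arg z - \arg z'$ one gets $\beta_1 + \beta_3 = 2\arg z$, eliminating $\arg z'$ entirely; and from $\gamma = x^+ - \arg z + \frac{\pi - \beta_2}{2}$ one gets $2\gamma = 2x^+ - 2\arg z + \pi - \beta_2$. Summing these together with $\beta_2$, both the $2\arg z$ contribution and the $\beta_2$ contribution cancel, leaving exactly $2x^+ + \pi$. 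Finally, using $x^+ = \frac{\alpha_1 + \alpha_2}{2}$ gives $2x^+ = \alpha_1 + \alpha_2$, so the expression equals $\alpha_1 + \alpha_2 + \pi$ on the nose, which is in fact stronger than the congruence required above.

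Because this cancellation is purely formal — the symbols $\arg z'$, $\arg z$, and $\beta_2$ all drop out before any of their values are needed — the convention cases $\arg(0) := 0$ and $z' = 0 \implies \beta_2 = 0$ never interfere, so no separate case analysis is necessary. The only real place an error could creep in is the bookkeeping of the constant phase offset: one must carry the $9\pi$ in the exponent together with the $-\pi/2$ coming from $-i$ and the extra $\gamma/2$, and confirm that they combine to the harmless $8\pi$ (equivalently $4\pi$ after halving) rather than shifting the result by an odd multiple of $\pi$. This is the main obstacle, and it is dispatched by the arithmetic sketched above.
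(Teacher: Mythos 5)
Your proposal is correct and follows essentially the same route as the paper's own proof: collapse the scalar prefactors to $-i$, fold everything into a single exponent, substitute the $(EU')$ side conditions for $\beta_1$, $\beta_3$, and $\gamma$, and watch $\arg z$, $\arg z'$, and $\beta_2$ cancel to leave $2x^+ + \pi = \alpha_1 + \alpha_2 + \pi$. Your explicit tracking of the constant phase modulo $4\pi$ and the remark that the $\arg(0)$ conventions are irrelevant are slightly more careful than the paper's presentation, but the argument is the same.
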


\begin{proof} \label{prfLemZQQLambdas}
	\begin{align}
		LHS = & \left(e^{i\left(\beta_1 + \beta_2 + \beta_3 + \gamma + 9\pi\right)/2}\right) \times
		\left(\frac{i}{\sqrt{2}}\right) \times
		\left(-\sqrt{2}\left(e^{i\gamma/2}\right)\right) \\
		=     & \left(-i\right)\left(e^{i\left(\beta_1 + \beta_2 + \beta_3 + 2\gamma + \pi\right)/2}\right)                                           \\
		=     & \left(-i\right)\left(e^{i\left(\arg z + \arg z' + \beta_2 + \arg z - \arg z' + 2x^+ - 2 \arg z + \pi - \beta_2 + \pi\right)/2}\right) \\
		=     & \left(-i\right)\left(e^{i\left(2x^++2\pi\right)/2}\right)                                                                             \\
		=     & e^{i\left(\alpha_1 + \alpha_2 + \pi\right)/2}
	\end{align}
\end{proof}

\begin{lemma}\label{lemZQHComm}
	The quaternion $\left(\pi, \frac{x+z}{\sqrt{2}}\right)$ and its interactions with $\left(\alpha, z\right)$ and $\left(\alpha, x\right)$:
	\begin{align}
		\left(\pi, \frac{x+z}{\sqrt{2}}\right) \times \left(\alpha,z\right) = \left(\alpha, x\right) \times \left(\pi, \frac{x+z}{\sqrt{2}}\right) \\
		\left(\alpha,z\right) \times \left(\pi, \frac{x+z}{\sqrt{2}}\right) = \left(\pi, \frac{x+z}{\sqrt{2}}\right) \times \left(\alpha,x\right)  \\
		\left(\pi, \frac{x+z}{\sqrt{2}}\right) \times \left(\pi, \frac{x+z}{\sqrt{2}}\right) = -1
	\end{align}
\end{lemma}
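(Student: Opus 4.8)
The plan is to reduce all three identities to direct computations in the quaternion algebra, expanding the angle--vector pairs via Definition~\ref{defUQUatSO3} and multiplying out using the defining relations of Definition~\ref{defQuaternions}. First I would record the three quaternions in standard form, noting in particular that the Hadamard quaternion is purely imaginary:
\begin{align*}
\left(\pi, \tfrac{x+z}{\sqrt 2}\right) &= \tfrac{1}{\sqrt 2}(i + k), &
(\alpha, z) &= \cos\half[\alpha] + k\sin\half[\alpha], &
(\alpha, x) &= \cos\half[\alpha] + i\sin\half[\alpha].
\end{align*}
From $i^2=j^2=k^2=ijk=-1$ one derives the two products I will need, namely $ik=-j$ and $ki=+j$ (the imaginary units anticommute). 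Writing $c:=\cos\half[\alpha]$ and $s:=\sin\half[\alpha]$, which are real and hence central, the whole lemma becomes a bookkeeping exercise in these relations.

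The third identity is the quickest: $(i+k)^2 = i^2 + ik + ki + k^2 = -1 - j + j - 1 = -2$, so $\left(\pi, \tfrac{x+z}{\sqrt 2}\right)^2 = \tfrac12(-2) = -1$. The cross terms $ik$ and $ki$ cancel, which is the only place non-commutativity plays any role here.

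For the first two identities I would expand both sides and compare the four real components. In $\tfrac{1}{\sqrt 2}(i+k)(c+ks)$ the mixed term gives $i(ks)=(ik)s=-sj$, and likewise $(c+is)\tfrac{1}{\sqrt 2}(i+k)$ produces $(is)k = s(ik) = -sj$; both sides of the first identity therefore collapse to $\tfrac{1}{\sqrt 2}(-s + ci - sj + ck)$. Reversing the order of the two factors (as in the second identity) replaces the relevant product by $ki=+j$, so both sides there equal $\tfrac{1}{\sqrt 2}(-s + ci + sj + ck)$. Matching components establishes each equality.

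The computation is routine; the only place that genuinely demands care --- the nearest thing to an obstacle --- is keeping the sign of the $j$-component straight, since an accidental $ik=+j$ would destroy the match. As a sanity check, the first two identities have a clean geometric reading: conjugation by the Hadamard quaternion, which is a $180^\circ$ rotation about the $x$--$z$ diagonal, interchanges rotations about the $x$- and $z$-axes, exactly as the equalities assert.
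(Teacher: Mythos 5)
Your proposal is correct and follows essentially the same route as the paper: expand $\left(\pi,\tfrac{x+z}{\sqrt 2}\right)$ as $\tfrac{1}{\sqrt 2}(i+k)$, multiply out both sides of each identity using $ik=-j$, $ki=+j$, and compare components. The component expansions you give match the paper's line by line, so there is nothing to add beyond your (accurate) geometric remark.
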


\begin{proof} \label{prfLemZQHComm}
	\begin{align}
		(\pi, \frac{x+z}{\sqrt{2}}) \times (\alpha, z) =                 & \frac{1}{\sqrt{2}}(i+k)(\cos \frac{\alpha}{2} + k \sin \frac{\alpha}{2})                                              \\
		=                                                                & \frac{1}{\sqrt{2}}(-\sin\frac{\alpha}{2} + i \cos \frac{\alpha}{2} - j \sin \frac{\alpha}{2} +k \cos \frac{\alpha}{2} \\
		=                                                                & \frac{1}{\sqrt{2}}(\cos\frac{\alpha}{2}+i\sin\frac{\alpha}{2})(i+k)                                                   \\
		=                                                                & (\alpha, x) \times (\pi, \frac{x+z}{\sqrt{2}})                                                                        \\
		\nonumber \\
		(\pi, \frac{x+z}{\sqrt{2}})\times(\alpha, x) =                   & \frac{1}{\sqrt{2}}(i+k)(\cos \frac{\alpha}{2} + i \sin \frac{\alpha}{2})                                              \\
		=                                                                & \frac{1}{\sqrt{2}}(-\sin\frac{\alpha}{2} + i \cos \frac{\alpha}{2} + j \sin \frac{\alpha}{2} +k \cos \frac{\alpha}{2} \\
		=                                                                & \frac{1}{\sqrt{2}}(\cos\frac{\alpha}{2}+k\sin\frac{\alpha}{2})(i+k)                                                   \\
		=                                                                & (\alpha, z) \times (\pi, \frac{x+z}{\sqrt{2}})                                                                        \\
		\nonumber \\
		(\pi, \frac{x+z}{\sqrt{2}}) \times (\pi, \frac{x+z}{\sqrt{2}}) = & \frac{1}{\sqrt{2}}(i+k)\frac{1}{\sqrt{2}}(i+k)                                                                        \\
		=                                                                & \frac{1}{2}(-1-j-1+j) = -1
	\end{align}
\end{proof}

We reproduce the side conditions for the rule $\ZX_{EU'}$ for reference here:

\begin{displayquote}[Figure~2, A Near-Minimal Axiomatisation of ZX-Calculus
		for Pure Qubit Quantum Mechanics \cite{VilmartZX}]
	In rule (EU'), $\beta_1$, $\beta_2$, $\beta_3$ and $\gamma$
	can be determined as follows: $x^+ := \frac{\alpha_1 + \alpha_2}{2}$,
	$x^- := x^-\alpha_2$, $z := - \sin (x^+) + i \cos(x^-)$
	and $z' := \cos(x^+) - i \sin (x^-)$,
	then $\beta_1 = \arg z + \arg z'$, $\beta_2 = 2 \arg(i + \frac{\abs{z}}{\abs{z'}})$,
	$\beta_3 = \arg z - \arg z'$, $\gamma = x^+ - \arg(z) + \frac{\pi - \beta_2}{2}$
	where by convention $\arg(0) := 0$ and $z' = 0 \implies \beta_2 = 0$.
\end{displayquote}

\begin{lemma} \label{lemZQEUQuaternion}
	With the conditions of $ZX_{EU'}$:
	\begin{align}
		\left(\alpha_1, z\right) \times H \times \left(\alpha_2, z\right) = H \times \left(\beta_1, z\right) \times H  \times \left(\beta_2,z\right) \times H \times \left(\beta_3, z\right) \times H
	\end{align}
\end{lemma}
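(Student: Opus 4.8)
The plan is to first eliminate the four Hadamard factors on the right-hand side, reducing the claim to a pure Euler-type identity between a $ZHZ$ product and an $XZX$ product, and then to verify that identity componentwise using the side conditions of $ZX_{EU'}$.

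First I would use Lemma~\ref{lemZQHComm}. The relation $H \times (\beta, z) = (\beta, x) \times H$ lets me drag each Hadamard rightward past the adjacent $(\beta_i, z)$ factor, converting it into $(\beta_i, x)$, while the relation $H \times H = -1$ collapses the resulting pairs of adjacent Hadamards. Carrying this out on $H(\beta_1,z)H(\beta_2,z)H(\beta_3,z)H$ produces two factors of $-1$ that cancel, so the right-hand side equals $(\beta_1, x)\times(\beta_2,z)\times(\beta_3,x)$. The statement to prove thus becomes $(\alpha_1,z)\times H \times (\alpha_2,z) = (\beta_1,x)\times(\beta_2,z)\times(\beta_3,x)$.

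Next I would expand both sides as explicit quaternions. The left side is exactly the product already computed in the proof of Lemma~\ref{lemZQAlphaHBeta}, namely $\tfrac{1}{\sqrt2}(-\sin x^+ + i\cos x^- + j \sin x^- + k \cos x^+)$, where $x^+ = \tfrac{\alpha_1+\alpha_2}{2}$ and $x^- = \tfrac{\alpha_1 - \alpha_2}{2}$. Expanding $(\beta_1,x)(\beta_2,z)(\beta_3,x)$ gives $\cos\tfrac{\beta_2}{2}\cos\tfrac{\beta_1+\beta_3}{2} + i\cos\tfrac{\beta_2}{2}\sin\tfrac{\beta_1+\beta_3}{2} + j\sin\tfrac{\beta_2}{2}\sin\tfrac{\beta_3-\beta_1}{2} + k\sin\tfrac{\beta_2}{2}\cos\tfrac{\beta_3-\beta_1}{2}$. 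Writing each quaternion as a pair of complex numbers $\zeta + \eta j$, the left side is $\tfrac{1}{\sqrt2}(z + (iz')j)$ with $z = -\sin x^+ + i\cos x^-$ and $z' = \cos x^+ - i\sin x^-$ (precisely the $z,z'$ of the side condition), while the right side is $\cos\tfrac{\beta_2}{2}e^{i(\beta_1+\beta_3)/2} + \big(i\sin\tfrac{\beta_2}{2}e^{i(\beta_1-\beta_3)/2}\big)j$.

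Finally I would substitute the side conditions. Since $\beta_1 = \arg z + \arg z'$ and $\beta_3 = \arg z - \arg z'$, we have $\tfrac{\beta_1+\beta_3}{2} = \arg z$ and $\tfrac{\beta_1-\beta_3}{2} = \arg z'$, so matching the two pairs reduces to the two magnitude equations $\abs{z} = \sqrt2\cos\tfrac{\beta_2}{2}$ and $\abs{z'} = \sqrt2\sin\tfrac{\beta_2}{2}$. These are consistent because $\abs{z}^2 + \abs{z'}^2 = \sin^2 x^+ + \cos^2 x^- + \cos^2 x^+ + \sin^2 x^- = 2$, and the formula $\beta_2 = 2\arg(i + \abs{z/z'})$ yields $\tan\tfrac{\beta_2}{2} = \abs{z'}/\abs{z}$ with $\beta_2 \in [0,\pi]$, which forces the cosine and sine to be exactly the nonnegative roots required. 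The main obstacle is not this generic computation but the degenerate cases: when $z' = 0$ (so $\beta_2 = 0$ by convention) or $z = 0$, the arguments are set to $0$ and the identity must be checked directly, confirming that the convention $\arg(0):=0$ together with the enforced range of $\beta_2$ still makes the two pairs coincide. (As a sanity check one could instead push the whole identity through the isomorphism $\phi$ of Proposition~\ref{propZQPhi} and compare with the soundness of $ZX_{EU'}$, but the direct quaternion computation avoids tracking the extra global phases.)
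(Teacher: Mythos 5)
Your proposal is correct and follows essentially the same route as the paper: eliminate the Hadamards via Lemma~\ref{lemZQHComm} to reduce the right-hand side to $(\beta_1,x)\times(\beta_2,z)\times(\beta_3,x)$, expand both sides as explicit quaternions, and match components using the side conditions ($\tfrac{\beta_1+\beta_3}{2}=\arg z$, $\tfrac{\beta_1-\beta_3}{2}=\arg z'$, $\cos\tfrac{\beta_2}{2}=\abs{z}/\sqrt{2}$, $\sin\tfrac{\beta_2}{2}=\abs{z'}/\sqrt{2}$, with $\abs{z}^2+\abs{z'}^2=2$). Your packaging of each quaternion as a complex pair $\zeta+\eta j$ is a tidier presentation of the same computation, and your explicit attention to the degenerate cases $z=0$ and $z'=0$ is a point of care the paper's proof passes over.
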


In the hope of easing legibility we separate out the real, $i$, $j$, and $k$ components
of quaternions onto separate lines where suitable.

\begin{proof} \label{prfLemZQEUQuaternion}
	\begin{align}
		RHS =& H \times (\beta_1, z) \times H  \times (\beta_2,z) \times H \times (\beta_3, z) \times H \\
		= & H \times H \times H \times H \times (\beta_1, x) \times  (\beta_2,z) \times (\beta_3, x) & (\ref{lemZQHComm}) \\
		= & (\beta_1, x) \times  (\beta_2,z) \times (\beta_3, x)                                     & (\ref{lemZQHComm}) \\
		=& (\cos \beta_1 / 2 + i \sin \beta_1 / 2) \times  (\cos \beta_2 / 2 + k \sin \beta_2 / 2) \times  (\cos \beta_3 / 2 + i \sin \beta_3 / 2) \\
		=& 1(\cos \beta_1 /2 \cos \beta_2/2 \cos \beta_3/2 - \sin \beta_1 / 2 \cos \beta_2 /2 \sin \beta_3 / 2) +\\
		\nonumber & i (\cos \beta_1 /2 \cos \beta_2/2 \sin \beta_3/2 + \sin \beta_1 / 2 \cos \beta_2 /2 \cos \beta_3 / 2) + \\
		\nonumber&j (\cos \beta_1 /2 \sin \beta_2/2 \sin \beta_3/2 - \sin \beta_1 / 2 \sin \beta_2 /2 \cos \beta_3 / 2) + \\
		\nonumber&k (\cos \beta_1 /2 \sin \beta_2/2 \cos \beta_3/2 + \sin \beta_1 / 2 \sin \beta_2 /2 \sin \beta_3 / 2) \\
		= & 1 (\cos \beta_2/2)(\cos \beta_1 /2 \cos \beta_3/2 - \sin \beta_1 / 2 \sin \beta_3 / 2) + \\
		\nonumber & i (\cos \beta_2 /2)(\cos \beta_1 /2 \sin \beta_3/2 + \sin \beta_1 / 2  \cos \beta_3 / 2) + \\
		\nonumber & j (\sin \beta_2/2)(\cos \beta_1 /2  \sin \beta_3/2 - \sin \beta_1 / 2  \cos \beta_3 / 2) + \\
		\nonumber & k (\sin \beta_2/2)(\cos \beta_1 /2 \cos \beta_3/2 + \sin \beta_1 / 2\sin \beta_3 / 2) \\
		= & 1 (\cos \beta_2/2)(\cos \frac{\beta_1 + \beta_3}{2}) + \\
		\nonumber & i (\cos \beta_2 /2)(\sin \frac{\beta_1 + \beta_3}{2}) + \\
		\nonumber & j (\sin \beta_2/2)(\sin \frac{\beta_3 - \beta_1}{2}) + \\
		\nonumber & k (\sin \beta_2/2)(\cos \frac{\beta_1 - \beta_3}{2}) \\
		= & 1 (\cos \beta_2/2)(\cos \arg z) + \\
		\nonumber & i (\cos \beta_2 /2)(\sin \arg z) + \\
		\nonumber & j (\sin \beta_2/2)(- \sin \arg z') + \\
		\nonumber & k (\sin \beta_2/2)(\cos \arg z') \\
	\end{align}
	Using properties of arguments and moduli we then show the following:
	\begin{align}
		\cos \arg (a + ib) =     & a / \abs{a+ib}                                                                            \\
		\sin \arg (a + ib) =     & b / \abs{a+ib}                                                                            \\
		\abs{z}^2 =              & \sin(x^+)^2 + \cos(x^-)^2                                                                 \\
		\abs{z'}^2 =             & \sin(x^-)^2 + \cos(x^+)^2                                                                 \\
		\abs{z}^2 + \abs{z'}^2 = & \cos^2 x^+ + \sin^2 x^+ + \cos^2 x^- + \sin^2 x^- = 2                                     \\                                            \\
		\cos \arg z =            & \Re(z) / \abs{z} =  \frac{-\sin(\alpha_1 + \alpha_2)/2}{\abs{z}}                          \\
		\sin \arg z =            & \Im(z) / \abs{z} = \frac{\cos(\alpha_1 - \alpha_2)/2}{\abs{z}}                            \\
		\cos \arg z' =           & \Re(z') / \abs{z'}  = \frac{\cos(\alpha_1 + \alpha_2)/2}{\abs{z'}}                        \\
		\sin \arg z' =           & \Im(z') / \abs{z'} =\frac{-\sin(\alpha_1 - \alpha_2)/2}{\abs{z'}}                         \\
		\nonumber \\
		\cos(\beta_2 / 2) =      & \cos \arg (i + \abs{z}/\abs{z'})  = \cos \arg (\abs{z'}i + \abs{z})  = \abs{z} / \sqrt{2} \\
		\sin(\beta_2 / 2) =      & \sin \arg (i + \abs{z}/\abs{z'}) = \sin \arg (\abs{z'}i + \abs{z}) =\abs{z'} / \sqrt{2}
	\end{align}
	And now substitute these values into our expression for the right hand side:
	\begin{align}
		RHS =     & 1 (\cos \beta_2/2)(\cos \arg z) +                                           \\
		\nonumber & i (\cos \beta_2 /2)(\sin \arg z) +                                          \\
		\nonumber & j (\sin \beta_2/2)(- \sin \arg z') +                                        \\
		\nonumber & k (\sin \beta_2/2)(\cos \arg z')                                            \\
		=         & 1 (\abs{z} / \sqrt{2})(\frac{-\sin(\alpha_1 + \alpha_2)/2}{\abs{z}}) +      \\
		\nonumber & i (\abs{z} / \sqrt{2})(\frac{\cos(\alpha_1 - \alpha_2)/2}{\abs{z}}) +       \\
		\nonumber & j (\abs{z'} / \sqrt{2} )(- \frac{-\sin(\alpha_1 - \alpha_2)/2}{\abs{z'}}) + \\
		\nonumber & k (\abs{z'} / \sqrt{2} )(\frac{\cos(\alpha_1 + \alpha_2)/2}{\abs{z'}})      \\
		=         & (\frac{1}{\sqrt{2}}) \times                                                 \\
		\nonumber & (-1 (\sin(\alpha_1 + \alpha_2)/2) +                                         \\
		\nonumber & i (\cos(\alpha_1 - \alpha_2)/2) +                                           \\
		\nonumber & j (\sin(\alpha_1 - \alpha_2)/2) +                                           \\
		\nonumber & k (\cos(\alpha_1 + \alpha_2)/2))                                            \\
	\end{align}
	And now for the left hand side:
	\begin{align}
		LHS =     & (\alpha_1, z) \times H \times (\alpha_2, z)                                                                                             \\
		=         & \frac{1}{\sqrt{2}} (\cos\alpha_1 + k\sin\alpha_1) (i+k) (\cos\alpha_2 + k\sin\alpha_2)                                                  \\
		=         & \frac{1}{\sqrt{2}} (i \cos\alpha_1 \cos\alpha_2 - j \cos\alpha_1\sin\alpha_2 + k \cos\alpha_1 \cos\alpha_2 - \cos\alpha_1\sin\alpha_2 + \\
		\nonumber & j \sin\alpha_1\cos\alpha_2 + i \sin\alpha_1\sin\alpha_2  - \sin\alpha_1\cos\alpha_2 - k \sin\alpha_1\sin\alpha_2)                       \\
		=         & \frac{1}{\sqrt{2}} (-(\cos\alpha_1\sin\alpha_2 + \sin\alpha_1\cos\alpha_2)                                                              \\
		\nonumber & i(\sin\alpha_1\sin\alpha_2 + \cos\alpha_1\alpha_2) +                                                                                    \\
		\nonumber & j(\sin\alpha_1\cos\alpha_2 - \cos\alpha_1\sin\alpha_2) +                                                                                \\
		\nonumber & k(\cos\alpha_1\cos\alpha_2 - \sin\alpha_1\sin\alpha_2))                                                                                 \\
		=         & (\frac{1}{\sqrt{2}}) \times                                                                                                             \\
		\nonumber & (-1 (\sin(\alpha_1 + \alpha_2)/2) +                                                                                                     \\
		\nonumber & i (\cos(\alpha_1 - \alpha_2)/2) +                                                                                                       \\
		\nonumber & j (\sin(\alpha_1 - \alpha_2)/2) +                                                                                                       \\
		\nonumber & k (\cos(\alpha_1 + \alpha_2)/2))
	\end{align}
\end{proof}

\begin{lemma} \label{lemZQGamma}
	\begin{align}
		ZQ \entails {\tikzfig{ZQ/q_pigamma}} & = \lambda_{-\sqrt{2}\left(e^{i\gamma/2}\right)}
	\end{align}
\end{lemma}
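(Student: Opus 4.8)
The plan is to derive this as a direct specialisation of Lemma~\ref{lemZQAlphaHBeta}. The diagram ${\tikzfig{ZQ/q_pigamma}}$ is precisely the Z-state / Hadamard / Z-effect scalar of that lemma with the two angles set to $\alpha = \pi$ and $\beta = \gamma$, so the same sequence of rewrites --- $I_z$ to absorb the identity Z spiders, $Q$ to fuse the two quaternion decorations across the Hadamard, and then $A$ to collapse the sandwiched $Q$ node into a scalar --- applies verbatim.

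First I would invoke Lemma~\ref{lemZQAlphaHBeta} with $\alpha = \pi$ and $\beta = \gamma$, which immediately gives the scalar $\lambda_{-\sqrt{2}\left(\sin\frac{\pi+\gamma}{2} + i\cos\frac{\pi-\gamma}{2}\right)}$. Then I would simplify the two trigonometric coefficients using the angle-shift identities $\sin\left(\half[\pi] + \half[\gamma]\right) = \cos\half[\gamma]$ and $\cos\left(\half[\pi] - \half[\gamma]\right) = \sin\half[\gamma]$, so that the scalar parameter becomes $-\sqrt{2}\left(\cos\half[\gamma] + i\sin\half[\gamma]\right) = -\sqrt{2}\,e^{i\gamma/2}$, which is exactly the claimed right-hand side.

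There is essentially no obstacle here: all of the real content is carried by Lemma~\ref{lemZQAlphaHBeta}, and the only computation is the elementary trigonometric rewriting. The single point worth checking with care is that the diagram genuinely is the $(\pi,\gamma)$ instance of the earlier one --- that is, that the $\pi$- and $\gamma$-labelled nodes are the quaternion decorations $\left(\pi, z\right)$ and $\left(\gamma, z\right)$ flanking a Hadamard edge --- after which the equality follows with no further diagrammatic reasoning.
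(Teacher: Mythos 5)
Your proposal is correct and matches the paper's proof exactly: the paper likewise dispatches this lemma as "a special case of Lemma~\ref{lemZQAlphaHBeta}", and your instantiation at $(\pi,\gamma)$ together with the identities $\sin\bigl(\tfrac{\pi+\gamma}{2}\bigr)=\cos\tfrac{\gamma}{2}$ and $\cos\bigl(\tfrac{\pi-\gamma}{2}\bigr)=\sin\tfrac{\gamma}{2}$ is precisely the omitted arithmetic.
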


\begin{proof} \label{prfLemZQGamma}
	This is a special case of Lemma~\ref{lemZQAlphaHBeta}
\end{proof}

\begin{proposition} \label{propZQTranslationEU}
	\begin{align}
		ZQ \entails     & F_Q\left(EU'\right)                                                                                                                                                         \\
		\ie ZQ \entails & \lambda_{e^{\frac{i}{2}\left(\alpha_1 + \alpha_2 + \pi\right)}} {\tikzfig{ZQ/q_EU1}} =\lambda_{e^{i(\beta_1 + \beta_2 + \beta_3 + \gamma + 9\pi)/2}} \ {\tikzfig{ZQ/q_eur}}
	\end{align}
\end{proposition}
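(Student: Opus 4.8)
The plan is to reduce the right-hand side of $F_Q\left(EU'\right)$ to the left-hand side, using the quaternion identity of Lemma~\ref{lemZQEUQuaternion} to do the essential work and Lemma~\ref{lemZQQLambdas} to reconcile the scalars. Both translated diagrams are $1 \to 1$ maps, so every internal $Z$ spider sits on a single wire and can be deleted with the $I_z$ rule, exposing on each side a bare chain of $Q$-decorated edges (interspersed with Hadamard edges) on one wire, together with a prefactor scalar.

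First I would dispose of the auxiliary structure on the right. The three parallel Hadamard rotations collapse to the scalar $\lambda_{i/\sqrt{2}}$ by Lemma~\ref{propZQHHH}, and the leftover $(\gamma, z)$ rotation meeting a Hadamard edge collapses to $\lambda_{-\sqrt{2}\left(e^{i\gamma/2}\right)}$ by Lemma~\ref{lemZQGamma} (itself a special case of Lemma~\ref{lemZQAlphaHBeta}). What survives is the core quaternion chain $H,\ (\beta_1, z),\ H,\ (\beta_2, z),\ H,\ (\beta_3, z),\ H$ on a single wire, now carrying the prefactor $\lambda_{e^{i(\beta_1 + \beta_2 + \beta_3 + \gamma + 9\pi)/2}}$ together with the two scalars just extracted.

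Next I would fuse that chain: deleting the intervening wires by $I_z$ and applying the $Q$ rule repeatedly merges it into a single $Q$ node whose label is the quaternion $H \times (\beta_1,z) \times H \times (\beta_2,z) \times H \times (\beta_3,z) \times H$. By Lemma~\ref{lemZQEUQuaternion} this label equals, as a unit quaternion, $(\alpha_1, z) \times H \times (\alpha_2, z)$; splitting the node back apart with $Q$ (and reinserting wires with $I_z$) yields exactly the left-hand chain $\node{qn}{(\alpha_1,z)}$, $\node{qn}{H}$, $\node{qn}{(\alpha_2,z)}$ of $\tikzfig{ZQ/q_EU1}$. Finally, collecting the three accumulated scalars with the $M$ rule and invoking Lemma~\ref{lemZQQLambdas} shows their product is $e^{i(\alpha_1 + \alpha_2 + \pi)/2}$, which is precisely the prefactor on the left, so the two sides are provably equal in $ZQ$.

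The genuine obstacle has already been isolated into Lemma~\ref{lemZQEUQuaternion}, where the full Euler-angle side condition of (EU') is discharged at the level of quaternion multiplication, and into the scalar identity of Lemma~\ref{lemZQQLambdas}. With those in hand, the present proposition is careful diagrammatic bookkeeping; the one delicate point is tracking the Hadamard edges and the cup/cap structure so that the parallel-Hadamard scalar and the $\gamma$-rotation collapse cleanly and no stray generators remain on either wire.
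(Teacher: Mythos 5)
Your proposal is correct and follows essentially the same route as the paper: both reduce each side to a single $Q$ node on one wire via $I_z$/$I_q$ and $Q$, discharge the Euler-angle content through Lemma~\ref{lemZQEUQuaternion}, collapse the auxiliary gadgets with Lemmas~\ref{lemZQGamma} and~\ref{propZQHHH}, and reconcile the scalars with $M$ and Lemma~\ref{lemZQQLambdas}. The only difference is ordering (you clear the scalar gadgets before invoking the quaternion identity, the paper does it after), which is immaterial.
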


\begin{proof} \label{prfPropZQTranslationEU}
	\begin{align}
		LHS \by{I_q, Q}              & \node{qn}{\left(\alpha_1, z\right) \times H \times \left(\alpha_2, z\right)} \lambda_{e^{i\left(\alpha_1 + \alpha_2 + \pi\right)/2}} \\
		RHS =                        &
		\overset{\lambda_{e^{i\left(\beta_1 + \beta_2 + \beta_3 + \gamma + 9\pi\right)/2}}}{{\tikzfig{ZQ/q_HHH}}}
		{\tikzfig{ZQ/q_pigamma}}
		\node{qn}{H \times \left(\beta_1, z\right) \times H  \times \left(\beta_2,z\right) \times H \times \left(\beta_3, z\right) \times H} \\
		\by{\ref{lemZQEUQuaternion}} &
		\lambda_{e^{i\left(\beta_1 + \beta_2 + \beta_3 + \gamma + 9\pi\right)/2}}
		{\tikzfig{ZQ/q_HHH}}
		{\tikzfig{ZQ/q_pigamma}}
		\node{qn}{\left(\alpha_1, z\right) \times H \times \left(\alpha_2, z\right)}  \\
		\by{\ref{lemZQGamma}}        &
		\lambda_{e^{i\left(\beta_1 + \beta_2 + \beta_3 + \gamma + 9\pi\right)/2}}
		{\tikzfig{ZQ/q_HHH}}
		\lambda_{-\sqrt{2}\left(e^{i\gamma/2}\right)}
		\node{qn}{\left(\alpha_1, z\right) \times H \times \left(\alpha_2, z\right)} \\
		\by{\ref{propZQHHH}}         &
		\lambda_{e^{i\left(\beta_1 + \beta_2 + \beta_3 + \gamma + 9\pi\right)/2}}
		\lambda_{\frac{i}{\sqrt{2}}}
		\lambda_{-\sqrt{2}\left(e^{i\gamma/2}\right)}
		\node{qn}{\left(\alpha_1, z\right) \times H \times \left(\alpha_2, z\right)} \\
		\by{\ref{lemZQQLambdas}}     &
		\lambda_{e^{i\left(\alpha_1 + \alpha_2 + \pi\right)/2}}
		\node{qn}{\left(\alpha_1, z\right) \times H \times \left(\alpha_2, z\right)}
	\end{align}
\end{proof}

We have shown that for every rule $L=R$ in ZX, $ZQ \entails F_Q\left(L\right) = F_Q\left(R\right)$.
We have therefore shown that if $ZX \entails D_1 = D_2$ then $ZQ \entails F_Q\left(D_1\right) = F_Q\left(D_2\right)$.

\subsection{From ZQ to ZX and back again} \label{secCompletenessZXZQBack}

It remains to be shown that $ZQ \entails F_Q\left(F_X\left(D\right)\right) = D$

\begin{proposition} \label{propZQRetranslateZSpider} Re-translating the Z spider
	\begin{align}
		ZQ \entails F_Q\left(F_X\left(\spider{smallZ}{}\right)\right) = \spider{smallZ}{}
	\end{align}
\end{proposition}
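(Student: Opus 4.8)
The plan is to unwind the composite $F_Q \circ F_X$ acting on the single generator $\spider{smallZ}{}$ and observe that it collapses onto an equality that has already been established, so that the whole proposition reduces to citing Lemma~\ref{lemZQTranslationZ}. Since both $F_X$ and $F_Q$ are strict monoidal functors defined on generators, I only have to track how this one generator is transported across to ZX and back again.

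First I would apply the definition of $F_X$ from Figure~\ref{figZQZXTranslation}. On the $Z$ generator the translation is given directly by the assignment $F_X\left(\spider{smallZ}{}\right) = \spider{gn}{}$, sending the ZQ Z spider to the phase-free green Z spider of ZX; this step is immediate and requires no rules. Next I would apply $F_Q$ to the result. By definition $F_Q$ sends a green spider of phase $\alpha$ to $\tikzfig{ZQ/q_spider_expand}\,\lambda_{e^{i\alpha/2}}$, and here $\alpha=0$, so the accompanying scalar is $\lambda_{e^{0}} = \lambda_1$. But the equality of exactly this diagram with $\spider{smallZ}{}$ is the content of Lemma~\ref{lemZQTranslationZ}, whose derivation removes the identity-quaternion decorations by rule $I_q$, fuses the Z spiders by rule $S$, and discards the trivial scalar $\lambda_1$ by rule $I_1$. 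Chaining the two observations gives the one-line derivation:
\begin{align}
	F_Q\left(F_X\left(\spider{smallZ}{}\right)\right) = F_Q\left(\spider{gn}{}\right) \by{\ref{lemZQTranslationZ}} \spider{smallZ}{}
\end{align}

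I do not expect a genuine obstacle here: the entire proposition is contained in the definition of the translation on this one generator together with Lemma~\ref{lemZQTranslationZ}, and no new calculation is needed. The only point that warrants a moment's care is that the phase reappearing after translation is genuinely $0$ — so that the scalar produced by $F_Q$ is $\lambda_1$ and vanishes, leaving no residual $Q$ nodes — and this is guaranteed because the ZQ Z spider is phase-free by construction. This base case then slots into the larger induction of \S\ref{secCompletenessZXZQBack}, where the analogous re-translation statements for the remaining generators are handled separately.
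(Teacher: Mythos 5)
Your proof is correct and is essentially identical to the paper's: both apply the definition of $F_X$ to land on the phase-free green spider and then invoke Lemma~\ref{lemZQTranslationZ} in a one-line derivation. Your additional remark about the scalar $\lambda_1$ vanishing is a harmless elaboration already absorbed into that lemma.
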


\begin{proof} \label{prfPropZQRetranslateZSpider}
	\begin{align}
		LHS = F_Q\left(\spider{gn}{}\right) \by{\ref{lemZQTranslationZ}} \spider{smallZ}{}
	\end{align}
\end{proof}

The following lemmas are necessary for the re-translation of the $Q$ node in Proposition~\ref{propZQRetranslateQ}.

\begin{lemma} \label{lemZQRetranslateTripleBlobs}
	\begin{align}
		ZQ \entails F_Q\left(\tripleblobs\right) = \lambda_{1 / \sqrt{2}}
	\end{align}
\end{lemma}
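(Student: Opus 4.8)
The plan is to reduce the translated scalar $F_Q\left(\tripleblobs\right)$ to the three-Hadamard diagram already evaluated in Lemma~\ref{propZQHHH}. The triple-blob diagram is a phase-free green $Z$-spider with no inputs and three outputs, wired to a phase-free red $X$-spider with three inputs and no outputs. First I would unfold $F_Q$ on each spider using Figure~\ref{figZQZXTranslation}. Since the green spider carries phase $0$, Lemma~\ref{lemZQTranslationZ} rewrites its image directly as a plain \ZQ\ $Z$-spider together with the scalar $\lambda_{e^{i0/2}}=\lambda_1$. The red spider translates to the same green-spider image conjugated on each leg by $\lambda_i Q_{\text H}$; applying Lemma~\ref{lemZQTranslationZ} to its inner part, $F_Q$ of the red spider becomes a plain $Z$-spider whose three legs each carry $\lambda_i Q_{\text H}$, again with scalar $\lambda_1$.

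Assembling the two translated pieces leaves two plain $Z$-spiders joined by three $Q_{\text H}$ Hadamard edges, decorated by the two scalars $\lambda_1$ and the three prefactors $\lambda_i$ picked up from the red-spider legs. Collecting these with the scalar-multiplication rule $M$ (and discarding the trivial $\lambda_1$'s via $I_1$) leaves a single prefactor $\lambda_{i^3}=\lambda_{-i}$ in front of the Hadamard sub-diagram. The three Hadamard edges compose here without ambiguity because of Lemma~\ref{lemHadamardEdgeWellDefinedZQ}, so the orientation that $F_Q$ places on the red spider's edges is harmless.

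The central step is then to observe that the surviving sub-diagram, two $Z$-spiders connected by three Hadamard edges, is exactly the diagram of Lemma~\ref{propZQHHH}, whose value is $\lambda_{i/\sqrt{2}}$. Substituting this and combining scalars once more with $M$ gives $\lambda_{-i}\,\lambda_{i/\sqrt{2}}=\lambda_{-i\cdot i/\sqrt{2}}=\lambda_{1/\sqrt{2}}$, as required. As a consistency check this matches the standard interpretation of the triple-blob scalar, which evaluates to $\braket{+++}{000}+\braket{+++}{111}+\braket{---}{000}+\braket{---}{111}=1/\sqrt{2}$.

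I do not expect a genuine obstacle, since the difficult evaluation, simplifying three parallel Hadamards via the bialgebra and copy rules, has already been carried out in Lemma~\ref{propZQHHH}. The one thing needing care is the scalar bookkeeping: correctly tracking the three $\lambda_i$ factors contributed by the red-spider legs against the trivial data (the $\lambda_1$ scalars and $Q_1$ nodes absorbed through Lemma~\ref{lemZQTranslationZ}) coming from the zero phases, so that the product collapses to exactly $\lambda_{1/\sqrt{2}}$ and not a stray unit-modulus multiple of it.
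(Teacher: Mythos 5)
Your proposal is correct and follows essentially the same route as the paper: the paper's proof is precisely $F_Q(\tripleblobs) \by{M} \lambda_{-i}\,\tikzfig{ZQ/q_HHH} \by{M,\,\ref{propZQHHH}} \lambda_{1/\sqrt{2}}$, i.e.\ collect the three $\lambda_i$ leg factors into $\lambda_{-i}$ and invoke Lemma~\ref{propZQHHH} for the remaining three-Hadamard diagram. Your scalar bookkeeping and the semantic consistency check are both accurate.
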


\begin{proof} \label{prfLemZQRetranslateTripleBlobs}
	\begin{align}
		F_Q\left(\tripleblobs\right) \by{M} \lambda_{-i} \tikzfig{ZQ/q_HHH} \by{M, \ref{propZQHHH}} \lambda_{1 / \sqrt{2}}
	\end{align}
\end{proof}

\begin{lemma} \label{propZQRetranslateGammaAlphaPi}
	\begin{align}
		ZQ \entails F_Q\left(\galpharpi{\gamma}\right) = \lambda_{-\sqrt{2}e^{i\gamma/2}}
	\end{align}
\end{lemma}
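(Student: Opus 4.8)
The plan is to follow the same recipe as the neighbouring re-translation lemmas, in particular the computation of Lemma~\ref{lemZQRetranslateTripleBlobs}: expand $F_Q$ one generator at a time, strip out the phase-free $Z$ structure, reduce the surviving ``dressed'' wire to a scalar using the results already proved for a Hadamard flanked by two $z$-rotations, and finish by gathering the loose scalars with rule $M$.

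In detail, $\galpharpi{\gamma}$ is the closed $\ZX$ diagram consisting of a green $\gamma$-spider and a green $\pi$-spider joined by a Hadamard edge. Applying $F_Q$ to each generator via Figure~\ref{figZQZXTranslation}, each green spider becomes a phase-free $Z$ node carrying a $z$-axis quaternion rotation on its single leg together with a scalar ($\lambda_{e^{i\gamma/2}}$ for the $\gamma$-spider and $\lambda_{e^{i\pi/2}}=\lambda_i$ for the $\pi$-spider), while the Hadamard edge becomes $\node{qn}{\text{H}}\,\lambda_i$. I would then remove the two phase-free $Z$ nodes with rule $I_z$, which leaves a single wire carrying the rotations $\node{qn}{\pi,z}$, $\node{qn}{\text{H}}$ and $\node{qn}{\gamma,z}$ between a cap and a cup, namely the diagram $\tikzfig{ZQ/q_pigamma}$ appearing in Lemma~\ref{lemZQGamma}.

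At this point the diagrammatic content is discharged directly: Lemma~\ref{lemZQGamma} (the $\alpha=\pi$, $\beta=\gamma$ instance of Lemma~\ref{lemZQAlphaHBeta}, proved by merging the three $Q$ nodes with rule $Q$ and then applying rule $A$) evaluates this wire to $\lambda_{-\sqrt{2}e^{i\gamma/2}}$. A final application of rule $M$ to combine the scalars accumulated under the translation, together with rule $I_1$ to discard any trivial factor, then yields the claimed right-hand side.

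The only delicate point is the scalar and ordering bookkeeping. I must record exactly which scalar $F_Q$ attaches to each of the three generators, and I must fix the order in which rule $Q$ multiplies $(\pi,z)$, $\text{H}$ and $(\gamma,z)$ so that it lines up with the $(\alpha,z)\times\text{H}\times(\beta,z)$ template of Lemma~\ref{lemZQAlphaHBeta}; an error here is precisely where a spurious phase or sign could enter. Everything else is a mechanical application of the identity, merging and scalar rules already in hand.
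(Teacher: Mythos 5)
Your route is essentially the paper's own: expand $F_Q$ one generator at a time, use $I_z$ (and rule $Y$, which you omit but which the paper invokes to flip the transposed $Q$ node sitting on the effect-side spider --- harmless here since $(\pi,z)$ has no $j$ component) to reduce the diagrammatic content to the closed wire of Lemma~\ref{lemZQGamma}, evaluate that wire, and sweep up the scalars with $M$. The paper's proof is exactly this, in one line.

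The gap is in the very step you flag as ``delicate'' and then decline to carry out. The translation scalars are $\lambda_{e^{i\gamma/2}}$ from the $\gamma$-spider, $\lambda_{e^{i\pi/2}}=\lambda_i$ from the $\pi$-spider, and $\lambda_i$ from the Hadamard edge; multiplying these into the $\lambda_{-\sqrt{2}e^{i\gamma/2}}$ produced by Lemma~\ref{lemZQGamma} gives
\[
e^{i\gamma/2}\cdot i\cdot i\cdot\bigl(-\sqrt{2}\,e^{i\gamma/2}\bigr)=\sqrt{2}\,e^{i\gamma},
\]
not the stated $-\sqrt{2}\,e^{i\gamma/2}$. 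So your argument does not ``yield the claimed right-hand side''; it yields $\lambda_{\sqrt{2}e^{i\gamma}}$. That value is in fact what the paper's own proof concludes, what the downstream applications in Propositions~\ref{propZQRetranslateQ} and~\ref{propZQRetranslateLambda} rely on, and what soundness of $F_Q$ forces, since the standard interpretation of $\galpharpi{\gamma}$ is $\sqrt{2}\,e^{i\gamma}$: the right-hand side printed in the lemma statement is just the value of the bare sub-diagram of Lemma~\ref{lemZQGamma} before the translation scalars are multiplied in, and appears to be an error in the statement. Asserting that the bookkeeping closes on the printed target, when it demonstrably does not, is the one substantive defect; the rest of your plan is correct.
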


\begin{proof} \label{prfPropZQRetranslateGammaAlphaPi}
	\begin{align}
		F_Q\left(\galpharpi{\gamma}\right) \by{Y} \tikzfig{ZQ/q_pigamma} \lambda_{e^{i\gamma/2}} \lambda_{i} \lambda_{i} \by{\ref{lemZQGamma}, M} \lambda_{\sqrt{2}e^{i\gamma}}
	\end{align}

\end{proof}

\begin{lemma} \label{lemZQRetranslateZNode}
	\begin{align}
		ZQ \entails F_Q\left(\node{gn}{\alpha}\right) = \node{qn}{\alpha,z} \lambda_{e^{i\frac{\alpha}{2}}}
	\end{align}
\end{lemma}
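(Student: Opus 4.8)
The plan is to unfold the definition of the translation functor $F_Q$ and then reduce the resulting ZQ diagram using the rules of Figure~\ref{figZQRules}. By definition $F_Q$ sends a green spider to the expanded-spider diagram $\tikzfig{ZQ/q_spider_expand}$ carrying the scalar $\lambda_{e^{i\alpha/2}}$; specialised to the $1\to1$ case, this is a phase-free $1\to1$ Z spider whose input and output legs each carry the quaternion decoration $\node{qn}{\alpha/2, z}$, all multiplied by $\lambda_{e^{i\alpha/2}}$. (That the per-leg decoration is the \emph{half}-angle rotation $(\alpha/2,z)$, compensated by a single $\lambda_{e^{i\alpha/2}}$, is what makes the translation interpretation-preserving, cf. Remark~\ref{remAngleVectorPair}.) So the whole task reduces to showing syntactically that this $1\to1$ expanded spider equals the single node $\node{qn}{\alpha, z}$, after which the untouched scalar supplies the claimed right-hand side.

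First I would collapse the phase-free $1\to1$ Z spider to a bare wire using rule $I_z$, which identifies $\node{smallZ}{}$ with the identity $\node{none}{}$. This leaves the two quaternion nodes composed in series along a single wire, which rule $Q$ merges into a single node labelled by the quaternion product. The derivation then reads
\begin{align}
	F_Q\left(\node{gn}{\alpha}\right)
	&= \node{qn}{\alpha/2, z} \comp \node{smallZ}{} \comp \node{qn}{\alpha/2, z}\; \lambda_{e^{i\alpha/2}} \nonumber \\
	&\by{I_z} \node{qn}{\alpha/2, z} \comp \node{qn}{\alpha/2, z}\; \lambda_{e^{i\alpha/2}}
	\by{Q} \node{qn}{\alpha, z}\; \lambda_{e^{i\alpha/2}}
\end{align}
where the final quaternion simplification is $(\alpha/2, z) \times (\alpha/2, z) = (\alpha, z)$.

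The step I expect to be the main obstacle is not the diagrammatic rewriting but the quaternion bookkeeping in the $Q$ step: one must verify that composing two equal-axis rotations adds their angles, i.e. $(\cos\frac{\alpha}{4} + k\sin\frac{\alpha}{4})^2 = \cos\frac{\alpha}{2} + k\sin\frac{\alpha}{2}$, which holds because $k^2 = -1$. I would also observe that this lemma is precisely the $1\to1$, phased specialisation of the phase-free computation in Lemma~\ref{lemZQTranslationZ}; there the $S$ rule fuses a comb of Z spiders, whereas here the lone $1\to1$ spider collapses directly by $I_z$, so the only genuinely new ingredient is the accumulation of the two $Q$-node phases and the observation that the single scalar $\lambda_{e^{i\alpha/2}}$ is carried through unchanged.
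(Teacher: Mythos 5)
Your diagram rewriting is locally valid, but the proof starts from an incorrect unfolding of $F_Q$, and that is a genuine gap. In the paper the image of a green $\alpha$-spider under $F_Q$ is \emph{not} a Z spider whose legs each carry a half-angle node $Q_{(\alpha/2,z)}$; it is an $m\to 1$ phase-free Z spider, followed by a \emph{single} node $Q_{(\alpha,z)}$ carrying the full angle on the one internal wire, followed by a $1\to n$ phase-free Z spider, all scaled by $\lambda_{e^{i\alpha/2}}$. You can confirm this from Proposition~\ref{propZQTranslationIg}, where $F_Q(ZX_{I_g})$ is written out explicitly as the composite $Z\comp Q_1\comp Z$ of two $1\to1$ Z spiders around one quaternion node, and from the proof of Lemma~\ref{lemZQTranslationZ}, which needs the $S$ rule precisely to fuse the two residual Z spiders after $I_q$ removes the middle $Q_1$. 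Your per-leg half-angle reading also cannot be the intended definition, because it is only interpretation-preserving when $m+n=2$: for a $0\to 3$ spider it yields relative phase $e^{3i\alpha/2}$ between $\ket{000}$ and $\ket{111}$ instead of $e^{i\alpha}$, so the functor you describe would not satisfy the property invoked in Theorem~\ref{thmZQUniversal}.

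With the correct unfolding the lemma is immediate and needs no quaternion arithmetic at all: the $1\to1$ case is $Z\comp Q_{(\alpha,z)}\comp Z\;\lambda_{e^{i\alpha/2}}$, and two applications of $I_z$ delete the identity Z spiders, leaving $Q_{(\alpha,z)}\,\lambda_{e^{i\alpha/2}}$ — which is exactly the paper's one-line proof; the $Q$ rule and the angle-addition identity $(\alpha/2,z)\times(\alpha/2,z)=(\alpha,z)$ are never needed. Your diagram $Q_{(\alpha/2,z)}\comp Z\comp Q_{(\alpha/2,z)}$ does happen to be ZQ-provably equal to the correct one in this arity (both collapse to $Q_{(\alpha,z)}$ via $I_z$ and $Q$), so your derivation establishes a true equation — just not, as written, an equation about the functor $F_Q$ that the paper actually defines.
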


\begin{proof} \label{prfLemZQRetranslateZNode}
	\begin{align}
		LHS = & \node{smallZ}{} \comp \node{qn}{\alpha,z} \comp \node{smallZ}{} \lambda_{e^{i\frac{\alpha}{2}}} =\node{qn}{\alpha,z}  \lambda_{e^{i\frac{\alpha}{2}}}
	\end{align}
\end{proof}

\begin{lemma} \label{lemZQRetranslateXNode}
	\begin{align}
		ZQ \entails F_Q\left(\node{rn}{\alpha}\right) = \node{qn}{\alpha,x} \lambda_{e^{i\frac{\alpha}{2}}}
	\end{align}
\end{lemma}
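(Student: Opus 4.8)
The plan is to treat this as the $x$-axis counterpart of Lemma~\ref{lemZQRetranslateZNode}: unfold the definition of $F_Q$ on the single-legged red spider, collapse its central green part using that lemma, and then convert the two flanking Hadamard rotations into a change of rotation axis by means of the quaternion identities of Lemma~\ref{lemZQHComm}.

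First I would unfold the red spider. Reading the translation in Figure~\ref{figZQZXTranslation} with $m=n=1$, the central factor $\tikzfig{ZQ/q_spider_expand}\,\lambda_{e^{i\alpha/2}}$ is exactly $F_Q\left(\node{gn}{\alpha}\right)$, so
\[ F_Q\left(\node{rn}{\alpha}\right) = \left(\lambda_i \node{qn}{\text{H}}\right) \comp F_Q\left(\node{gn}{\alpha}\right) \comp \left(\lambda_i \node{qn}{\text{H}}\right). \]
Applying Lemma~\ref{lemZQRetranslateZNode} rewrites the middle as $\node{qn}{\alpha,z}\,\lambda_{e^{i\alpha/2}}$, leaving three $Q$-nodes composed in series, flanked by the two $\lambda_i$ scalars.

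Next I would fuse the quaternion nodes and settle the scalars. The two factors $\lambda_i$ combine to $\lambda_{-1}$ by rule $M$, and rule $Q$ merges $\node{qn}{\text{H}} \comp \node{qn}{\alpha,z} \comp \node{qn}{\text{H}}$ into $\node{qn}{H \times (\alpha,z) \times H}$, the topmost node being the leftmost factor. The quaternion product is then computed from Lemma~\ref{lemZQHComm}: its first identity gives $H \times (\alpha,z) = (\alpha,x) \times H$, and its third gives $H \times H = -1$, whence $H \times (\alpha,z) \times H = -(\alpha,x)$. Rule $N$ extracts this sign as $\node{qn}{-(\alpha,x)} = \lambda_{-1}\,\node{qn}{\alpha,x}$.

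Finally, the accumulated scalars $\lambda_{-1}$ (from $\lambda_i\lambda_i$), $\lambda_{-1}$ (from rule $N$) and $\lambda_{e^{i\alpha/2}}$ multiply via rule $M$ to $\lambda_{e^{i\alpha/2}}$, giving $\node{qn}{\alpha,x}\,\lambda_{e^{i\alpha/2}}$ as claimed. I do not expect any step to present a genuine obstacle; the only points requiring care are the scalar bookkeeping---two independent copies of $\lambda_{-1}$, one arising from $\lambda_i\lambda_i$ and one from rule $N$, which cancel---and preserving the order of the non-commutative quaternion product when invoking Lemma~\ref{lemZQHComm}.
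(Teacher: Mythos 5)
Your proposal is correct and follows essentially the same route as the paper's proof: unfold $F_Q$ on the red spider, strip the identity $Z$ spiders, fuse the three quaternion nodes with rule $Q$, commute the axis change through $H$ via Lemma~\ref{lemZQHComm}, and cancel the two $\lambda_{-1}$ scalars (the paper merges $\lambda_i\lambda_i$ into the phase scalar up front and cites Lemma~\ref{propZQHH} plus $N$ at the end, but the bookkeeping is identical). Your explicit appeal to Lemma~\ref{lemZQRetranslateZNode} for the central factor is a harmless repackaging of a step the paper performs inline.
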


\begin{proof} \label{prfLemZQRetranslateXNode}
	\begin{align}
		LHS \by{M} & \node{qn}{\text{H}} \comp\node{smallZ}{} \comp  \node{qn}{\alpha,z} \comp\node{smallZ}{} \comp  \node{qn}{\text{H}} \lambda_{-e^{i\frac{\alpha}{2}}} \\
		\by{Q}     & \node{qn}{H \times \left(\alpha,z\right) \times H}  \lambda_{-e^{i\frac{\alpha}{2}}}
		\by{\ref{lemZQHComm}}
		\node{qn}{H \times H \times \left(\alpha,x\right)}  \lambda_{-e^{i\frac{\alpha}{2}}} \\
		           & \by{\ref{propZQHH}, N} \node{qn}{\alpha,x}  \lambda_{e^{i\frac{\alpha}{2}}}
	\end{align}
\end{proof}

\begin{proposition} \label{propZQRetranslateQ}
	\begin{align}
		ZQ \entails F_Q\left(F_X\left(\node{qn}{q}\right)\right) = \node{qn}{q}
	\end{align}
	Where $q = \left(\alpha_1, z\right) \times \left(\alpha_2, x\right) \times \left(\alpha_3 , z\right)$, as in Proposition~\ref{propZQQDecomp}
\end{proposition}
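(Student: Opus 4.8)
The plan is to exploit the fact that $F_Q$ is a strict monoidal functor, so that $F_Q\left(F_X\left(\node{qn}{q}\right)\right)$ may be computed by translating the ZX-diagram $F_X\left(\node{qn}{q}\right)$ one generator at a time and then recomposing. By Proposition~\ref{propZQQDecomp} and the definition of $F_X$ in Figure~\ref{figZQZXTranslation}, that diagram is the Euler Z-X-Z decomposition of $q$ --- a green $\alpha_1$ spider, a red $\alpha_2$ spider and a green $\alpha_3$ spider stacked in a line --- dressed with three correction scalars $\galpharpi{\cdot}$ and three copies of \tripleblobs, where $(\alpha_1,z)\times(\alpha_2,x)\times(\alpha_3,z) = q$ is the chosen decomposition.

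First I would send each of the three genuine spiders back into ZQ using the retranslation lemmas already established. Lemma~\ref{lemZQRetranslateZNode} gives $F_Q\left(\node{gn}{\alpha_1}\right) = \node{qn}{\alpha_1,z}\,\lambda_{e^{i\alpha_1/2}}$ and, applied to the third spider, $F_Q\left(\node{gn}{\alpha_3}\right) = \node{qn}{\alpha_3,z}\,\lambda_{e^{i\alpha_3/2}}$, while Lemma~\ref{lemZQRetranslateXNode} gives $F_Q\left(\node{rn}{\alpha_2}\right) = \node{qn}{\alpha_2,x}\,\lambda_{e^{i\alpha_2/2}}$. Setting aside the accompanying scalars, the three resulting $Q$-nodes sit in a vertical composite, and two applications of rule $Q$ together with Proposition~\ref{propZQQDecomp} collapse them into a single node:
\[
	\node{qn}{\alpha_1,z} \comp \node{qn}{\alpha_2,x} \comp \node{qn}{\alpha_3,z}
	\by{Q} \node{qn}{(\alpha_1,z) \times (\alpha_2,x) \times (\alpha_3,z)} = \node{qn}{q}.
\]
This already reproduces the target $Q$-node, so all that remains is to check that the accumulated scalar is trivial.

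The scalar bookkeeping is the only place where real care is needed, and it is purely computational. I would gather the three spider scalars $\lambda_{e^{i\alpha_1/2}}$, $\lambda_{e^{i\alpha_2/2}}$, $\lambda_{e^{i\alpha_3/2}}$, the three correction scalars produced by $F_Q\left(\galpharpi{\cdot}\right)$ via Lemma~\ref{propZQRetranslateGammaAlphaPi}, and the three factors $F_Q\left(\tripleblobs\right) = \lambda_{1/\sqrt2}$ from Lemma~\ref{lemZQRetranslateTripleBlobs}, and merge them by repeated use of rule $M$. The exponential phase $e^{i(\alpha_1+\alpha_2+\alpha_3)/2}$ coming from the spiders cancels against the matching phases of the $\galpharpi{\cdot}$ corrections, and the three $1/\sqrt2$ factors from the \tripleblobs\ nodes cancel the three $\sqrt2$ factors (and their signs) carried by those same corrections, leaving $\lambda_1$. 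A final application of rule $I_1$ erases it, yielding exactly $\node{qn}{q}$. The main obstacle is therefore not conceptual but the faithful tracking of the $\sqrt2$-powers and signs through the scalar lemmas; the structural content of the proposition is carried entirely by functoriality of $F_Q$ and rule $Q$.
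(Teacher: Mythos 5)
Your proposal is correct and follows essentially the same route as the paper's proof: both retranslate the Euler Z--X--Z spiders via Lemmas~\ref{lemZQRetranslateZNode} and \ref{lemZQRetranslateXNode}, fuse the resulting $Q$-nodes with rule $Q$ into $\node{qn}{(\alpha_1,z)\times(\alpha_2,x)\times(\alpha_3,z)}$, and cancel the accumulated scalars (obtained from Lemmas~\ref{propZQRetranslateGammaAlphaPi} and \ref{lemZQRetranslateTripleBlobs}) by repeated use of rule $M$. The only difference is the order in which the correction scalars and the spiders are processed, which is immaterial.
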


\begin{proof} \label{prfPropZQRetranslateQ}
	\begin{align}
		LHS =                                                                       & F_Q\left(
		\node{gn}{\alpha_1} \comp  \node{rn}{\alpha_2} \comp  \node{gn}{\alpha_3} \right) \tensor
		\galpharpi{-\alpha_1/2}
		\galpharpi{-\alpha_2/2}
		\galpharpi{-\alpha_3/2}
		\tripleblobs \tripleblobs \tripleblobs \\
		\by{\ref{propZQRetranslateGammaAlphaPi}, \ref{lemZQRetranslateTripleBlobs}} &
		F_Q\left(
		\node{gn}{\alpha_1} \comp  \node{rn}{\alpha_2} \comp  \node{gn}{\alpha_3}\right) \tensor
		\lambda_{\sqrt{2}e^{-i\alpha_1/2}}
		\lambda_{\sqrt{2}e^{-i\alpha_2/2}}
		\lambda_{\sqrt{2}e^{-i\alpha_3/2}}
		\lambda_{\frac{1}{\sqrt{2}}} \lambda_{\frac{1}{\sqrt{2}}} \lambda_{\frac{1}{\sqrt{2}}} \\
		\by{M}                                                                      & F_Q\left(
		\node{gn}{\alpha_1} \comp  \node{rn}{\alpha_2} \comp  \node{gn}{\alpha_3}\right) \tensor
		\lambda_{e^{-\frac{i}{2}\left(\alpha_1 + \alpha_2 + \alpha_3\right)}} \\
		\by{\ref{lemZQRetranslateXNode}, \ref{lemZQRetranslateZNode}}               &
		\left(\node{qn}{\alpha_1,z} \comp  \node{qn}{\alpha_2,x} \comp  \node{qn}{\alpha_3,z}\right) \tensor
		\lambda_{e^{-\frac{i}{2}\left(\alpha_1 + \alpha_2 + \alpha_3\right)}} \lambda_{e^{i\alpha_1/2}} \lambda_{e^{i\alpha_2/2}} \lambda_{e^{i\alpha_3/2}} \\
		\by{M}                                                                      & \node{qn}{\alpha_1,z} \comp  \node{qn}{\alpha_2,x} \comp  \node{qn}{\alpha_3,z}                      \\
		\by{Q}                                                                      & \node{qn}{\left(\alpha_1, z\right) \times \left(\alpha_2, x\right) \times \left(\alpha_3 , z\right)}
	\end{align}
\end{proof}

Finally we need the following lemma for Proposition~\ref{propZQRetranslateLambda}.

\begin{lemma} \label{propZQRetranslateRG}
	\begin{align}
		ZQ \entails F_Q\left(\rg{\beta}{-\beta}\right) = \lambda_{\sqrt{2}\cos \beta}
	\end{align}
\end{lemma}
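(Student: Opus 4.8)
The plan is to evaluate $F_Q\!\left(\rg{\beta}{-\beta}\right)$ generator by generator and then recognise the resulting ZQ diagram as an instance of Lemma~\ref{lemZQAlphaHBeta}, which already evaluates a $Z$-state and $Z$-effect joined by a Hadamard.

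First I would read $\rg{\beta}{-\beta}$ as the composite of a green state of phase $-\beta$ (bottom), a Hadamard edge, and a green effect of phase $\beta$ (top). Applying the green-spider clause of Figure~\ref{figZQZXTranslation}, together with the same $I_z$ cleanup used in Lemma~\ref{lemZQRetranslateZNode}, the green state becomes a phase-free $Z$-state followed by $\node{qn}{-\beta,z}$ with scalar $\lambda_{e^{-i\beta/2}}$, and the green effect becomes $\node{qn}{\beta,z}$ followed by a phase-free $Z$-effect with scalar $\lambda_{e^{i\beta/2}}$; the Hadamard edge becomes $\node{qn}{\text{H}}\,\lambda_i$.

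Next I would discard the redundant phase-free $Z$-spiders with $I_z$, so that, read bottom to top, the remaining diagram is a $Z$-state, $\node{qn}{-\beta,z}$, $\node{qn}{\text{H}}$, $\node{qn}{\beta,z}$, $Z$-effect: precisely the diagram of Lemma~\ref{lemZQAlphaHBeta} with effect-phase $\beta$ and state-phase $-\beta$. Substituting $\alpha \mapsto \beta$ and $\beta \mapsto -\beta$ into that lemma collapses the diagram to $\lambda_{-\sqrt{2}\left(\sin 0 + i\cos\beta\right)} = \lambda_{-\sqrt{2}\,i\cos\beta}$.

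Finally I would gather the accumulated scalars using rule $M$:
\[
	\lambda_{e^{i\beta/2}}\,\lambda_{e^{-i\beta/2}}\,\lambda_i\,\lambda_{-\sqrt{2}\,i\cos\beta} = \lambda_{\sqrt{2}\cos\beta},
\]
since $e^{i\beta/2}e^{-i\beta/2} = 1$ and $i\cdot(-i) = 1$. The only real obstacle is bookkeeping: correctly matching the effect/state orientation of $\rg{\beta}{-\beta}$ to the parameters of Lemma~\ref{lemZQAlphaHBeta}, and tracking the three scalar factors so that the spurious $\lambda_i$ from the Hadamard edge cancels the $-i$ produced by the lemma, leaving the real scalar $\sqrt{2}\cos\beta$.
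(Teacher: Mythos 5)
Your proposal is correct and follows essentially the same route as the paper: the paper's own proof cancels the two phase scalars with $M$, merges the quaternions with $Q$, and collapses the state--quaternion--effect composite with $A$, which is exactly the content of Lemma~\ref{lemZQAlphaHBeta} that you invoke as a packaged step. Your scalar bookkeeping ($e^{i\beta/2}e^{-i\beta/2}=1$ and the $\lambda_i$ cancelling the $-i$ from $-\sqrt{2}i\cos\beta$) matches the paper's computation, and the orientation worry you raise is harmless since swapping the two phases leaves both $\sin\frac{\alpha+\beta}{2}$ and $\cos\frac{\alpha-\beta}{2}$ unchanged here.
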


\begin{proof} \label{prfPropZQRetranslateRG}
	\begin{align}
		LHS =  & {\tikzfig{ZQ/q_rgbb}} \lambda_{e^{i\beta_2}} \lambda_{e^{-i\beta_2}} \lambda_i
		\by{M} {\tikzfig{ZQ/q_rgbb}} \lambda_i
		\by{Q} {\tikzfig{ZQ/q_rgbb2}} \lambda_i \\
		\by{A} & \lambda_{\sqrt{2}\cos \beta}
	\end{align}
\end{proof}

\begin{proposition} \label{propZQRetranslateLambda}
	\begin{align}
		ZQ \entails F_Q\left(F_X\left(\lambda_{c}\right)\right) = \lambda_{c}
	\end{align}
\end{proposition}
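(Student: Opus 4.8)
The plan is to evaluate $F_Q\left(F_X\left(\lambda_c\right)\right)$ factor by factor, reusing the three scalar-retranslation lemmas just proved. First I would apply Lemma~\ref{lemRealNumberCanonicalForm} to write $c = \left(\sqrt{2}\right)^n e^{i\alpha}\cos\beta$ with $n \in \bbN$, $\alpha \in [0,2\pi)$ and $\beta \in [0,\pi)$. By the definition of $F_X$ in Figure~\ref{figZQZXTranslation}, this means $F_X\left(\lambda_c\right)$ is the closed ZX diagram
\begin{align*}
	F_X\left(\lambda_c\right) = \galpharpi{\alpha}\ \halfblobs\ \left(\rg{}{\pi}\right)^{\tensor n}\ \rg{\beta}{-\beta}.
\end{align*}
Every factor here is a scalar (a diagram with no free wires), and since $F_Q$ is strict monoidal it sends this juxtaposition to the product of the $F_Q$-images of the individual pieces. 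So the task reduces to evaluating $F_Q$ on each factor and collecting the resulting $\lambda$ scalars with rule $M$.

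The second step is to identify each factor with a scalar already handled earlier. I would read $\rg{}{\pi}$ as the $\gamma = 0$ instance of $\galpharpi{\gamma}$, and $\halfblobs$ as $\tripleblobs\ \tripleblobs$. The only inputs then needed are Lemma~\ref{propZQRetranslateGammaAlphaPi} (applied once at $\gamma = \alpha$ for $\galpharpi{\alpha}$, and $n$ times at $\gamma = 0$ for the $n$ copies of $\rg{}{\pi}$), Lemma~\ref{lemZQRetranslateTripleBlobs} (applied twice for $\halfblobs$), and Lemma~\ref{propZQRetranslateRG} for $\rg{\beta}{-\beta}$. Each of these returns exactly the interpretation of the corresponding closed ZX diagram, so the coefficients are $\sqrt{2}\,e^{i\alpha}$, $1/\sqrt{2}$ (twice), $\sqrt{2}$ (per copy), and $\sqrt{2}\cos\beta$. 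Collecting with rule $M$:
\begin{align*}
	F_Q\left(F_X\left(\lambda_c\right)\right) = \lambda_{\sqrt{2}\,e^{i\alpha}}\ \lambda_{1/2}\ \lambda_{\left(\sqrt{2}\right)^{n}}\ \lambda_{\sqrt{2}\cos\beta} \by{M} \lambda_{\left(\sqrt{2}\,e^{i\alpha}\right)\frac{1}{2}\left(\sqrt{2}\right)^{n}\sqrt{2}\cos\beta} = \lambda_{\left(\sqrt{2}\right)^{n} e^{i\alpha}\cos\beta} = \lambda_c.
\end{align*}

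The only real obstacle is the normalisation bookkeeping: one must check that the two factors of $1/\sqrt{2}$ from $\halfblobs$ combine with the $\sqrt{2}$ from $\galpharpi{\alpha}$ and the $\sqrt{2}\cos\beta$ from $\rg{\beta}{-\beta}$ so that the spurious $\sqrt{2}$'s cancel and leave precisely $\left(\sqrt{2}\right)^{n} e^{i\alpha}\cos\beta$, with the $\left(\sqrt{2}\right)^{n}$ coming exactly from the $n$ copies of $\rg{}{\pi}$. This cancellation is reassuring rather than delicate, since interpretation-preservation of the translation (the same fact underlying Theorem~\ref{thmZQUniversal}) forces $\interpret{F_Q\left(F_X\left(\lambda_c\right)\right)} = c$ a priori; the lemmas merely promote this semantic identity to the required syntactic equality $ZQ \entails F_Q\left(F_X\left(\lambda_c\right)\right) = \lambda_c$. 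No case split or side condition is needed, so the result holds on the nose.
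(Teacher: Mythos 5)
Your proposal is correct and follows essentially the same route as the paper's proof: decompose $c$ via Lemma~\ref{lemRealNumberCanonicalForm}, apply $F_Q$ factor-by-factor to $\galpharpi{\alpha}$, $\halfblobs$, the $n$ copies of $\rg{}{\pi}$, and $\rg{\beta}{-\beta}$ using Lemmas~\ref{propZQRetranslateGammaAlphaPi}, \ref{lemZQRetranslateTripleBlobs} (the paper cites the underlying Lemma~\ref{propZQHHH} directly, which is equivalent) and \ref{propZQRetranslateRG}, then collect the scalars with rule $M$. The bookkeeping of the $\sqrt{2}$ factors matches the paper's.
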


\begin{proof} \label{prfPropZQRetranslateLambda}
	\begin{align}
		c =                            & \left(\sqrt{2}\right)^n e^{i\alpha} \cos \beta \quad \text{ for some } \alpha,\,\beta                \\
		LHS =                          & F_Q\left(\galpharpi{\alpha} \halfblobs \left(\rg{}{\pi}\right)^{\tensor n} \rg{\beta}{-\beta}\right)
		\by{\ref{propZQRetranslateGammaAlphaPi}, \ref{propZQHHH}} \lambda_{\sqrt{2}e^{i\alpha}} \lambda_{\frac{1}{\sqrt{2}}} \lambda_{\frac{1}{\sqrt{2}}} \left(\lambda_{\sqrt{2}}\right)^{\tensor n} F_Q\left(\rg{\beta}{-\beta}\right) \\
		\by{\ref{propZQRetranslateRG}} &
		\lambda_{\sqrt{2}e^{i\alpha/2}}
		\lambda_{\frac{1}{\sqrt{2}}}
		\lambda_{\frac{1}{\sqrt{2}}}
		\left(\lambda_{\sqrt{2}}\right)^{\tensor n}
		\lambda_{\sqrt{2}\cos\beta}
		\by{M}  \lambda_{\left(\sqrt{2}\right)^n e^{i\alpha} \cos \beta} = \lambda_c
	\end{align}
\end{proof}

We have shown that for each of the generators of ZQ, $ZQ \entails F_Q\left(F_X\left(G\right)\right) = G$,
and since $F_Q$ and $F_X$ are monoidal functors we know that $ZQ \entails F_Q\left(F_X\left(D\right)\right) = D$ for any diagram $D$.
This concludes our proof of completeness for the rules of ZQ.

\end{document}